\documentclass[pdflatex,sn-nature]{sn-jnl}

\usepackage{amsmath,amssymb,amsfonts}
\usepackage{amsthm}
\usepackage{graphicx}
\usepackage{xcolor}
\usepackage{booktabs}
\usepackage{mathtools}
\usepackage{physics}
\usepackage{manyfoot}

\theoremstyle{plain}
\newtheorem{theorem}{Theorem}
\newtheorem{proposition}[theorem]{Proposition}
\newtheorem{corollary}[theorem]{Corollary}

\theoremstyle{remark}
\newtheorem{lemma}[theorem]{Lemma}
\newtheorem{remark}[theorem]{Remark}

\theoremstyle{definition}
\newtheorem{definition}[theorem]{Definition}

\newcommand{\id}{\mathbb{I}}
\newtheorem{supptheorem}{Supplementary Theorem}
\newtheorem{suppproposition}[supptheorem]{Supplementary Proposition}
\newtheorem{supplemma}[supptheorem]{Supplementary Lemma}
\newtheorem{suppcorollary}[supptheorem]{Supplementary Corollary}
\theoremstyle{definition}

\theoremstyle{remark}
\newtheorem{suppremark}[supptheorem]{Supplementary Remark}

\newcommand{\cH}{\mathcal{H}}
\newcommand{\cT}{\mathcal{T}}
\newcommand{\cX}{\mathcal{X}}

\newcommand{\eps}{\varepsilon}

\newcommand{\Aauth}{A_{12}}
\newcommand{\Ccoll}{C_{13}}

\newcommand{\Qfin}{\mathsf Q_{\rm fin}}   
\newcommand{\Qd}{\mathsf Q_{\le d}}       
\newcommand{\Qcl}{\mathsf Q_{\rm cl}}     
\newcommand{\Qco}{\mathsf Q_{\rm co}}     

\begin{document}

\title[Strategic non-shareability of quantum correlations]{Strategic Non-Shareability of Quantum Correlations}

\author*[1,2]{\fnm{Fumin} \sur{Wang}}

\affil*[1]{\orgdiv{MED-X Institute}, \orgname{The First Affiliated Hospital of Xi'an Jiaotong University}, \orgaddress{\city{Xi'an}, \postcode{710061}, \country{China}}}

\affil[2]{\orgdiv{Shaanxi Key Laboratory of Quantum Information and Quantum Optoelectronic Devices, College of Physics}, \orgname{Xi'an Jiaotong University}, \orgaddress{\city{Xi'an}, \postcode{710049}, \country{China}}}

\abstract{Correlations distributed by a mediator can be useful for
coordination but vulnerable to inheritance by a colluder. We formalize
the obstruction to such inheritance as a source-certified resource theory
of strategic non-shareability. The free objects are symmetrically
extendible sources, the free operations are shareability-preserving maps,
and the trace distance to the free set is a faithful convex monotone.

For Werner and isotropic sources in arbitrary local dimension, the
resource has the exact form $D_m=c(d)(p-p_m^{*})_{+}$, with $p_m^{*}$
the Johnson--Viola shareability threshold. For qubit Werner sources,
tomographically complete Pauli measurements yield the exact one-colluder
capacity
\[
C^{\rm tomo}_1(p)=\frac{1}{12}\Bigl[(3p-1)-\sqrt{(3p+1)(1-p)}\,\Bigr]_{+}.
\]

We prove that this anti-collusion resource is independent of Bell
nonlocality: the Bell and shareability orderings cross, so some
Bell-nonlocal states are strictly less collusion-resistant than
Bell-local ones. Finally, we give an aligned Pauli coordination game
whose observed behaviour has a local hidden-variable model for every
visibility, making device-independent certification empty, while
source-certified quantum anti-collusion is positive exactly above the
extendibility threshold. These results identify symmetric
non-extendibility, rather than Bell nonlocality, as the boundary of
source-certified collusion resistance.}

\keywords{entanglement monogamy, symmetric extendibility, quantum mediation, private-information games, collusion resistance, nonlocality, anti-collusion power, shareability deficit}

\maketitle

\section{Introduction}\label{sec:intro}

Bell nonlocality is a foundational resource for quantum information processing~\cite{Bell1964,CHSH1969,Tsirelson1980,Brunner2014}. Device-independent protocols leverage it to guarantee security without trusting devices~\cite{Barrett2005,ArnonFriedman2018,Portmann2022,Pironio2009,Acin2007,Masanes2011,Ekert1991}, with recent experiments achieving device-independent quantum key distribution over long distances~\cite{Zapatero2023,Nadlinger2022,Zhang2022,Liu2022} and motivating quantum network architectures~\cite{Wehner2018,Tavakoli2022}. This security rests on the monogamy of Bell correlations~\cite{Terhal2004,Toner2006,Coffman2000,Osborne2006,Scarani2001}: the same strong correlation cannot be shared with a third party.

Beyond cryptography, quantum correlations mediate strategic games, enlarging achievable equilibria beyond classical limits~\cite{Aumann1974,Cleve2004,Zhang2012,Pappa2015}. Classical mediators rely on shared randomness, which is freely copyable and inherently vulnerable to collusion. Quantum mediators are constrained by no-cloning~\cite{Wootters1982} and entanglement monogamy~\cite{Horodecki2009,Werner1989}. Yet a rigorous framework that quantifies this defense---and certifies it from observed data---has been missing.

The key point is that collusion resistance is not a property of a behavior alone, but of an extension model. In a broadcast-enabled behavior model, every Bell-local behavior is freely shareable: the hidden variable explaining the authorized correlation can simply be copied to the colluder. Thus device-independent certification is necessarily empty below the Bell boundary. In a quantum-source model, however, the colluder is not handed an external classical seed; it must be embedded in a genuine quantum extension of the mediator's source. The relevant obstruction is then symmetric non-extendibility, not Bell nonlocality. This distinction lets the same observed local statistics be useless for device-independent security but useful for source-certified collusion resistance.

Building on the state-level resource theory of unextendibility~\cite{KDWW2019,KDWW2021,WangWilde2024,SinghWilde2025}, we close this gap by formalizing strategic non-shareability as a quantum resource. The free objects are symmetrically extendible sources, the free operations are shareability-preserving maps, and the trace distance to the free set is a faithful convex monotone. For Werner and isotropic sources the resource is exactly solvable in arbitrary dimension~\cite{JV2013,JSZ2022}. Most importantly, the resource ordering is independent of Bell nonlocality: the two orderings strictly cross.

The resource has a direct operational interpretation. In an aligned Pauli coordination game, the observed Werner behaviour admits an explicit local hidden-variable model for every visibility, making device-independent certification empty. Nevertheless, above the source extendibility threshold every quantum-source colluder is bounded away from the authorized score. The same data are therefore useless for device-independent certification but useful for source-certified collusion resistance.

This shifts the role of monogamy from a Bell-violation witness to a source-certified constraint on correlation inheritance, relevant to quantum networks, mediated coordination, and semi-device-independent security. Our contributions are fourfold.
\begin{itemize}
\item[(i)] We give an operational aligned-coordination primitive: for Werner states the device-independent advantage is identically zero, yet above the source extendibility threshold every quantum-source colluder is bounded away from the authorized score, making source-certified collusion resistance strictly positive while broadcast/DI certification is empty.
\item[(ii)] We prove a strict ordering reversal: Bell nonlocality and source-certified anti-collusion are not merely separated by thresholds but induce genuinely incompatible orderings on quantum states.
\item[(iii)] We lift the resource theory of unextendibility to the behavior/strategic layer, prove monotonicity of the trace-distance resource, and show that the induced behavior-level capacity is the operational monotone observable in games.
\item[(iv)] We give exact source-resource formulas for Werner and isotropic families, an exact qubit tomographic capacity, and multi-round finite-data and collective-colluder guarantees.
\end{itemize}

\section{Results}\label{sec:results}

\subsection{Extension models and the collusive shadow}\label{sec:shadow}

We consider three agents. Players 1 and 2 form the authorized pair, and player 3 is a potential colluder. Each player $i$ receives a private type $t_i\in\cT_i$ and produces an action $x_i\in\cX_i$. After receiving types, players cannot communicate. A mediator supplies local recommendations, which may be classical or quantum.

A classical mediator is a local hidden-variable device,
\begin{equation}
P_{12}(x_1,x_2|t_1,t_2)
=
\sum_\lambda p(\lambda)
p_1(x_1|t_1,\lambda)
p_2(x_2|t_2,\lambda),
\label{eq:classical-mediator}
\end{equation}
and a quantum mediator uses a bipartite state and local POVMs,
\begin{equation}
P_{12}(x_1,x_2|t_1,t_2)
=
\Tr\!\left[
\rho_{12}
\left(E^{(1)}_{t_1,x_1}\otimes E^{(2)}_{t_2,x_2}\right)
\right].
\label{eq:quantum-mediator}
\end{equation}

The mediator is required to be no-signalling: for every subset $S\subseteq\{1,2,3\}$,
\begin{equation}
P(x_S|t_1,t_2,t_3)=P(x_S|t_S).
\label{eq:no-signalling-full}
\end{equation}
This condition rules out hidden communication after types are received. It does not forbid copied hidden variables; copied seeds are precisely the broadcast-enabled free model considered below.

A collusive extension is a tripartite distribution satisfying
\begin{equation}
\sum_{x_3}P_{123}(x_1,x_2,x_3|t_1,t_2,t_3)
=
P_{12}(x_1,x_2|t_1,t_2).
\label{eq:extension}
\end{equation}
The physical question is whether such an extension can preserve the authorized marginal while also giving the colluder a high score with player 1.

For an authorized behavior $P_{12}$, we define the \emph{collusive extension set} under resource class
\[
\mathsf R\in\{\mathsf C,\mathsf{NS},\Qfin,\Qd,\Qcl,\Qco\}
\]
as
\begin{equation}
{\rm Ext}_{\mathsf R}(P_{12})
=
\left\{
P_{123}\in\mathsf R:
\sum_{x_3}P_{123}(\cdot,\cdot,x_3|\cdot,\cdot,\cdot)
=
P_{12}
\right\},
\end{equation}
where $\mathsf C$ denotes classical hidden-variable extensions, $\mathsf{NS}$ general no-signalling extensions, $\Qfin$ finite-dimensional quantum extensions, $\Qd$ fixed-dimensional quantum extensions with local Hilbert-space dimension bounded by $d$, $\Qcl$ the closure of $\Qfin$, and $\Qco$ the commuting-operator model associated with the NPA limit. The closure $\Qcl$ is needed for compactness in Theorem~\ref{thm:distance}; for statements where the optimum is explicitly attained in $\Qfin$ itself, no closure is needed. Unless closure issues are relevant, we use ``quantum extension'' informally.

\begin{definition}[Collusive shadow]
Let $\Pi_{13\to12}$ denote the relabelling map that identifies player 3's type and action alphabets with player 2's. The \emph{collusive shadow} of an authorized behavior $P_{12}$ under resource class $\mathsf R$ is
\begin{equation}
\mathcal S_{\mathsf R}(P_{12})
=
\left\{
\Pi_{13\to12}P_{13}:
P_{123}\in{\rm Ext}_{\mathsf R}(P_{12})
\right\},
\end{equation}
where $P_{13}(x_1,x_3|t_1,t_3)=\sum_{x_2}P_{123}(x_1,x_2,x_3|t_1,t_2,t_3)$ and no-signalling ensures this marginal does not depend on $t_2$.
\end{definition}

The collusive shadow is the set of all authorized-looking behaviors that a colluder can cast with player 1 without disturbing the observed authorized marginal $P_{12}$. A mediator is strategically non-shareable when its own authorized behavior lies outside, or far from, this shadow.

\begin{figure}[t]
\centering
\includegraphics[width=\linewidth]{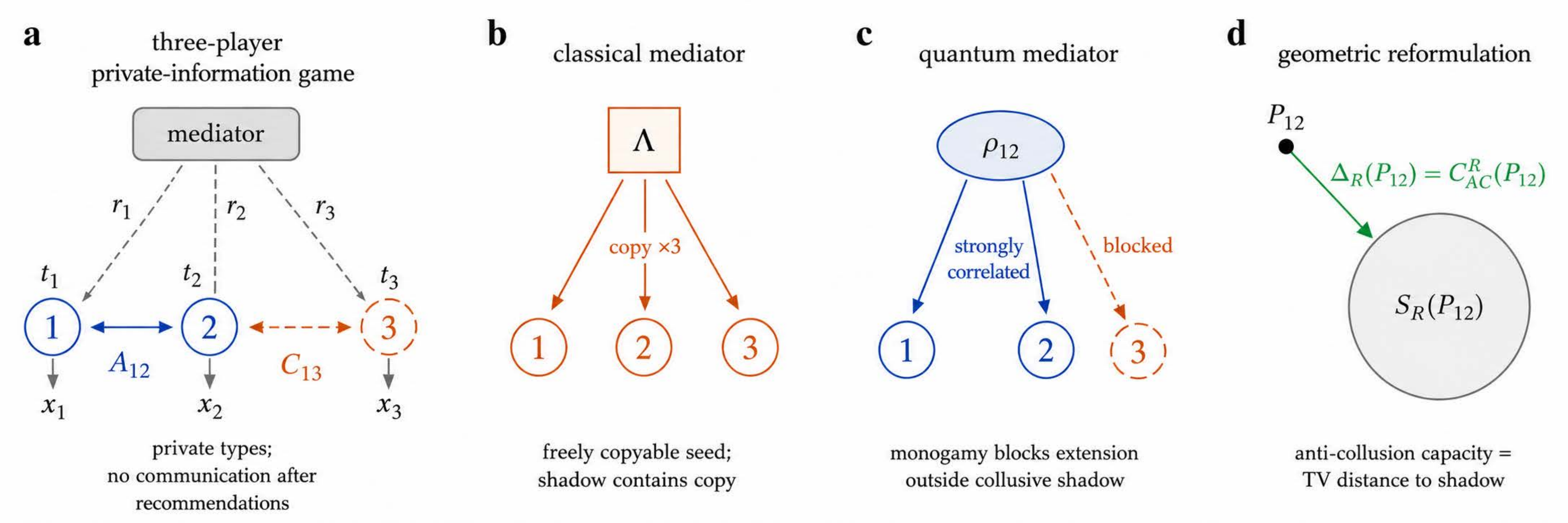}
\caption{\textbf{Strategic non-shareability of mediated correlations.} \textbf{a}, Three-player private-information game. Players 1 and 2 form the authorized pair; player 3 is a potential colluder. The mediator distributes private recommendations $r_i$ conditional on private types $t_i$, and players cannot communicate after receiving recommendations. The authorized score $A_{12}$ rewards the authorized pair, while the collusive leakage score $C_{13}$ measures the colluder's correlation with player 1. \textbf{b}, A classical mediator implements correlations via a shared hidden variable $\Lambda$, which can be freely copied to player 3; the collusive shadow $\mathcal S_C(P_{12})$ therefore contains a relabelled copy of the authorized behavior. \textbf{c}, A quantum mediator uses an entangled state $\rho_{12}$. Entanglement monogamy prevents the same pairwise correlation from being shared losslessly with player 3, so the authorized behavior can lie outside its collusive shadow. \textbf{d}, Geometric reformulation. The anti-collusion capacity is the total-variation distance from $P_{12}$ to the collusive shadow $\mathcal S_R(P_{12})$.}
\label{fig:copyable-monogamy}
\end{figure}

These quantities are linked by the following exact equality.

\begin{theorem}[Anti-collusion capacity equals distance from the collusive shadow]\label{thm:distance}
Let $P_{12}$ be an authorized behavior with finite input and output alphabets. Let
\[
\mathsf R\in\{\mathsf C,\mathsf{NS},\Qd,\Qcl\}
\]
be an extension class for which ${\rm Ext}_{\mathsf R}(P_{12})$ is non-empty, compact and convex. Then
\begin{equation}
\mathcal C_{\rm AC}^{\mathsf R}(P_{12})
=
\inf_{Q\in\mathcal S_{\mathsf R}(P_{12})}
d_{\rm TV}(P_{12},Q).
\label{eq:distance-theorem}
\end{equation}
\end{theorem}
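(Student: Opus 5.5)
The plan is to read $\mathcal C_{\rm AC}^{\mathsf R}(P_{12})$ as a game value. It is the supremum, over normalized payoff functions $g(x_1,x_2,t_1,t_2)$ with values in $[0,1]$ (together with the type distribution used in defining $d_{\rm TV}$), of the authorized advantage $\langle g,P_{12}\rangle-\sup_{Q\in\mathcal S_{\mathsf R}(P_{12})}\langle g,Q\rangle$. Here $\langle g,P\rangle=\sum \pi(t_1,t_2)\,g(x_1,x_2,t_1,t_2)\,P(x_1,x_2|t_1,t_2)$ is the expected authorized score. This is the max-min form
\[
\mathcal C_{\rm AC}^{\mathsf R}(P_{12})=\sup_{g\in\mathcal G}\ \inf_{Q\in\mathcal S_{\mathsf R}(P_{12})}\langle g,P_{12}-Q\rangle .
\]
I will establish \eqref{eq:distance-theorem} by swapping $\sup$ and $\inf$ and then identifying the inner supremum with total variation. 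If the paper normalizes $d_{\rm TV}$ as a maximum over inputs rather than a $\pi$-average, the same argument applies with $\mathcal G$ enlarged to include the input distribution as a player choice. That choice is also a compact convex set, and the pairing remains linear in it for fixed $Q$.

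\textbf{Step 1: geometry of the shadow.} The map $P_{123}\mapsto \Pi_{13\to12}\sum_{x_2}P_{123}$ is linear and continuous on the finite-dimensional space of behaviors. No-signalling makes it well defined, independently of $t_2$. By hypothesis ${\rm Ext}_{\mathsf R}(P_{12})$ is non-empty, compact and convex. Hence $\mathcal S_{\mathsf R}(P_{12})$ is non-empty, compact and convex, as the continuous linear image of such a set. This is exactly where the restriction to $\mathsf R\in\{\mathsf C,\mathsf{NS},\Qd,\Qcl\}$ enters; for $\Qfin$, closedness could fail. The payoff set $\mathcal G$ is a product of intervals, so it is compact and convex, and $(g,Q)\mapsto\langle g,P_{12}-Q\rangle$ is bilinear and continuous.

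\textbf{Step 2: minimax exchange.} Sion's theorem (or von Neumann's, since everything is finite-dimensional and bilinear) then gives
\[
\sup_{g}\inf_{Q}\langle g,P_{12}-Q\rangle=\inf_{Q}\sup_{g}\langle g,P_{12}-Q\rangle ,
\]
with the infimum attained by compactness.

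\textbf{Step 3: identification with total variation.} For a fixed $Q$, the supremum over $g\in[0,1]$ is attained at the indicator of $\{P_{12}>Q\}$ on each input pair. It therefore equals $\sum_{t}\pi(t)\sum_x (P_{12}-Q)_+ = \tfrac12\sum_t\pi(t)\|P_{12}(\cdot|t)-Q(\cdot|t)\|_1$, using that both are normalized conditional distributions. This is $d_{\rm TV}(P_{12},Q)$, which finishes the proof.

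I expect the main obstacle to be bookkeeping rather than depth. I must make sure the paper's definitions of $\mathcal C_{\rm AC}$, of the colluder's score (the colluder optimizes over extensions, so its best score is exactly $\sup_{Q\in\mathcal S}\langle g,Q\rangle$), and of $d_{\rm TV}$ match the normalization above. If the payoffs range over $[-1,1]$ instead, a factor of $2$ appears and must be absorbed into the definition. I also need to confirm that the colluder is evaluated with player 1 under the same game $g$ after relabelling.
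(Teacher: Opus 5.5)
Your proposal is correct and follows the paper's proof essentially step for step: the shadow is compact and convex as a linear image of ${\rm Ext}_{\mathsf R}(P_{12})$, the capacity is a sup--inf over $[0,1]$-valued relabelled scoring rules, Sion's minimax theorem swaps the order, and the inner supremum is identified with the input-weighted total-variation distance. The only detail the paper states explicitly that you leave implicit is that the positive part $[\cdot]_+$ in its definition of $\mathcal C_{\rm AC}^{\mathsf R}$ can be dropped because the zero scoring rule is admissible, and your payoff class already contains it.
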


The proof is a finite-dimensional minimax argument and is given in Supplementary Note~3. Here $d_{\rm TV}$ on behaviors denotes the input-weighted total-variation distance; the variational quantity $\sup_h\langle h,P-Q\rangle$ equals $\sum_t\mu(t)\cdot\frac12\sum_x|P(x|t)-Q(x|t)|$ with $\mu$ the input measure carried by the kernel. Thus a fixed game provides a separating hyperplane that witnesses non-shareability, whereas the game-optimized capacity equals the full operational distance from the collusive shadow.

\subsection{Separation theorem: device independence is empty, source certification is positive}\label{sec:separation}

%
%

The central operational result is a strict separation between device-independent and source-certified collusion resistance on the same observed behaviour. We exhibit an explicit family for which the DI side is identically zero while the source-certified side is strictly positive with an exact closed form.

Consider the game $G_{\mathrm{aligned}}$: a referee draws $i\in\{X,Y,Z\}$ uniformly and
sends the \emph{same} $i$ to Alice and to her partner; both measure $\sigma_i$ and score
$1$ iff their outcomes are anti-correlated. For a state $\rho$ the score is
$\mathcal S(\rho)=\tfrac13\sum_i\tfrac12\bigl(1-\langle\sigma_i\!\otimes\!\sigma_i\rangle_\rho\bigr)$.

\begin{proposition}[Explicit local model; vacuous device-independence]
\label{prop:c1-local}
The aligned Werner behaviour $P_{AB}(a,b\mid i,j)=\tfrac14\bigl(1-p\,\delta_{ij}\,ab\bigr)$
is reproduced exactly, for every $p\in[0,1]$, by the hidden-variable model: draw
$s\in\{\pm1\}^3$ uniformly; with probability $p$ Alice outputs $s_i$ and the partner
outputs $-s_i$, otherwise Alice outputs $s_i$ and the partner outputs an independent
uniform bit. Consequently the behaviour is local for all $p$, and in the broadcast model
a colluder holding $s$ reproduces it verbatim, achieving the authorised score with zero
anti-collusion power. No device-independent certificate (which requires Bell violation)
yields any guarantee here.
\end{proposition}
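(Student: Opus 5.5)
The plan is to verify the hidden-variable model directly by a correlator computation, read off locality from its structure, and then obtain the two consequences (zero anti-collusion power in the broadcast model, vacuous device independence) from the definition of the collusive shadow and Theorem~\ref{thm:distance}.

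\textbf{Step 1: make the model local.} Take the hidden variable $\lambda=(s,c,r)$, where $s\in\{\pm1\}^3$ is uniform, $c\in\{0,1\}$ is a coin with $\Pr[c=1]=p$, and $r\in\{\pm1\}$ is a uniform bit, all independent. Alice, on input $i$, outputs $a=s_i$. The partner, on input $j$, outputs $b=-s_j$ if $c=1$ and $b=r$ if $c=0$. I read the statement's ``outputs $-s_i$'' as $-s_j$ with $j$ the partner's own setting; the two agree on the aligned inputs $i=j$, and this is the reading that keeps the model local. Each output is then a deterministic function of the party's own input and $\lambda$, so the model has the form~\eqref{eq:classical-mediator} with $\lambda$ distributed as described.

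\textbf{Step 2: reproduce the behaviour.} For $\pm1$-valued outputs,
\[
P(a,b\mid i,j)=\tfrac14\bigl(1+a\,\mathbb E[A_i]+b\,\mathbb E[B_j]+ab\,\mathbb E[A_iB_j]\bigr).
\]
The marginals vanish: $s_i$ is uniform, and $b$ is uniform conditioned on either value of $c$. For the correlator,
\[
\mathbb E[A_iB_j]=p\,\mathbb E[-s_is_j]+(1-p)\,\mathbb E[s_i]\,\mathbb E[r]=-p\,\delta_{ij},
\]
because the coordinates of $s$ are independent and uniform. Substituting gives exactly $\tfrac14(1-p\,\delta_{ij}\,ab)$ for every $p\in[0,1]$. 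Hence the aligned Werner behaviour lies in the local set for all visibilities.

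\textbf{Step 3: the broadcast colluder.} Give player~3 a copy of $\lambda$ and let it apply the partner's response function to its own input $t_3$. The resulting tripartite distribution is a classical hidden-variable model, hence no-signalling, and its $(1,2)$ marginal is $P_{12}$, so it lies in ${\rm Ext}_{\mathsf C}(P_{12})$. Its $(1,3)$ marginal, after applying $\Pi_{13\to12}$, is literally $P_{12}$. Therefore $P_{12}\in\mathcal S_{\mathsf C}(P_{12})$, the colluder attains the authorised score exactly, and Theorem~\ref{thm:distance} gives $\mathcal C^{\mathsf C}_{\rm AC}(P_{12})=0$. Here ${\rm Ext}_{\mathsf C}$ is a non-empty polytope for finite alphabets, so the theorem's hypotheses hold.

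\textbf{Step 4: no device-independent certificate.} A device-independent guarantee must hold for every model consistent with the observed statistics, including the broadcast model of Step~3. Equivalently, any Bell functional evaluated on $P_{12}$ stays within its local bound, so no violation is available to certify anything.

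The only real obstacle is the interpretive point in Step~1: the partner's response must depend on its own setting $j$, not Alice's $i$. With that reading fixed, everything else is a routine computation.
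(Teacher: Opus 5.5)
Your proof is correct and follows the paper's argument. The paper gives no separate proof beyond exhibiting the model, and your correlator computation ($\mathbb E[A_iB_j]=-p\,\delta_{ij}$, vanishing marginals) verifies it exactly; the copied-seed extension then places $P_{12}$ itself in $\mathcal S_{\mathsf C}(P_{12})$, so the capacity is zero. Your reading of the partner's output as $-s_j$ is the right one, since it agrees with the paper's $-s_i$ on the aligned inputs used in $G_{\mathrm{aligned}}$. Note also that $\mathcal C_{\rm AC}=0$ follows directly from the definition of $\mathcal C_{\rm AC}$ once $P_{12}$ lies in the shadow, without needing the minimax of Theorem~\ref{thm:distance}.
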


The explicit local model is the non-circularity anchor: it constructively shows that the broadcast and device-independent sides are identically zero, so any positive advantage established below is attributable solely to the source assumption.

\begin{theorem}[Source-certified coordination advantage]
\label{thm:c1}
In $G_{\mathrm{aligned}}$ the authorised pair sharing $\rho_W(p)$ scores
$A(p)=\tfrac{1+p}{2}$, while every quantum-source colluder (any tripartite extension with
$\rho_{AB}=\rho_W(p)$) scores at most $V_Q(p)=\tfrac{1+q^\star(p)}{2}$, attained by the
Werner colluder $\rho_W(q^\star)$ with $q^\star$ the closest Johnson--Viola--joinable weight~\cite{JV2013}. The source-certified advantage is
\begin{equation}
  \bigl[A(p)-V_Q(p)\bigr]_+
  =\tfrac14\Bigl[(3p-1)-\sqrt{(3p+1)(1-p)}\,\Bigr]\quad(p>\tfrac23),\qquad 0\ (p\le\tfrac23),
\end{equation}
i.e.\ $3\,C_1^{\mathrm{tomo}}(p)$ (the one-colluder tomographic capacity defined in Theorem~\ref{thm:exact-tomo}). The advantage onsets exactly at the extendibility
threshold $p^*=\tfrac23$; for $p<\tfrac23$ one has $V_Q(p)>A(p)$ (the colluder
over-coordinates, the source being $2$-shareable). By Proposition~\ref{prop:c1-local}
the broadcast/device-independent advantage is $0$ for all $p$.
\end{theorem}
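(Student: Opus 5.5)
The plan is a symmetry reduction followed by a two-parameter optimization. Throughout I take $\rho_W(p)=p\,|\psi^-\rangle\langle\psi^-|+(1-p)\,\id/4$, so that $\langle\sigma_i\otimes\sigma_i\rangle_{\rho_W(p)}=-p$ for each $i$. The authorised score is then immediate:
\[
A(p)=\tfrac13\sum_i\tfrac12\bigl(1+p\bigr)=\tfrac{1+p}{2}.
\]
For the colluder, the score of the pair $(A,C)$ is $\tfrac12\bigl(1-\tfrac13\sum_i\langle\sigma_i\otimes\sigma_i\rangle_{\rho_{AC}}\bigr)$. This is a linear functional of $\rho_{AC}$ and is invariant under $U\otimes U$. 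Its value therefore depends only on the singlet weight $q$ of the Werner twirl of $\rho_{AC}$, and equals $\tfrac{1+q}{2}$.

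\textbf{Step 1 (symmetry reduction).} Given any extension $\rho_{ABC}$ with $\rho_{AB}=\rho_W(p)$, apply the collective twirl $\int dU\,U^{\otimes3}(\cdot)U^{\dagger\otimes3}$.
\begin{itemize}
\item The twirl leaves $\rho_{AB}$ unchanged, because Werner states are $U\otimes U$-invariant.
\item It replaces $\rho_{AC}$ by its Werner twirl $\rho_W(q)$.
\item It does not change the colluder's score, by the invariance noted above.
\end{itemize}
Hence $V_Q(p)=\max\{\tfrac{1+q}{2}:(p,q)\text{ jointly realisable as the }AB,AC\text{ marginals of a }U^{\otimes3}\text{-invariant three-qubit state}\}$. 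The maximum exists because the set of such states is compact.

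\textbf{Step 2 (joinability region).} Parametrize the $U^{\otimes3}$-invariant three-qubit states through the permutation algebra of $S_3$: in Schur--Weyl terms, the spin-$3/2$ block plus the two-dimensional spin-$1/2$ multiplicity block. Positivity of this reduced operator is exactly the Johnson--Viola joinability condition~\cite{JV2013}. In terms of the pairwise singlet weights it becomes an explicit convex region: an ellipse-type constraint in the $(p,q)$ plane coming from positivity of the $2\times2$ spin-$1/2$ block, together with linear constraints from the spin-$3/2$ weight. I will write this region down explicitly and check it against two known points:
\begin{itemize}
\item the symmetric point $p=q=2/3$, where the state is $2$-shareable;
\item the point $(1,0)$, namely a singlet on $AB$ with $C$ uncorrelated.
\end{itemize}

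\textbf{Step 3 (optimization).} For fixed $p$, maximize $q$ over this region. The maximum lies on the curved boundary, and solving the quadratic boundary equation for $q$ should give
\[
q^\star(p)=\tfrac12\Bigl[(1-p)+\sqrt{(3p+1)(1-p)}\Bigr].
\]
This is consistent with the checks $q^\star(1)=0$ and $q^\star(2/3)=2/3$. The maximum is attained by a three-qubit $U^{\otimes3}$-invariant state whose $AC$ marginal is $\rho_W(q^\star)$; this is the ``Werner colluder''. It then remains to compute
\[
A-V_Q=\tfrac{p-q^\star}{2}=\tfrac14\Bigl[(3p-1)-\sqrt{(3p+1)(1-p)}\Bigr].
\]
The sign analysis follows from the identity $(3p-1)^2-(3p+1)(1-p)=12p^2-6p=6p(2p-1)$. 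Together with $3p-1$ changing sign, this locates the unique positive root in $[0,1]$ at $p=2/3$ and shows that $q^\star<p$ exactly when $p>2/3$. For $p<2/3$ one has $q^\star>p$: the colluder over-coordinates, consistent with the source being $2$-shareable there.

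Two comparisons finish the proof. First, matching the expression for $A-V_Q$ with Theorem~\ref{thm:exact-tomo} identifies it as $3C_1^{\rm tomo}(p)$. Second, Proposition~\ref{prop:c1-local} gives the broadcast/device-independent advantage $0$ for all $p$.

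The main obstacle is Step 2: deriving the exact joinability boundary in the $(p,q)$ plane for qubits in a form that is clearly tight. I would first try the explicit spin-$1/2$ block parametrization, writing the swap operators $F_{AB},F_{AC}$ as Pauli-like $2\times2$ matrices on the multiplicity space. This should turn positivity into a disk condition, from which $q^\star$ is read off directly.
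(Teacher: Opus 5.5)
Your proposal is correct and follows essentially the paper's route. You reduce to a Werner colluder by symmetry, and your full $U^{\otimes 3}$ twirl is legitimate (and a bit cleaner than the paper's octahedral averaging, since $\sum_i\sigma_i\otimes\sigma_i=2F-\id$ makes the aligned score linear and $U\otimes U$-invariant); you then take the largest $q$ allowed by the Johnson--Viola ellipse $\Psi_{AB}^2+\Psi_{AB}\Psi_{AC}+\Psi_{AC}^2\le\tfrac34$, which your Schur--Weyl parametrization rederives rather than cites. Two small fixes: the spin-$\tfrac32$ weight does not add linear constraints but adjoins the point $(\Psi_{AB},\Psi_{AC})=(1,1)$ to a convex hull, and its tangent points $(1,-\tfrac12)$ and $(-\tfrac12,1)$ leave the lower arc you optimize over untouched for $\Psi_{AB}<1$; and $(3p-1)^2-(3p+1)(1-p)=12p^2-8p=4p(3p-2)$, not $6p(2p-1)$ (which would put the sign change at $p=\tfrac12$), so it is the correct factorization that places the onset at $p=\tfrac23$.
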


\begin{proof}
$A(p)=\mathcal S(\rho_W(p))=\tfrac12(1-(-p))=\tfrac{1+p}{2}$. The colluder maximises
$\mathcal S(\rho_{AC})$ over $\rho_{AC}$ achievable with $\rho_{AB}=\rho_W(p)$ fixed; by
the octahedral symmetry of $G_{\mathrm{aligned}}$ and of the constraint the optimiser may
be taken Werner $\rho_W(q)$ (cf.\ Thm.~\ref{thm:exact-tomo}, Step~1), giving
$\mathcal S=\tfrac{1+q}{2}$, increasing in $q$. The largest joinable $q$ is $q^\star$ with
$p-q^\star=\tfrac12[(3p-1)-\sqrt{(3p+1)(1-p)}]$ (Johnson--Viola Cor.~III.3, as in
Thm.~\ref{thm:exact-tomo}), whence $A-V_Q=\tfrac{p-q^\star}{2}$. For $p\le\tfrac23$ the
symmetric point $q=p$ is joinable, so $V_Q\ge A$ and the advantage vanishes. A direct SDP
over all tripartite extensions confirms $V_Q(p)$ to four decimals and the sign change at
$p^*=\tfrac23$.
\end{proof}

\begin{corollary}[Composition over rounds]
\label{cor:c1-compose}
Over $n$ independent instances the advantage is maintained against collective
(permutation-symmetric) colluders with soundness error $\tilde O(D^2\log n/(C^2 n))$
(the multi-round extension detection theorem (Theorem~\ref{thm:b2a})) and amplifies exponentially against memoryless colluders
(the explicit finite-round detection protocol (Proposition~\ref{prop:exp-detection})).
\end{corollary}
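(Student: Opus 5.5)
The corollary has two clauses, one per colluder model. I would prove them separately, each by reducing to the single-round gap of Theorem~\ref{thm:c1} and then invoking one of the forward-referenced results.

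The common input is the gap itself. Fix $p>\tfrac23$ and write
\[
\Delta(p)=A(p)-V_Q(p)=\tfrac14\Bigl[(3p-1)-\sqrt{(3p+1)(1-p)}\,\Bigr]>0 .
\]
This $\Delta(p)$ serves as the per-round constant $C$, and $D$ will be the size of the relevant local alphabets or dimensions. The $n$-round game plays $G_{\mathrm{aligned}}$ on $n$ independent copies $\rho_W(p)^{\otimes n}$. Its score is the empirical average of the per-round anti-correlation indicators. The verifier accepts the authorised pair if this average exceeds the threshold $\tfrac12\bigl(A(p)+V_Q(p)\bigr)$.

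Completeness is the same for both models. The honest pair's per-round indicators are i.i.d.\ with mean $A(p)$. Hoeffding's inequality then gives rejection probability $\exp(-n\Delta^2/2)$.

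For \emph{memoryless} colluders, the colluder's strategy acts independently round by round, so the joint state on $ABC$ is a product of single-round extensions of $\rho_W(p)$. By Theorem~\ref{thm:c1}, each round's expected score is at most $V_Q(p)$. The per-round indicators are independent and bounded, so Hoeffding again gives acceptance probability at most $\exp(-n\Delta^2/2)$. This is exactly the setting of Proposition~\ref{prop:exp-detection}, which I would cite with per-round separation $\Delta(p)$ to obtain exponential amplification.

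For \emph{collective permutation-symmetric} colluders, the $n$-round extension $\rho_{A^nB^nC^n}$ may be entangled across rounds. The only constraint is that its $A^nB^n$ marginal equals $\rho_W(p)^{\otimes n}$. The plan is to apply Theorem~\ref{thm:b2a} directly. The first step is to symmetrise over round permutations; this is without loss of generality because the game score is permutation invariant. Next I would apply a de Finetti or post-selection reduction. This shows the colluder's $k$-round marginals are close to mixtures of i.i.d.\ extensions, with an error of order $D^2\log n/n$. Each i.i.d.\ component is a single-round extension of $\rho_W(p)$, so by Theorem~\ref{thm:c1} it scores at most $V_Q(p)$. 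The last step is to compare the de Finetti error with the squared gap $C^2=\Delta^2$; this gives the stated soundness $\tilde O\bigl(D^2\log n/(C^2 n)\bigr)$.

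The main obstacle is checking that the hypotheses of Theorem~\ref{thm:b2a} hold for the aligned game. Two points need care. First, the de Finetti step must respect the fixed marginal constraint, so the approximating i.i.d.\ components must themselves extend $\rho_W(p)$, at least approximately. Second, any approximation slack must be absorbed into the constant while keeping $C=\Delta(p)$. To handle this, I would use the continuity of $V_Q$ in the marginal. This follows from the closed form for $q^\star(p)$ and the compactness of the extension set used in Theorem~\ref{thm:distance}. Continuity lets a small marginal error cost only $o(\Delta)$ in the colluder's score.
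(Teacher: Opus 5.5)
Your memoryless argument is correct: independent per-round extensions each have mean score at most $V_Q(p)$ by Theorem~\ref{thm:c1}, and Hoeffding at the midpoint threshold gives $e^{-n\Delta^2/2}$ on both sides. Your overall plan is also how the paper gets the corollary, which it states with no argument beyond citing Theorem~\ref{thm:b2a} and Proposition~\ref{prop:exp-detection}. You are right to keep the constraint at the source level. For $\mathsf R\in\{\mathsf C,\mathsf{NS},\Qcl\}$ the aligned Werner behaviour lies in its own collusive shadow by Proposition~\ref{prop:c1-local}, so the behaviour-level capacity is $0$. Hence $C$ must be read as the source-level gap, which $\Delta(p)$ lower-bounds because the aligned indicator is one admissible $h$. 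Theorem~\ref{thm:b2a} then has to be re-run at the source level rather than cited verbatim.

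That re-run is where your sketch breaks. The claim that each de~Finetti component is a single-round extension of $\rho_W(p)$ is false. The theorem gives $\rho_{A^kB^kC^k}\approx_\delta\int\sigma_{ABC}^{\otimes k}\,d\mu(\sigma)$, and only the mixture $\int\sigma_{AB}^{\otimes k}\,d\mu$ is $\delta$-close to $\rho_W(p)^{\otimes k}$. Individual components may have $\sigma_{AB}$ far from $\rho_W(p)$, and then, by the monogamy trade-off, colluder scores above your threshold.

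Continuity of $V_Q$ in the marginal cannot repair this, because it only controls components that are already close. The missing ingredient is a bound on the $\mu$-weight of the far components. It comes from the exact i.i.d.\ structure of the authorised data. Take a test on $A^kB^k$ that accepts iff the empirical aligned $AB$ score is at least $A(p)-\theta/2$. It fails on $\rho_W(p)^{\otimes k}$ with probability at most $e^{-k\theta^2/2}$. It passes on any $\sigma_{AB}^{\otimes k}$ with $s_{AB}(\sigma)\le A(p)-\theta$ with probability at most $e^{-k\theta^2/2}$. So those components carry total weight $O(\delta+e^{-k\theta^2/2})$. The remaining components have colluder score at most $V_Q(p)+O(\theta)$ by the Johnson--Viola trade-off, and Hoeffding applies to each. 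An alternative that needs no good/bad split: take a supporting line $s_{AC}+\lambda s_{AB}\le V_Q(p)+\lambda A(p)$ of the convex monogamy region and concentrate the combined statistic.

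Two further corrections. First, the closed form for $q^\star$ covers only Werner marginals. A quantitative modulus for nearby non-Werner $\sigma_{AB}$ therefore needs the octahedral twirl $U^{\otimes3}$ from Step~1 of the proof of Theorem~\ref{thm:exact-tomo} applied first; it preserves both aligned scores. Second, compactness alone gives only qualitative upper semicontinuity, which cannot produce an explicit rate. Take $\theta$ as a $p$-dependent constant multiple of $C$, fixed by the slope of the Johnson--Viola boundary, and $k\asymp\log n/C^2$. You then recover $\tilde O(D^2\log n/(C^2n))$.
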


\begin{remark}[Scope]
\label{rem:c1-scope}
Theorem~\ref{thm:c1} is a collusion-resistant \emph{coordination} (anti-impersonation)
guarantee, not a cryptographic key or randomness primitive: the latter require min-entropy
bounds and a security rate valid against coherent, memory-bearing colluders (via entropy accumulation), which we
do not claim. The non-definitional content is Proposition~\ref{prop:c1-local}: the
advantage is attributable solely to the source assumption, and the device-independent
side is constructively shown to be empty on identical local statistics.
\end{remark}

\subsection{Independence from Bell nonlocality}\label{sec:reversal}

%
%
%

We measure Bell nonlocality by the maximal CHSH value~
\cite{Horodecki2009}: for a two-qubit
state $\rho$ with correlation matrix $T_{ij}=\Tr(\rho\,\sigma_i\!\otimes\!\sigma_j)$,
$\mathrm{CHSH}_{\max}(\rho)=2\sqrt{t_1^2+t_2^2}$ with $t_1\ge t_2$ the two
largest singular values of $T$; $\rho$ is CHSH-Bell-nonlocal iff
$\mathrm{CHSH}_{\max}(\rho)>2$. We write $D_1$ for the one-colluder
shareability resource (Definition~\ref{def:Dm}) and $E_1$ for the set of
symmetrically $1$-extendible states.

\begin{theorem}[Shareability and Bell nonlocality order states differently]
\label{thm:reversal}
There exist two-qubit states $\rho,\rho'$ with $\rho$ CHSH-Bell-nonlocal,
$\rho'$ CHSH-Bell-local, and $D_1(\rho)<D_1(\rho')$. Hence $D_1$ is not a
monotonic function of $\mathrm{CHSH}_{\max}$ (nor of any quantity monotone
in it): the anti-collusion resource and Bell nonlocality induce distinct,
non-coincident orderings on quantum states.
\end{theorem}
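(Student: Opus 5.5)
The plan is an explicit construction. For the Bell-nonlocal state $\rho$ I will take a weakly entangled pure state whose distance to the free set $E_1$ is tiny. For the Bell-local state $\rho'$ I will take a Werner state between the extendibility threshold $p^*=\tfrac23$ and the CHSH threshold $\tfrac1{\sqrt2}$. The proof then reduces to one upper bound on $D_1(\rho)$ and one lower bound on $D_1(\rho')$, with an explicit numerical gap between them. Throughout, $D_1(\sigma)=\min_{\tau\in E_1}\tfrac12\|\sigma-\tau\|_1$.

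\emph{Step 1 (the nonlocal state).} Let $\rho=|\psi_\theta\rangle\langle\psi_\theta|$ with $|\psi_\theta\rangle=\cos\theta|00\rangle+\sin\theta|11\rangle$ and $0<\theta<\pi/4$. Its correlation matrix is $\mathrm{diag}(\sin2\theta,-\sin2\theta,1)$, so $\mathrm{CHSH}_{\max}(\rho)=2\sqrt{1+\sin^2 2\theta}>2$ for every $\theta>0$. The product state $|00\rangle\langle00|$ lies in $E_1$, since $|000\rangle$ is a symmetric extension. The trace distance between two pure states is $\sqrt{1-|\langle00|\psi_\theta\rangle|^2}=\sin\theta$, so $D_1(\rho)\le\sin\theta$. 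This bound can be made arbitrarily small.

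\emph{Step 2 (the local state).} Let $\rho'=\rho_W(p')$ with $p'\in(\tfrac23,\tfrac1{\sqrt2})$, say $p'=0.7$. Its correlation matrix is $-p'\id$, so $\mathrm{CHSH}_{\max}(\rho')=2\sqrt2\,p'<2$ and $\rho'$ is CHSH-local. For the lower bound on $D_1(\rho')$ I will use a twirling argument rather than rely on the general formula $D_m=c(d)(p-p^*_m)_+$, though that formula could be cited directly if it is available.

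\begin{itemize}
\item Let $\mathcal T$ be the $U\otimes U$ twirl. If $\tau_{AB}\in E_1$ has symmetric extension $\tau_{ABC}$, then twirling $\tau_{ABC}$ with $U\otimes U\otimes U$ preserves the $B\leftrightarrow C$ symmetry. Its marginal is $\mathcal T(\tau_{AB})$, so $\mathcal T(E_1)\subseteq E_1$.
\item $\mathcal T$ is a channel and $\mathcal T(\rho')=\rho'$. Hence $\tfrac12\|\rho'-\tau\|_1\ge\tfrac12\|\rho'-\mathcal T(\tau)\|_1$, and the minimisation may be restricted to extendible Werner states.
\item By Johnson--Viola, these are exactly $\rho_W(q)$ with $q\le\tfrac23$.
\item Since $\rho_W(p)-\rho_W(q)=(p-q)(|\psi^-\rangle\langle\psi^-|-\id/4)$, one gets $\tfrac12\|\rho_W(p)-\rho_W(q)\|_1=\tfrac34|p-q|$.
\end{itemize}

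Therefore $D_1(\rho')=\tfrac34(p'-\tfrac23)$, which equals $0.025$ for $p'=0.7$.

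\emph{Step 3 (conclusion).} Choose $\theta$ with $\sin\theta<0.025$. Then $\rho$ is CHSH-nonlocal, $\rho'$ is CHSH-local, and $D_1(\rho)\le\sin\theta<D_1(\rho')$. The non-monotonicity claim follows at once. Any function $f$ monotone in $\mathrm{CHSH}_{\max}$ with $D_1=f(\mathrm{CHSH}_{\max})$ would force $D_1(\rho)\ge D_1(\rho')$, because $\mathrm{CHSH}_{\max}(\rho)>2>\mathrm{CHSH}_{\max}(\rho')$, which is a contradiction.

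The only delicate step is the twirl-invariance of $E_1$ in Step 2. One must check that the extension is symmetrised with the \emph{same} unitary on $B$ and $C$, so that both the $B\leftrightarrow C$ symmetry and the $AB$ marginal are preserved. Once this holds, everything else is an explicit calculation.
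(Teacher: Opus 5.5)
Your proof is correct, and it takes a different route from the paper's: it is fully analytic where the paper's is a numerical certificate.

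\textbf{The paper's route.} The paper takes the mixed Bell-diagonal state with correlation vector $(0.1,-0.1,1)$ and $\rho'=\rho_W(0.706)$, and certifies both bounds by explicit matrix computations. The upper bound $D_1(\rho)\le 0.00239$ comes from a numerically found extension that is symmetrised and projected to give an exactly feasible $\sigma\in E_1$. The lower bound $D_1(\rho')\ge 0.0295$ comes from a dual witness $M=\operatorname{sign}(\rho'-\sigma^\star)$ combined with $\max_{\sigma\in E_1}\Tr(M\sigma)=\lambda_{\max}(\operatorname{sym}(M\otimes I))$.

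\textbf{Your route.} Your state $\cos\theta|00\rangle+\sin\theta|11\rangle$ has the same $(s,-s,1)$ correlation pattern as the paper's ``classically anchored'' family. Because it is pure, its distance to the free anchor $|00\rangle\langle 00|$ is a one-line overlap computation. Your lower bound (twirl invariance of $E_1$, data processing, necessity of $q\le\tfrac23$, and the $\tfrac34|p-q|$ spectrum) is exactly the paper's own proof of Theorem~\ref{thm:Dm-d} at $d=2$, used in place of the numerical witness. Your argument stays valid even when the twirled state has $q<0$, since the swap bound $f\ge-\tfrac12$ applies to any state.

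\textbf{What each buys.} Your approach gives a solver-free proof and a whole family of witnesses with $D_1(\rho)\to 0$ while $\mathrm{CHSH}_{\max}(\rho)>2$. The paper's primal--dual certificate needs no twirling symmetry, so the same technique certifies lower bounds for states without such symmetry. It also handles a mixed witness for which no easy analytic upper bound exists. For the paper's $\rho$, the distance to the obvious separable anchor $\tfrac12(|00\rangle\langle00|+|11\rangle\langle11|)$ is $0.05$, which is too large to beat $0.0295$.

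\textbf{A small precision point in Step 3.} Your sentence that any $f$ monotone in $\mathrm{CHSH}_{\max}$ would force $D_1(\rho)\ge D_1(\rho')$ holds only for non-decreasing $f$. To exclude a non-increasing dependence as well, add a second pair, such as $\rho'$ against the singlet $\rho_W(1)$. The singlet has both larger $\mathrm{CHSH}_{\max}=2\sqrt2$ and larger $D_1=\tfrac14$. The paper also leaves this direction implicit.
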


\begin{proof}
Let $\rho=\tfrac14\bigl(I+0.1\,\sigma_x\!\otimes\!\sigma_x
-0.1\,\sigma_y\!\otimes\!\sigma_y+\sigma_z\!\otimes\!\sigma_z\bigr)$, the
Bell-diagonal state with correlation vector $(0.1,-0.1,1.0)$, and let
$\rho'=\rho_W(0.706)$ be the Werner state with singlet weight $p=0.706$.
The correlation matrix of $\rho$ has singular values $\{1,0.1,0.1\}$, so
$\mathrm{CHSH}_{\max}(\rho)=2\sqrt{1.01}\approx2.010>2$ (Bell-nonlocal),
while $\mathrm{CHSH}_{\max}(\rho')=2\sqrt2\,(0.706)\approx1.997<2$
(Bell-local). It remains to certify $D_1(\rho)<D_1(\rho')$.

\emph{Upper bound on $D_1(\rho)$.} Any $\sigma\in E_1$ satisfies
$D_1(\rho)\le\tfrac12\|\rho-\sigma\|_1$. Taking the explicit extendible
state obtained by symmetrising a feasible $2$-copy extension and projecting
it onto the positive cone yields
\begin{equation}
  D_1(\rho)\ \le\ U=0.00239 .
\end{equation}

\emph{Lower bound on $D_1(\rho')$.} For every Hermitian $M$ with
$-I\preceq M\preceq I$,
\begin{equation}
  D_1(\rho')=\min_{\sigma\in E_1}\tfrac12\|\rho'-\sigma\|_1
  \ \ge\ \tfrac12\Tr(M\rho')-\tfrac12\max_{\sigma\in E_1}\Tr(M\sigma).
  \label{eq:dual-lb}
\end{equation}
The collusive maximisation has the closed form
\begin{equation}
  \max_{\sigma\in E_1}\Tr(M\sigma)
   \;=\;\lambda_{\max}\!\bigl(\operatorname{sym}(M\otimes I_{B_1})\bigr),
  \qquad
  \operatorname{sym}(K)=\tfrac12\bigl(K+(I_A\!\otimes\!\mathrm{SWAP})\,K\,(I_A\!\otimes\!\mathrm{SWAP})\bigr),
  \label{eq:collusive-eig}
\end{equation}
because $\sigma\in E_1$ iff $\sigma=\Tr_{B_1}\Sigma$ for some swap-symmetric
$\Sigma\succeq0$, $\Tr\Sigma=1$, whence
$\Tr(M\sigma)=\Tr\!\bigl(\operatorname{sym}(M\otimes I)\,\Sigma\bigr)$, and
the maximum of a swap-invariant operator over swap-symmetric states is its
top eigenvalue (its top eigenvector lies in a $\pm1$ eigenspace of the
swap, whose rank-one projector is swap-symmetric). Choosing
$M=\operatorname{sign}(\rho'-\sigma^\star)$ for the (near-)optimal
$\sigma^\star$ and evaluating \eqref{eq:dual-lb}--\eqref{eq:collusive-eig}
gives
\begin{equation}
  D_1(\rho')\ \ge\ L=0.02950 ,
\end{equation}
which coincides with the exact Werner value
$\tfrac34(0.706-\tfrac23)=0.0295$.

Therefore $D_1(\rho)\le U=0.00239<L=0.02950\le D_1(\rho')$: the
Bell-nonlocal state $\rho$ is strictly less collusion-resistant than the
Bell-local state $\rho'$.
\end{proof}

\begin{remark}[The certificate is solver-independent]
\label{rem:reversal-cert}
Both bounds are eigenvalue computations on explicit matrices and do not
rely on the optimum reported by any SDP solver: $U$ uses an exactly
feasible extendible $\sigma$ (so $\tfrac12\|\rho-\sigma\|_1$ is an honest
upper bound), and $L$ uses an explicit witness $M$ with $\|M\|_\infty\le1$,
for which \eqref{eq:dual-lb}--\eqref{eq:collusive-eig} hold
unconditionally. The certified gap $L-U=0.02711$ exceeds the
double-precision error by ten orders of magnitude.
\end{remark}

\begin{remark}[A one-parameter family of witnesses, and the mechanism]
\label{rem:reversal-family}
The witness is not isolated. The ``classically anchored'' family $\rho_s$
with correlation vector $(s,-s,1)$, anchored at the classically-correlated
separable point $(0,0,1)$, is Bell-nonlocal for every $s>0$
($\mathrm{CHSH}_{\max}=2\sqrt{1+s^2}$) yet carries vanishing resource as
$s\to0^+$ (e.g.\ $D_1\approx0.0093$ at $s=0.2$, where
$\mathrm{CHSH}_{\max}\approx2.040$). Each such $\rho_s$ with small $s$ is
strictly less collusion-resistant than a Werner state in the
source-non-shareable window $p\in(\tfrac23,\tfrac1{\sqrt2})$, although the
Werner state is Bell-local; in the $(\mathrm{CHSH}_{\max},D_1)$ plane the
two families are not a single monotone curve. The mechanism is transparent:
Bell nonlocality is fixed by the correlation spectrum alone, whereas $D_1$
is fixed by the full extendibility geometry, and the anchored states sit
near the freely shareable point $(0,0,1)$, so their distance to $E_1$ stays
small however their spectrum tips them across the Bell bound.
\end{remark}

\begin{figure}[t]
\centering
\includegraphics[width=0.75\linewidth]{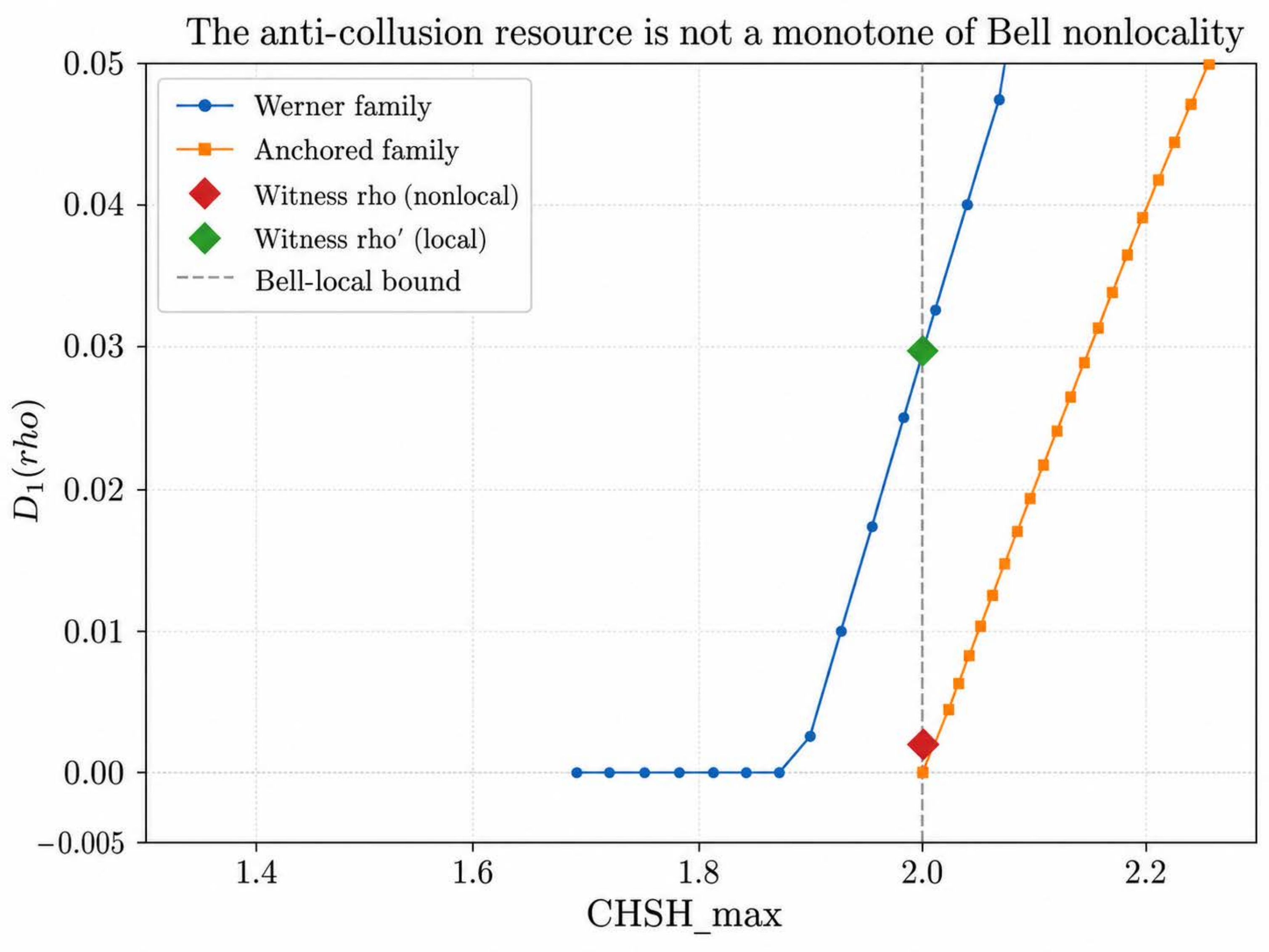}
\caption{\textbf{The anti-collusion resource is not a monotone of Bell nonlocality.} Two families in the $(\mathrm{CHSH}_{\max},D_1)$ plane: the Werner family (blue circles) and the ``classically anchored'' family $(s,-s,1)$ (orange squares). The two curves are not monotone with respect to one another. The witness pair of Theorem~\ref{thm:reversal} is marked: $\rho$ (red diamond, Bell-nonlocal with $\mathrm{CHSH}_{\max}\approx2.010$) has resource $D_1\approx0.0024$, while $\rho'$ (green diamond, Bell-local Werner with $p=0.706$) has resource $D_1\approx0.0295$. The certified gap $D_1(\rho')-D_1(\rho)\approx0.0271$ exceeds numerical error by ten orders of magnitude.}
\label{fig:reversal}
\end{figure}

\subsection{Resource theory of strategic non-shareability}\label{sec:resource-theory}

The state-level resource theory of unextendibility has been established by Kaur--Das--Wilde--Winter~\cite{KDWW2019,KDWW2021}, with quantifiers based on generalized divergences and robustness measures. Wang--Wilde~\cite{WangWilde2024} and Singh--Wilde~\cite{SinghWilde2025} subsequently developed unextendible entanglement monotones and channel bounds. These works take symmetric extendibility of the quantum state as the free set and define faithful monotones on the state space. Our contribution is to lift this framework to the \emph{behavior} or strategic layer: the free objects are still symmetrically extendible sources, but the resource is now the anti-collusion capacity---the optimal distinguishing advantage in a game---and the operational primitive is coordination with certified collusion resistance. The trace-distance resource $D_m$ on states induces the behavior-level capacity $C_{\rm AC}$, which is what an experimenter can certify from observed correlations.

We organize strategic shareability as a resource theory whose free objects are the freely shareable (symmetrically extendible) states and whose resource measure is the trace-distance monotone $D_m$.

%

\begin{definition}[Free set: symmetrically $m$-extendible states]
\label{def:free-set}
Fix finite local dimensions and let $m\ge 1$ be the number of colluders.
A bipartite state $\rho_{AB}$ is \emph{symmetrically $m$-extendible on
$B$} if there exists a state $\rho_{AB_0B_1\cdots B_m}$ on
$\mathcal H_A\otimes\mathcal H_B^{\otimes (m+1)}$ that is
(i) invariant under every permutation of $B_0,B_1,\dots,B_m$, and
(ii) satisfies $\Tr_{B_1\cdots B_m}\rho_{AB_0B_1\cdots B_m}=\rho_{AB}$.
We denote the set of such states by $\mathcal E_m$. The sets are nested,
$\mathcal E_1\supseteq\mathcal E_2\supseteq\cdots$, each $\mathcal E_m$ is
convex and compact, and every separable state lies in $\bigcap_m\mathcal E_m$.
\end{definition}

\begin{definition}[Trace-distance resource]
\label{def:Dm}
The \emph{$m$-shareability resource} of $\rho_{AB}$ is
\begin{equation}
  D_m(\rho_{AB})\;:=\;\min_{\sigma\in\mathcal E_m}\tfrac12\,\bigl\|\rho_{AB}-\sigma\bigr\|_1 ,
\end{equation}
the minimum being attained because $\mathcal E_m$ is compact.
\end{definition}

\begin{definition}[Free operations]
\label{def:free-ops}
A CPTP map $\Phi$ acting on $AB$ is \emph{$m$-shareability-free} if it maps
the free set into itself, $\Phi(\mathcal E_m)\subseteq\mathcal E_m$. We write
$\mathcal O_m$ for this class. This is the maximal class of operations that
cannot create the resource; Lemma~\ref{lem:free-ops-concrete} exhibits a
physically motivated subclass, and restricting to any subclass only
strengthens the monotonicity statement below.
\end{definition}

\begin{lemma}[Concrete free operations]
\label{lem:free-ops-concrete}
The class $\mathcal O_m$ contains:
\begin{enumerate}
\item[(a)] local channels $\Phi_A\otimes\Phi_B$ for arbitrary CPTP maps
      $\Phi_A,\Phi_B$;
\item[(b)] mixing with a free state,
      $\rho\mapsto(1-q)\rho+q\,\sigma_0$ with $\sigma_0\in\mathcal E_m$,
      $q\in[0,1]$;
\item[(c)] tensoring with a free ancilla and discarding subsystems
      compatibly with the $A{:}B$ cut;
\item[(d)] every entanglement-breaking channel across $A{:}B$.
\end{enumerate}
\end{lemma}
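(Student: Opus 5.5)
The plan is to fix $\sigma\in\mathcal E_m$ with a symmetric extension $\Sigma$ on $AB_0B_1\cdots B_m$ (invariant under permutations of the $B$'s, with $\Tr_{B_1\cdots B_m}\Sigma=\sigma$), and for each class exhibit an explicit symmetric extension of $\Phi(\sigma)$. For (a) I would take $(\Phi_A\otimes\Phi_B^{\otimes(m+1)})(\Sigma)$. This state is positive and normalized. Because $\Phi_B^{\otimes(m+1)}$ commutes with every permutation unitary $U_\pi$ on $B_0\cdots B_m$, we have $U_\pi(\Phi_A\otimes\Phi_B^{\otimes(m+1)})(\Sigma)U_\pi^\dagger=(\Phi_A\otimes\Phi_B^{\otimes(m+1)})(U_\pi\Sigma U_\pi^\dagger)$, so the image is permutation-invariant. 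Trace preservation of $\Phi_B$ on $B_1,\dots,B_m$ gives the marginal $(\Phi_A\otimes\Phi_B)(\sigma)$. The output dimension of $B$ may change, but Definition~\ref{def:free-set} is stated for any fixed finite dimensions, so this is harmless. Interpreted this way, (a) also covers the local isometries and partial traces in (c).

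For (b), convexity of $\mathcal E_m$ is enough. The mixture $(1-q)\Sigma+q\Sigma_0$ of the two symmetric extensions is again permutation-invariant, and its marginal is $(1-q)\sigma+q\sigma_0$. For (c), I would read ``compatibly with the $A{:}B$ cut'' as tensoring with an ancilla $\tau_{A'B'}\in\mathcal E_m$ that has extension $T$. Then $\Sigma\otimes T$, with the $B$-systems regrouped as $(B_jB'_j)$, is invariant under simultaneous permutations of the composite systems, and its marginal is $\sigma\otimes\tau$. Discarding subsystems on either side is a local partial trace, which is a special case of (a).

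For (d), I would use the Holevo form of an entanglement-breaking channel: $\Phi(X)=\sum_k\Tr(F_kX)\,\omega_k$, where $\{F_k\}$ is a POVM on $AB$ and each $\omega_k$ is a state, which I choose to be product $\alpha_k\otimes\beta_k$ across $A{:}B$. The output is then separable. Every separable state $\sum_k p_k\alpha_k\otimes\beta_k$ has the explicit symmetric extension $\sum_k p_k\,\alpha_k\otimes\beta_k^{\otimes(m+1)}$, which the definition already asserts lies in $\bigcap_m\mathcal E_m$. So $\Phi(\mathcal E_m)\subseteq\mathcal E_m$, and indeed $\Phi$ maps every state into $\mathcal E_m$.

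The main obstacle is (d), specifically fixing its meaning. ``Entanglement-breaking across $A{:}B$'' has to mean that the output on $AB$ is always separable. Under that reading, the Horodecki--Shor--Ruskai characterization yields the measure-and-prepare form with product preparations (or at least separable outputs, which suffice). If the intended meaning were instead a channel that breaks entanglement with an external reference, the claim could fail: a replacement channel that outputs a fixed entangled state is entanglement-breaking in that sense but creates resource. So I would state the separable-output interpretation explicitly. A secondary point in (c) is checking that the regrouped permutation action on $(B_jB'_j)$ is a genuine representation of $S_{m+1}$; this is immediate because the action is a tensor product of two representations.
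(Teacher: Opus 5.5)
Your proof is correct and follows the paper's argument essentially step for step: (a) because $\Phi_A\otimes\Phi_B^{\otimes(m+1)}$ commutes with the $B$-permutation unitaries, (b) by convexity of $\mathcal E_m$, (c) by tensoring or tracing the free ancilla symmetrically across the $B$-copies, and (d) because separable outputs lie in $\bigcap_m\mathcal E_m$, and your separable-output reading of ``entanglement-breaking across $A{:}B$'' is exactly the one the paper's proof tacitly adopts. One small correction: the Horodecki--Shor--Ruskai measure-and-prepare form does not follow from separable outputs alone (e.g.\ $\rho_{AB}\mapsto\Tr_B\rho_{AB}\otimes\dyad{0}_B$ always outputs product states yet is not entanglement-breaking relative to a reference), so drop that step and rely directly on your fallback that separable outputs suffice.
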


The proof is immediate from the defining property $\Phi(\mathcal E_m)\subseteq\mathcal E_m$ and contractivity of trace distance; details are given in Supplementary Note~13.

\begin{theorem}[$D_m$ is a resource monotone]
\label{thm:monotone}
For every $\Phi\in\mathcal O_m$ and every state $\rho$,
\begin{equation}
  D_m\!\bigl(\Phi(\rho)\bigr)\;\le\;D_m(\rho).
\end{equation}
In particular $D_m$ does not increase under any of the operations in
Lemma~\ref{lem:free-ops-concrete}.
\end{theorem}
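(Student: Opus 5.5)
The plan is the standard argument that a distance-to-free-set quantifier is monotone under maps preserving the free set. Fix $\rho$ and $\Phi\in\mathcal O_m$. By Definition~\ref{def:Dm} and compactness of $\mathcal E_m$ (Definition~\ref{def:free-set}), choose a minimiser $\sigma^\star\in\mathcal E_m$ with $D_m(\rho)=\tfrac12\|\rho-\sigma^\star\|_1$. Since $\Phi(\mathcal E_m)\subseteq\mathcal E_m$ by Definition~\ref{def:free-ops}, the state $\Phi(\sigma^\star)$ is a feasible point in the minimisation defining $D_m(\Phi(\rho))$. This gives
\begin{equation*}
D_m\bigl(\Phi(\rho)\bigr)\;\le\;\tfrac12\bigl\|\Phi(\rho)-\Phi(\sigma^\star)\bigr\|_1
\;=\;\tfrac12\bigl\|\Phi(\rho-\sigma^\star)\bigr\|_1
\;\le\;\tfrac12\|\rho-\sigma^\star\|_1\;=\;D_m(\rho).
\end{equation*}
The equality uses linearity of $\Phi$. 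The last inequality is contractivity of the trace norm under CPTP maps on traceless Hermitian operators.

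For completeness I would recall why contractivity holds. Decompose $\rho-\sigma^\star=P-N$ with $P,N\succeq0$ of orthogonal support, so that $\|\rho-\sigma^\star\|_1=\Tr P+\Tr N$. Positivity of $\Phi$ gives $\Phi(P),\Phi(N)\succeq0$, and the triangle inequality then gives $\|\Phi(P)-\Phi(N)\|_1\le\Tr\Phi(P)+\Tr\Phi(N)$. Trace preservation turns the right-hand side into $\Tr P+\Tr N$. Only positivity and trace preservation are needed; complete positivity is not used.

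The "in particular" clause then follows once each operation in Lemma~\ref{lem:free-ops-concrete} is known to lie in $\mathcal O_m$. If Lemma~\ref{lem:free-ops-concrete} is taken as given, nothing more is required. Otherwise, the step I expect to carry the real content is checking membership directly, one operation at a time:
\begin{enumerate}
\item[(a)] For local channels, apply $\Phi_A\otimes\Phi_B^{\otimes(m+1)}$ to a symmetric extension. The result is still permutation-invariant on the $B$ copies and has the correct marginal, because $\Phi_B$ is trace preserving on the traced copies.
\item[(b)] For mixing with a free state, use convexity of $\mathcal E_m$.
\item[(c)] For tensoring with a free ancilla and discarding subsystems, extend the ancilla copies together with the $B$ copies; the argument is as in (a).
\item[(d)] For entanglement-breaking channels, the output is separable, and separable states lie in $\mathcal E_m$ by Definition~\ref{def:free-set}.
\end{enumerate}
The main subtlety is in (c). There one has to make sure the ancilla's own extension is placed on the $B$ side compatibly with the permutation symmetry. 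I would handle this by taking a symmetric extension of the product state, which is the tensor product of the two symmetric extensions with matched copy indices.
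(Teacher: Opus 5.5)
Your proof is correct and follows the paper's argument: take a minimiser $\sigma^\star\in\mathcal E_m$, note that $\Phi(\sigma^\star)\in\mathcal E_m$ is feasible for $D_m(\Phi(\rho))$, and apply contractivity of the trace distance. Your membership checks for (a)--(d) also follow the paper's proof of Lemma~\ref{lem:free-ops-concrete}; you are somewhat more explicit than the paper on contractivity (only positivity and trace preservation are used) and on item (c), where you take the tensor product of the two symmetric extensions with matched copy indices.
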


The proof uses contractivity of trace distance under CPTP maps; see Supplementary Note~13.

\begin{proposition}[Basic axioms of the resource measure]
\label{prop:axioms}
The functional $D_m$ satisfies:
\begin{enumerate}
\item[(i)] \emph{Faithfulness:} $D_m(\rho)=0$ if and only if
      $\rho\in\mathcal E_m$;
\item[(ii)] \emph{Convexity:}
      $D_m\!\bigl(\sum_i p_i\rho_i\bigr)\le\sum_i p_i\,D_m(\rho_i)$ for any
      probability distribution $\{p_i\}$;
\item[(iii)] \emph{Closed form on the Werner line:}
      $D_m\!\bigl(\rho_W(p)\bigr)=\tfrac34\,(p-p_m^\star)_+$ with
      $p_m^\star=(m+3)/[3(m+1)]$.
\end{enumerate}
\end{proposition}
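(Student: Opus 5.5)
Items (i) and (ii) are structural and I would dispatch them first; item (iii), the exact Werner formula, is where the work lies.

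For (i), faithfulness follows from Definition~\ref{def:Dm}. The minimum is attained on the compact set $\mathcal E_m$, and the trace norm is a norm. So $D_m(\rho)=0$ forces $\rho=\sigma$ for the minimiser $\sigma\in\mathcal E_m$. Conversely, $\rho\in\mathcal E_m$ gives the choice $\sigma=\rho$ and hence $D_m(\rho)=0$.

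For (ii), I would choose optimal $\sigma_i\in\mathcal E_m$ for each $\rho_i$. By convexity of $\mathcal E_m$ (Definition~\ref{def:free-set}), the mixture $\sum_i p_i\sigma_i$ is again free. The triangle inequality then gives $\tfrac12\|\sum_ip_i\rho_i-\sum_ip_i\sigma_i\|_1\le\sum_ip_iD_m(\rho_i)$. For convexity of $\mathcal E_m$ itself I would mix the symmetric extensions directly.

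For (iii), I would use a twirling/symmetry reduction followed by a one-parameter computation. Let $\mathcal T(X)=\int dU\,(U\otimes U)X(U\otimes U)^\dagger$ be the Werner twirl; for qubits one may equivalently use $U\otimes\bar U$ composed with a local Pauli, so that the singlet weight is the invariant. $\mathcal T$ is a mixture of local unitaries, so by Lemma~\ref{lem:free-ops-concrete}(a) it lies in $\mathcal O_m$. It fixes $\rho_W(p)$. By contractivity, $\tfrac12\|\rho_W(p)-\mathcal T(\sigma)\|_1\le\tfrac12\|\rho_W(p)-\sigma\|_1$. Hence the minimum may be restricted to Werner states $\rho_W(q)\in\mathcal E_m$.

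Next I would invoke the Johnson--Viola characterisation~\cite{JV2013}: $\rho_W(q)$ is symmetrically $m$-extendible exactly when $q\le p_m^\star=(m+3)/[3(m+1)]$. Writing $\rho_W(q)=q\,\Psi^-+(1-q)I/4$, the difference $\rho_W(p)-\rho_W(q)=(p-q)(\Psi^--I/4)$ has eigenvalues $\tfrac34(p-q)$ and $-\tfrac14(p-q)$, the latter with multiplicity three. Its trace norm is therefore $\tfrac32|p-q|$, so $D_m(\rho_W(p))=\min_{q\le p_m^\star}\tfrac34|p-q|=\tfrac34(p-p_m^\star)_+$.

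The main obstacle is fixing the Werner parametrisation and the threshold convention consistently with~\cite{JV2013}. Johnson--Viola state the $1$-to-$m+1$ sharing bound in terms of the singlet fidelity. The qubit check $m=1$ must give $p_1^\star=2/3$, matching Theorem~\ref{thm:c1}. As $m\to\infty$ the threshold must tend to $1/3$, the separability bound. I would also verify that the twirl commutes with the extension structure. Concretely, $(U\otimes U)\rho(U\otimes U)^\dagger$ extends via $U\otimes U^{\otimes(m+1)}$, and conjugation by this operator preserves permutation symmetry of the $B$ copies. The twirl-invariant subset of $\mathcal E_m$ is then exactly $\{\rho_W(q):q\le p_m^\star\}$.
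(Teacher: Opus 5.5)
Your proposal is correct and follows the paper's proof essentially step for step. The paper uses compactness (closedness) of $\mathcal E_m$ plus the metric property of the trace norm for (i), and mixing the optimal free states with the triangle inequality for (ii). For (iii) it uses the Werner twirl with contractivity to reduce to the Werner line, the Johnson--Viola threshold $p_m^\star$ (which the paper also reproves via the swap Hamiltonian), and the spectrum $\{\tfrac34,-\tfrac14,-\tfrac14,-\tfrac14\}$ of $P_- - I/4$.

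One small detail: the twirl can output $\rho_W(q)$ with $q\in[-\tfrac13,0)$. These states are separable, so they lie in $\mathcal E_m$ and do not change the minimiser $q=\min\{p,p_m^\star\}$.
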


Faithfulness and convexity follow from compactness and convexity of $\mathcal E_m$ together with the metric property of the trace norm. The closed form (iii) is proved in Supplementary Note~15. The full resource-theoretic proofs are given in Supplementary Note~13.

\begin{remark}[Behavior-level monotone]
\label{rem:behavior-level}
The same argument applies verbatim at the behavior level. The
anti-collusion capacity
$C^{\mathrm R}_{\mathrm{AC}}(P_{12})
=\inf_{Q\in\mathcal S_{\mathrm R}(P_{12})}d_{\mathrm{TV}}(P_{12},Q)$
is monotone under the free behavior operations consisting of local
classical pre- and post-processing of the players' inputs and outputs and
mixing with elements of the collusive shadow: these operations map
$\mathcal S_{\mathrm R}$ into itself, and $d_{\mathrm{TV}}$ is contractive
under stochastic maps. Thus both the state-level resource $D_m$ and the
behavior-level capacity $C^{\mathrm R}_{\mathrm{AC}}$ are genuine monotones
of the resource theory of strategic shareability.
\end{remark}

%
%
%
%

We now lift the exact trace-distance resource to arbitrary local
dimension. The extendibility thresholds are the Johnson--Viola results~[46];
the closed-form constants and the resulting resource monotones are new.

\paragraph{Families.}
For local dimension $d$ write $V=\sum_{ij}\ket{ij}\!\bra{ji}$ for the swap,
$P_-=(I-V)/2$ and $P_+=(I+V)/2$ for the antisymmetric and symmetric
projectors, with $\operatorname{rank}P_-=\binom{d}{2}$ and
$\operatorname{rank}P_+=\binom{d+1}{2}$. Define the $d$-dimensional Werner
and isotropic families
\begin{align}
  \rho_W^{(d)}(p) &= p\,\frac{P_-}{\binom{d}{2}} + (1-p)\,\frac{I}{d^2},
   & \rho_{\mathrm{iso}}^{(d)}(p) &= p\,\ket{\Phi_d^+}\!\bra{\Phi_d^+} + (1-p)\,\frac{I}{d^2},
\end{align}
with $p\in[0,1]$ and $\ket{\Phi_d^+}=\tfrac1{\sqrt d}\sum_i\ket{ii}$. Both
reduce at $d=2$ to the qubit Werner state of the main text (singlet weight
$p$ mixed with $I/4$). In terms of the Johnson--Viola parameters,
$\Psi^-(p)=\Tr[V\rho_W^{(d)}(p)]=\tfrac1d-p\,\tfrac{d+1}{d}$ and
$\Phi^+(p)=\tfrac1d\bigl(p(d^2-1)+1\bigr)$, so $p\in[0,1]$ covers
$\Psi^-\in[-1,\tfrac1d]$ and $\Phi^+\in[\tfrac1d,d]$.

\begin{lemma}[Trace norm between two Werner / isotropic states]
\label{lem:Wernerdist-d}
For all $p,q\in[0,1]$,
\begin{align}
  \tfrac12\bigl\|\rho_W^{(d)}(p)-\rho_W^{(d)}(q)\bigr\|_1
    &= c_W(d)\,|p-q|, & c_W(d) &= \frac{d+1}{2d},\\[2pt]
  \tfrac12\bigl\|\rho_{\mathrm{iso}}^{(d)}(p)-\rho_{\mathrm{iso}}^{(d)}(q)\bigr\|_1
    &= c_{\mathrm{iso}}(d)\,|p-q|, & c_{\mathrm{iso}}(d) &= \frac{d^2-1}{d^2}.
\end{align}
In particular $c_W(2)=c_{\mathrm{iso}}(2)=\tfrac34$.
\end{lemma}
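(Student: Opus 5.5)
The proof is a direct spectral computation. The key observation is that in each family the difference of two states is a scalar multiple of one fixed traceless operator. Because that operator lives in a commutative algebra generated by two orthogonal projectors, its trace norm can be read off from its eigenvalues.

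\emph{Werner case.} Write
\[
\rho_W^{(d)}(p)-\rho_W^{(d)}(q)=(p-q)\,\Delta_W,\qquad \Delta_W=\frac{P_-}{\binom d2}-\frac{I}{d^2}.
\]
Since $I=P_-+P_+$ with $P_\pm$ orthogonal, $\Delta_W$ is diagonal in the decomposition $\operatorname{ran}P_-\oplus\operatorname{ran}P_+$. Its eigenvalue is $\frac{2}{d(d-1)}-\frac1{d^2}=\frac{d+1}{d^2(d-1)}$ on the antisymmetric subspace, with multiplicity $\binom d2$. On the symmetric subspace its eigenvalue is $-\frac1{d^2}$, with multiplicity $\binom{d+1}2$.

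Hence
\[
\|\Delta_W\|_1=\binom d2\frac{d+1}{d^2(d-1)}+\binom{d+1}2\frac1{d^2}=\frac{d+1}{2d}+\frac{d+1}{2d}=\frac{d+1}{d}.
\]
Halving gives $c_W(d)=\frac{d+1}{2d}$, and homogeneity of the norm supplies the factor $|p-q|$. As a sanity check, $\Delta_W$ is traceless, so the positive and negative parts each contribute exactly half of $\|\Delta_W\|_1$. This is consistent with the two summands above being equal.

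\emph{Isotropic case.} Here the difference is $(p-q)\Delta_{\rm iso}$ with $\Delta_{\rm iso}=\Phi-\frac I{d^2}$, where $\Phi=\ket{\Phi_d^+}\!\bra{\Phi_d^+}$ is a rank-one projector. The eigenvalue of $\Delta_{\rm iso}$ is $1-\frac1{d^2}$ on $\ket{\Phi_d^+}$ and $-\frac1{d^2}$ on its $(d^2-1)$-dimensional orthogonal complement. Therefore
\[
\|\Delta_{\rm iso}\|_1=\frac{d^2-1}{d^2}+\frac{d^2-1}{d^2}=\frac{2(d^2-1)}{d^2},
\]
and so $c_{\rm iso}(d)=\frac{d^2-1}{d^2}$.

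Setting $d=2$ in both formulas gives $\frac34$.

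There is no substantive obstacle. The only points needing care are to use the correct ranks $\binom d2$ and $\binom{d+1}2$, which follow from $\Tr V=d$, and to note that $P_-$ commutes with $I$ while $\Phi$ commutes with $I$. This commutation is what makes the operators simultaneously diagonalizable, so the trace norm is simply the sum of the absolute values of the eigenvalues.
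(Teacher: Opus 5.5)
Your proposal is correct and follows essentially the same route as the paper's proof. Both write each difference as $(p-q)$ times a fixed operator that is diagonal in the $P_-\oplus P_+$ (respectively $\ket{\Phi_d^+}\oplus$ complement) decomposition, and then sum the absolute eigenvalues weighted by the multiplicities $\binom d2,\binom{d+1}2$ (respectively $1,\,d^2-1$).
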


The spectral calculation is given in Supplementary Note~13.

\begin{theorem}[Exact resource in arbitrary dimension]
\label{thm:Dm-d}
For $m$ colluders there are $m+1$ $B$-side copies (one authorized plus $m$ colluders). Translating
the Johnson--Viola $1$-$(m+1)$ sharability thresholds~[46] into the present
parametrization, the $m$-colluder extendibility thresholds are
\begin{align}
  p^{*}_m(d) &= \frac{d(d-1)+m+1}{(m+1)(d+1)}
   = \frac{d^2-d+m+1}{(m+1)(d+1)} && (\text{Werner}),\\[2pt]
  p^{*}_{m,\mathrm{iso}}(d) &= \frac{m+1+d}{(m+1)(d+1)} && (\text{isotropic}),
\end{align}
and the $m$-shareability resource is exactly
\begin{align}
  D_m\!\bigl(\rho_W^{(d)}(p)\bigr)
   &= \frac{d+1}{2d}\,\bigl(p-p^{*}_m(d)\bigr)_+,\\[2pt]
  D_m\!\bigl(\rho_{\mathrm{iso}}^{(d)}(p)\bigr)
   &= \frac{d^2-1}{d^2}\,\bigl(p-p^{*}_{m,\mathrm{iso}}(d)\bigr)_+.
\end{align}
At $d=2$ both thresholds equal $(m+3)/[3(m+1)]$ and
$D_m=\tfrac34(p-p^{*}_m)_+$, recovering the main-text formula.
\end{theorem}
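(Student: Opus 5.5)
The argument has three parts. First translate the Johnson--Viola thresholds into the present parametrization. Then show that the optimal extendible approximant can be taken inside the same family. Finally evaluate the distance with Lemma~\ref{lem:Wernerdist-d}.

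\emph{Step 1 (thresholds).} Johnson--Viola state $1$-$(m+1)$ sharability of Werner states as a condition $\Psi^-\ge \frac{1}{m+1}\bigl(\cdots\bigr)$; equivalently, the extreme $(m+1)$-shareable Werner state has $\Tr[V\rho]=\Psi^-_{\min}(m,d)$. Isotropic states satisfy an analogous upper bound $\Phi^+\le\Phi^+_{\max}(m,d)$. I would substitute the affine relations $\Psi^-(p)=\tfrac1d-p\tfrac{d+1}{d}$ and $\Phi^+(p)=\tfrac1d(p(d^2-1)+1)$, and solve for $p$. This is routine algebra and yields $p^*_m(d)$ and $p^*_{m,\mathrm{iso}}(d)$. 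As checks, $m\to\infty$ recovers the separability thresholds $p=\frac{d-1}{d+1}$ and $p=\frac1{d+1}$, and $d=2$ gives $(m+3)/[3(m+1)]$. Since shareability is preserved under mixing with $I/d^2$, which is separable and hence in $\mathcal E_m$, the free Werner (resp.\ isotropic) states are exactly those with $p\le p^*_m$.

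\emph{Step 2 (twirling reduction).} This is the crux. Let $\mathcal T$ be the $U\otimes U$ twirl for Werner states, or the $U\otimes\bar U$ twirl for isotropic states. Three facts are needed:
\begin{itemize}
\item $\mathcal T$ is a mixture of local unitaries, so by Lemma~\ref{lem:free-ops-concrete}(a) and convexity of $\mathcal E_m$ it maps $\mathcal E_m$ into itself.
\item $\mathcal T$ fixes $\rho_W^{(d)}(p)$ (resp.\ $\rho_{\mathrm{iso}}^{(d)}(p)$).
\item $\mathcal T$ is trace-norm contractive.
\end{itemize}
Hence for any $\sigma\in\mathcal E_m$ we have $\tfrac12\|\rho-\mathcal T(\sigma)\|_1\le\tfrac12\|\rho-\sigma\|_1$, so the minimum in Definition~\ref{def:Dm} may be restricted to twirled free states. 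I expect the main obstacle here: the range of the twirl is the full two-parameter-free Werner (isotropic) line, but it includes parameter values with $p<0$. These correspond to $\Psi^->\tfrac1d$ (resp.\ $\Phi^+<\tfrac1d$), i.e.\ states weighted toward $P_+$ (or away from $\Phi^+$). I would handle this by writing the twirled state as $\alpha P_-/\binom d2+(1-\alpha)P_+/\binom{d+1}2$ (and analogously for isotropic states). The free set is then an interval $\alpha\in[0,\alpha^*_m]$, and extending the computation of Lemma~\ref{lem:Wernerdist-d} to this full line is again a two-eigenvalue computation. Equivalently, the trace distance between commuting family members is $|\Delta\alpha|$ times a fixed factor, so it is monotone in $\alpha$.

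\emph{Step 3 (evaluation).} Within the family, the trace distance is a monotone function of the parameter difference. By Lemma~\ref{lem:Wernerdist-d} it equals $c(d)|p-q|$ for $q\in[0,1]$, and it only grows as $q$ moves further below. Therefore the closest free point to $\rho(p)$ with $p>p^*_m$ is $q=p^*_m$, which gives $D_m=c(d)(p-p^*_m)_+$. For $p\le p^*_m$ the state is free and $D_m=0$. Substituting $c_W=\frac{d+1}{2d}$ and $c_{\mathrm{iso}}=\frac{d^2-1}{d^2}$ gives the stated formulas, and at $d=2$ both constants equal $\tfrac34$.
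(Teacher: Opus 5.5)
This is essentially the paper's proof (Johnson--Viola translation, the $U\otimes U$ or $U\otimes\bar U$ twirl used as a free operation that fixes the target, then minimising $c(d)|p-q|$ over the free interval), and your explicit handling of twirled states with $q<0$, where the distance only grows, cleans up a point the paper's step (ii) passes over silently. Two small repairs: write down the conditions you are translating, $\Psi^-\ge-\tfrac{d-1}{m+1}$ and $\Phi^+\le\tfrac{m+d}{m+1}$, instead of leaving them as ellipses; and correct your sanity check, since as $m\to\infty$ both $p^*_m(d)$ and $p^*_{m,\mathrm{iso}}(d)$ tend to $\tfrac{1}{d+1}$ (Werner separability $\Psi^-\ge0$ is $p\le\tfrac{1}{d+1}$ in this parametrization, not $\tfrac{d-1}{d+1}$).
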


The proof uses Werner/isotropic twirling and the Johnson--Viola $1$-$(m+1)$ sharability thresholds; see Supplementary Note~13.

\begin{remark}[Dimensional onset of the resource]
\label{rem:dim-onset}
The one-colluder resource is special to qubits. Since
$p^{*}_m(d)<1\iff m\ge d-1$, the $m$-colluder Werner resource
$D_m(\rho_W^{(d)}(\cdot))$ is identically zero on the whole family unless
$m\ge d-1$. Equivalently, even the most-entangled qudit Werner state
($p=1$, $\Psi^-=-1$) is $1$-$(d-1)$ sharable -- it is the two-party
reduction of the totally antisymmetric $d$-party state, which furnishes the
sharing state -- so a positive resource first appears only at $n=d$ copies,
i.e.\ $m=d-1$ colluders. For $d=2$ this gives the familiar onset at one
colluder; for $d=3$ one needs $m\ge2$, etc.
\end{remark}

\begin{remark}[Dimensional behaviour of the constants and limits]
\label{rem:dim-limits}
The trace-distance constants coincide at $d=2$ ($c_W=c_{\mathrm{iso}}=3/4$,
consistent with the local equivalence of qubit Werner and isotropic states)
but separate for $d>2$: $c_W(d)=\tfrac{d+1}{2d}\downarrow\tfrac12$ while
$c_{\mathrm{iso}}(d)=1-d^{-2}\uparrow1$ as $d\to\infty$. Both thresholds
tend to the separability point $1/(d+1)$ as $m\to\infty$ (arbitrary
sharability equals separability), consistent with the Werner/isotropic
separability boundaries $\Psi^-\ge0$ and $\Phi^+\le1$.
\end{remark}

\subsection{Exactly solvable calibration: Werner and isotropic capacities}\label{sec:werner}

The framework is exactly solvable on Werner and isotropic source families, turning the Johnson--Viola shareability thresholds into closed-form resource monotones. These formulas serve as a calibration of the theory rather than a new extendibility result. The certification hierarchy is: \emph{score-only} (weakest, Supplementary Note~6), \emph{source-level} (Theorem~\ref{thm:Dm-d} and Corollary~\ref{thm:werner-threshold}), and \emph{tomographic} (strongest, Theorem~\ref{thm:exact-tomo}). More information yields an earlier certified onset; the gap between the CHSH-score onset ($S_{12}=2$) and the source-level onset ($p_m^{\ast}$) is the Bell-local but source-non-shareable window. We organize the results into a certification hierarchy: \emph{score-only} certification (Supplementary Note~6) uses only the observed CHSH score and is the weakest certificate; \emph{source-level} certification (Theorem~\ref{thm:Dm-d} and Corollary~\ref{thm:werner-threshold}) uses knowledge of the Werner family; \emph{tomographic} certification (Theorem~\ref{thm:exact-tomo}) uses the full behavior statistics under a tomographically complete measurement set. More information yields an earlier certified onset. The gap between the CHSH-score onset ($S_{12}=2$) and the source-level onset ($p_m^{\ast}$) is precisely the Bell-local but source-non-shareable window. We show that the relevant boundary for the stronger certificates is symmetric extendibility of the source, not Bell nonlocality.

\paragraph{Werner state.} Consider the two-qubit Werner state
\[
\rho_W(p)=pP_{-}+(1-p)\frac{I}{4},\qquad
P_{-}=\frac{I-F}{2},
\]
where $F$ is the swap operator on two qubits and $p\in[0,1]$. Let $m$ be the number of colluders.

The symmetric-extendibility threshold is a known shareability result~\cite{JV2013,JSZ2022}. We include it here because it is the point at which the anti-collusion resource turns on in the quantum-source model, and because its two-qubit form leads directly to an exact closed-form resource monotone.

\begin{corollary}[Multi-colluder symmetric-extendibility threshold]\label{thm:werner-threshold}
The Werner state $\rho_W(p)$ is symmetrically $m$-extendible (Definition~\ref{def:free-set}) if and only if
\[
p\le p_m^{\ast}=\frac{m+3}{3(m+1)}.
\]
In particular, $p_1^{\ast}=2/3$, $p_2^{\ast}=5/9$, and $\lim_{m\to\infty}p_m^{\ast}=1/3$.
\end{corollary}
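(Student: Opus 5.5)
This is the $d=2$ specialization of Theorem~\ref{thm:Dm-d}. The plan is to check the parameter translation explicitly and then rebuild the Johnson--Viola criterion in this language, so that the corollary does not rest on the higher-dimensional statement alone.

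\emph{Step 1 (specialization).} Put $d=2$ in the Werner threshold of Theorem~\ref{thm:Dm-d}:
\[
p^*_m(2)=\frac{4-2+m+1}{3(m+1)}=\frac{m+3}{3(m+1)}.
\]
The sample values are then immediate: $m=1$ gives $4/6=2/3$, $m=2$ gives $5/9$, and $(m+3)/(3(m+1))\to 1/3$ as $m\to\infty$. The limit $1/3$ is the qubit Werner separability point $1/(d+1)$, consistent with Remark~\ref{rem:dim-limits}. Here $\rho_W(p)$ coincides with $\rho_W^{(2)}(p)$, since $\binom{2}{2}=1$.

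\emph{Step 2 (reduction to the symmetric case).} Suppose $\rho_W(p)$ has a symmetric extension $\rho_{AB_0\cdots B_m}$. Twirl it by $U\otimes\bar U^{\otimes(m+1)}$, or equivalently by $U^{\otimes(m+2)}$ after the local unitary relating qubit Werner and isotropic states. The twirl commutes with permutations of the $B$'s, and it leaves the marginal invariant because Werner states are $U\otimes U$-invariant. We may therefore assume the extension is $SU(2)$-invariant and permutation-symmetric on the $B$'s.

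The quantity to bound is $\Tr[F\rho_{AB_0}]=1-2p$, using $\Tr[F P_-]=-1$. By symmetry it equals the average $\frac{1}{m+1}\sum_j\Tr[F_{AB_j}\rho]$, which is an expectation of $\frac{1}{m+1}\sum_j F_{AB_j}$. Writing $F_{AB_j}=\tfrac12(I+\vec\sigma_A\cdot\vec\sigma_{B_j})$, this operator is a function of $\vec S_A\cdot\vec S_B$, where $\vec S_B$ is the total $B$-spin. Its minimum eigenvalue occurs when $B$ has maximal spin $(m+1)/2$ and $A$ couples antiparallel. That minimum gives
\[
\min\tfrac12\bigl(1+\langle\vec\sigma_A\cdot\vec\sigma_{B_j}\rangle\bigr)=\tfrac12\Bigl(1-\tfrac{m+3}{m+1}\Bigr)\cdot\ldots,
\]
and the computation should reproduce $1-2p\ge 1-2p_m^*$, i.e.\ $\langle\vec\sigma_A\cdot\vec\sigma_{B}\rangle_{\min}=-(m+3)/(m+1)\cdot\tfrac{?}{}$. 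I would simply redo the angular-momentum arithmetic carefully to get $p\le (m+3)/(3(m+1))$.

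\emph{Step 3 (achievability).} For sufficiency, take the ground state of $\vec S_A\cdot\vec S_B$ in the sector where $B$ is totally symmetric, and average it over $SU(2)$. Its $AB_0$ marginal is Werner with $p=p_m^*$. Mixing with the separable state $I/4$, which is free by Lemma~\ref{lem:free-ops-concrete}(b), then covers every $p\le p_m^*$.

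The main obstacle is the exact spin-coupling eigenvalue in Step~2. If that bookkeeping proves error-prone, I would instead cite Johnson--Viola Cor.~III.3 directly, as Theorem~\ref{thm:Dm-d} does.
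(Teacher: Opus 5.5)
Your Step~1 is a legitimate derivation by citation: the corollary is the $d=2$ case of the Johnson--Viola threshold already encoded in Theorem~\ref{thm:Dm-d}. Your Steps~2--3 follow the same route as the paper's self-contained reproof: the swap Hamiltonian $H_n=\sum_{j}F_{AB_j}$ with $n=m+1$, spin coupling, the maximal-spin sector for achievability, then mixing with $I/4$.

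However, the self-contained part, which is the point of your proposal, contains a genuine error and is left unfinished. The swap expectation is wrong. In fact $\Tr[F\rho_W(p)]=-p+(1-p)\Tr(F)/4=(1-3p)/2$, not $1-2p$, because you dropped the noise contribution $\Tr(F\,I/4)=\tfrac12$. The formula $1-2x$ is correct for the singlet fraction $x=\Tr[P_-\rho_W(p)]=(1+3p)/4$, so two parametrizations have been conflated. If you feed your relation into the correct bound $\Tr[F\rho_{AB_0}]\ge-1/n$, you get $p\le(m+2)/(2(m+1))$. That is $3/4$ for $m=1$ and $1/2$ as $m\to\infty$, which contradicts the statement. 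Writing that the computation ``should reproduce'' $p\le p_m^*$ reverse-engineers the target rather than proving it. The same wrong relation would also break your Step~3 claim that the extremal marginal has $p=p_m^*$.

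The missing computation is short. With $F_{AB_j}=\tfrac12 I+2\mathbf S_A\cdot\mathbf S_{B_j}$ and $\mathbf J_B=\sum_j\mathbf S_{B_j}$, one has $H_n=\tfrac n2 I+\mathbf J^2-\mathbf S_A^2-\mathbf J_B^2$. In the sector of total $B$-spin $j\le n/2$, its eigenvalues are $\tfrac n2+j$ (for $J=j+\tfrac12$) and $\tfrac n2-j-1$ (for $J=j-\tfrac12$). Hence $H_n\ge-I$, with equality exactly on the $J=(n-1)/2$ multiplet inside $j=n/2$.

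For necessity, permutation symmetry gives $\Tr[H_n\rho]=n\Tr[F\rho_{AB_0}]$. So $(1-3p)/2\ge-1/n$, i.e.\ $p\le(n+2)/(3n)=(m+3)/(3(m+1))$. Your guessed per-pair value $-(m+3)/(m+1)$ for $\langle\vec\sigma_A\cdot\vec\sigma_{B_j}\rangle_{\min}$ is in fact right, and $\tfrac12\bigl(1-\tfrac{m+3}{m+1}\bigr)=-\tfrac1{m+1}$ with no further factor.

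For sufficiency, note that this ground space is $n$-fold degenerate, so take its normalized projector, which is what twirling any ground state produces. Its $AB_0$ marginal is Werner with $\Tr[F\rho_{AB_0}]=-1/n$, hence $p=(1-2f)/3=p_m^*$.

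There are also two smaller slips:
\begin{itemize}
\item The $U\otimes\bar U^{\otimes(m+1)}$ twirl is the isotropic twirl and does not fix $\rho_W(p)$; use $U^{\otimes(m+2)}$. For necessity no twirl is needed at all, since $H_n\ge-I$ is an operator inequality.
\item Your fallback citation is off. Johnson--Viola Cor.~III.3 is the pairwise joinability ellipse used in Theorem~\ref{thm:exact-tomo}, and at $q=p$ it recovers only $m=1$. General $m$ needs their $1$-$n$ threshold $\Psi^-\ge-(d-1)/n$, which is what Theorem~\ref{thm:Dm-d} uses.
\end{itemize}
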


Thus Corollary~\ref{thm:werner-threshold} should not be read as a new theorem about Werner extendibility. Its role here is operational: it identifies the exact source-level anti-collusion onset once the colluder is restricted to a genuine quantum extension of the mediator's source. The one-sided $1$-to-$(m+1)$ sharability problem was solved for arbitrary local dimension by Johnson and Viola~\cite{JV2013}, and the spin/swap-Hamiltonian route we use here is the same one later developed by Jakab, Solymos and Zimbor\'as~\cite{JSZ2022}. We reprove the two-qubit case in Supplementary Note~14 for completeness. Our contribution is not the threshold itself but its operational reinterpretation: we show that this extendibility boundary is exactly the onset of anti-collusion resource, quantify it by the game-weighted trace-distance monotone $D_m(p)$ (Theorem~\ref{thm:Dm-d} at $d=2$), and lift it to a certifiable behavior-level statement under tomographically complete measurements (Theorem~\ref{thm:exact-tomo}).

The threshold $p_m^{\ast}$ lies strictly below the CHSH nonlocality threshold $p_{\rm Bell}=1/\sqrt{2}\approx0.707$ for all finite $m$. Hence, whenever
\[
\frac{m+3}{3(m+1)}<p\le\frac{1}{\sqrt{2}},
\]
the Werner source is Bell-local with respect to CHSH, but already non-shareable under quantum-source extensions. This statement is source-certified: the same observed Bell-local behavior would be freely extendible in a broadcast-enabled hidden-variable model unless the source is specified or tomographically certified.

\begin{figure}[t]
\centering
\includegraphics[width=\linewidth]{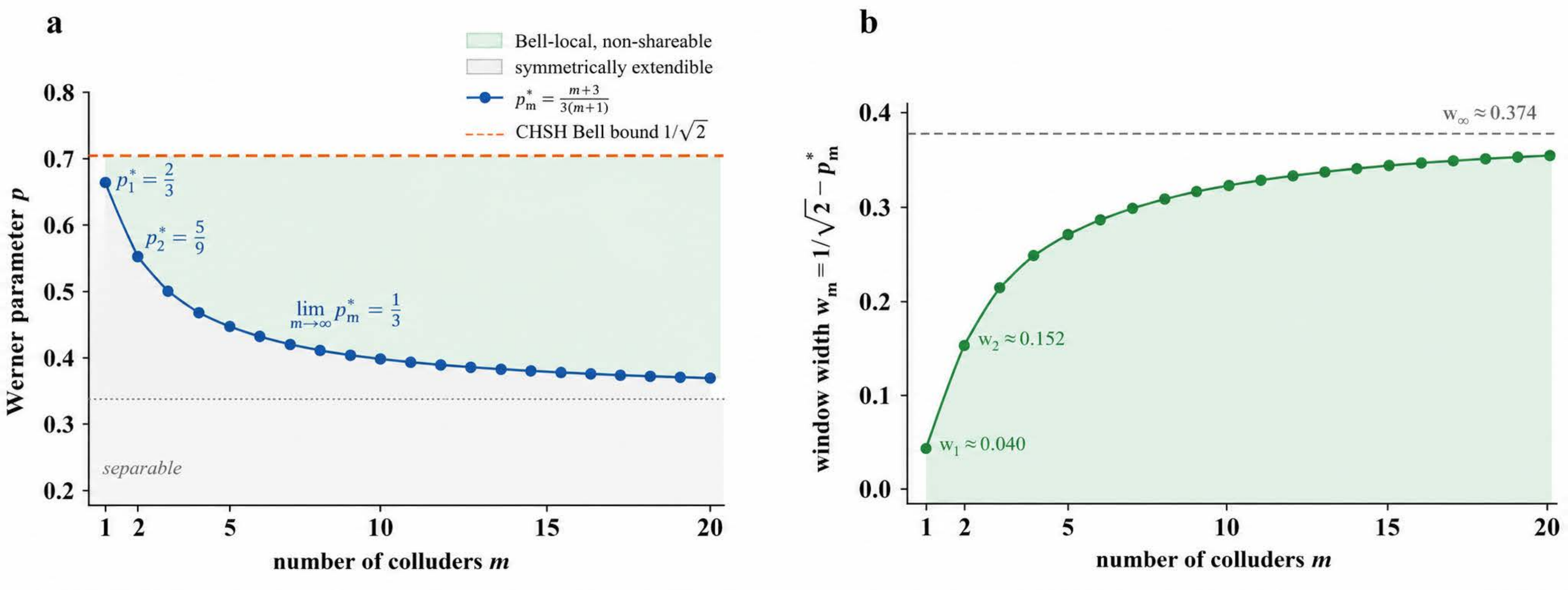}
\caption{\textbf{Symmetric-extendibility threshold and Bell-local resource window for Werner sources.} \textbf{a}, The multi-colluder symmetric-extendibility threshold $p_m^{\ast}=\frac{m+3}{3(m+1)}$ for qubits (blue) decreases monotonically with the number of colluders $m$, while the CHSH Bell-local bound $1/\sqrt{2}$ (orange, dashed) remains constant. The green shaded region between them is the Bell-local but source-non-shareable window. Theorem~\ref{thm:Dm-d} generalizes the exact threshold and trace-distance resource to arbitrary local dimension $d$. The gray region below $p_m^{\ast}$ is symmetrically extendible; the light-gray region below $1/3$ is separable. \textbf{b}, Width $w_m=1/\sqrt{2}-p_m^{\ast}$ of the Bell-local non-shareable window as a function of $m$. For one colluder $w_1\approx0.040$; for two colluders $w_2\approx0.152$; in the limit $m\to\infty$ the width approaches $w_{\infty}\approx0.374$, covering almost the entire entangled-but-Bell-local range.}
\label{fig:werner-thresholds}
\end{figure}

At $d=2$, Theorem~\ref{thm:Dm-d} reduces to the qubit formula
\[
D_m(p)=\frac{3}{4}\Bigl(p-\frac{m+3}{3(m+1)}\Bigr)_+,
\]
where $(x)_+=\max\{x,0\}$. The proof uses Werner twirling and the eigenstructure of $P_{-}-I/4$ (Supplementary Note~15). The resource rises linearly from zero at $p_m^{\ast}$ and passes through the Bell point $p=1/\sqrt{2}$ with no kink, confirming that the resource onset is governed by extendibility, not by the Bell boundary.

\paragraph{Bell-local but source-non-shareable window.} The maximal CHSH value for Werner states is $S_{\max}(p)=2\sqrt{2}\,p$, so the CHSH Bell-nonlocality threshold is $p>1/\sqrt{2}$. By Corollary~\ref{thm:werner-threshold}, source non-shareability begins at $p>p_m^{\ast}$. The resulting Bell-local but source-non-shareable interval is
\[
\boxed{\;\frac{m+3}{3(m+1)}<p\le\frac{1}{\sqrt{2}}.\;}
\]
For one colluder this is $(2/3,1/\sqrt{2}]$, width $\approx0.040$; for two colluders it is $(5/9,1/\sqrt{2}]$, width $\approx0.152$; in the many-colluder limit it approaches $(1/3,1/\sqrt{2}]$, covering almost the entire entangled-but-Bell-local Werner range.

\begin{figure}[t]
\centering
\includegraphics[width=\linewidth]{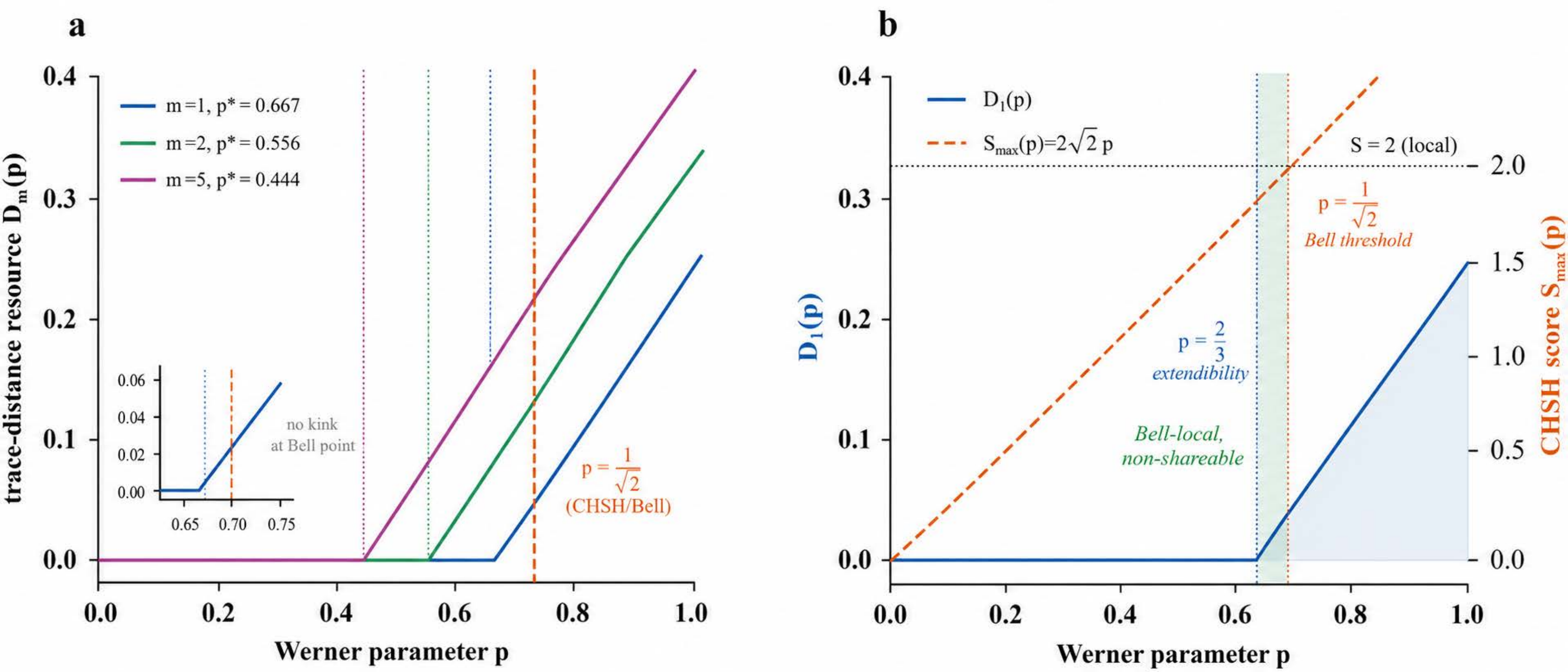}
\caption{\textbf{Exact trace-distance resource on the Werner line.} \textbf{a}, The qubit trace-distance resource $D_m(p)=\frac{3}{4}(p-p_m^{\ast})_+$ for $m=1$ (blue), $m=2$ (green) and $m=5$ (purple). Theorem~\ref{thm:Dm-d} gives the exact resource $c_W(d)(p-p_m^{\ast})_+$ with $c_W(d)=(d+1)/2d$ for arbitrary local dimension, recovering $3/4$ at $d=2$. Each curve is a straight line rising from zero at its corresponding extendibility threshold $p_m^{\ast}$. The vertical orange dashed line marks the CHSH Bell threshold $p=1/\sqrt{2}$; all curves pass through it smoothly with no kink, confirming that the resource onset is governed by symmetric extendibility, not by the Bell boundary. The inset zooms into the Bell-point neighbourhood. \textbf{b}, Direct comparison of the $m=1$ resource $D_1(p)$ (blue, left axis) and the maximal CHSH score $S_{\max}(p)=2\sqrt{2}\,p$ (orange, dashed, right axis). The two onsets occur at different thresholds ($p=2/3$ versus $p=1/\sqrt{2}$), and the green shaded interval $(2/3,1/\sqrt{2}]$ is the Bell-local but source-non-shareable window. Theorem~\ref{thm:exact-tomo} lifts this to the exact tomographic capacity $C_1^{\mathrm{tomo}}(p)=(1/12)[(3p-1)-\sqrt{(3p+1)(1-p)}]$, removing the previous conservative constant.}
\label{fig:trace-distance}
\end{figure}

\begin{theorem}[Exact tomographic capacity, one colluder]\label{thm:exact-tomo}
For $p\in[\tfrac23,1]$,
\[
C^{\mathrm{tomo}}_1(p)
\;=\;\frac{1}{12}\Bigl[(3p-1)-\sqrt{(3p+1)(1-p)}\,\Bigr],
\]
and $C^{\mathrm{tomo}}_1(p)=0$ for $p\le\tfrac23$. The onset is at
$p^{*}_1=\tfrac23$, $C^{\mathrm{tomo}}_1(1)=\tfrac16$, and the curve rises
with vertical tangent at $p=1$. This exceeds the conservative bound
$\tfrac1{6\sqrt{15}}(p-\tfrac23)$ by a factor $\approx 8$ near onset,
growing to $\approx 11.6$ at $p=1$.
\end{theorem}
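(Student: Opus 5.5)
The plan is to reduce the capacity to a one-parameter optimisation over Werner colluders and then read off the answer from the Johnson--Viola joinability region. I will work in the tomographic setting, where players $1,2,3$ each perform the three Pauli measurements $\sigma_X,\sigma_Y,\sigma_Z$ on their qubit, with uniform input measure $\mu$ on the $9$ input pairs. In this setting the colluder's system is a copy of $B$ in a quantum extension, and he measures it with the same Pauli devices; I am assuming this is what ``tomographically complete Pauli measurements'' fixes. Then
\[
C^{\rm tomo}_1(p)=\inf\{d_{\rm TV}(P_{\rho_W(p)},P_{\rho_{AC}})\},
\]
where the infimum runs over all $\rho_{AC}$ arising as the $AC$-marginal of a tripartite state whose $AB$-marginal is $\rho_W(p)$. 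By Theorem~\ref{thm:distance}, this distance to the collusive shadow is the capacity.

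\textbf{Step 1 (symmetry reduction).} The set of admissible $\rho_{AC}$ is convex, because the set of tripartite extensions is convex. It is also invariant under $\rho_{AC}\mapsto (U\otimes U)\rho_{AC}(U\otimes U)^\dagger$ for $U$ in the single-qubit Clifford group. To see this, apply $U\otimes U\otimes U$ to the extension: $\rho_W(p)$ is $U\otimes U$-invariant, so the $AB$-marginal is unchanged. Clifford conjugation permutes $\{\sigma_X,\sigma_Y,\sigma_Z\}$ up to signs. Hence it acts on Pauli behaviours as a relabelling of inputs and outputs that fixes $P_{\rho_W(p)}$ and preserves the $\mu$-weighted TV distance. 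The TV distance is convex, and the Clifford twirl $U\otimes U$ is a unitary $2$-design, so twirling maps any $\rho_{AC}$ to a Werner state $\rho_W(q)$ without increasing the distance. The infimum may therefore be restricted to Werner colluders $\rho_W(q)$ that are jointly compatible with $\rho_W(p)$.

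\textbf{Step 2 (distance on the Werner line).} The Pauli behaviour of $\rho_W(q)$ is $\tfrac14(1-q\,\delta_{ij}ab)$, as in Proposition~\ref{prop:c1-local}. Only the $3$ diagonal input pairs differ between $p$ and $q$. On each of them, $\tfrac12\sum_{a,b}\tfrac14|p-q|=\tfrac12|p-q|$. Weighting by $3/9$ gives
\[
d_{\rm TV}=\tfrac16|p-q|,
\]
so $C^{\rm tomo}_1(p)=\tfrac16\min_{q\ \text{joinable with }p}|p-q|$.

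\textbf{Step 3 (joinability region; main obstacle).} I need the set of $q$ for which a three-qubit state with $\rho_{AB}=\rho_W(p)$ and $\rho_{AC}=\rho_W(q)$ exists. Twirling with $U^{\otimes3}$ makes the extension $SU(2)$-invariant. It then lives in the algebra spanned by permutations of three qubits, decomposed into spin-$\tfrac32$ and two spin-$\tfrac12$ blocks. The two pairwise singlet weights become expectations of the swap operators $F_{AB}$ and $F_{AC}$. Positivity of the $2\times2$ spin-$\tfrac12$ block yields a quadratic constraint, which is the Johnson--Viola Cor.~III.3 region as used in Theorem~\ref{thm:c1}.
\begin{enumerate}
\item[(a)] For $p\le\tfrac23$, the point $q=p$ lies in the region by Corollary~\ref{thm:werner-threshold}, so the capacity vanishes.
\item[(b)] For $p>\tfrac23$, the closest point $q^\star<p$ lies on the boundary ellipse. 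Solving the boundary equation for $q$ gives
\[
p-q^\star=\tfrac12\Bigl[(3p-1)-\sqrt{(3p+1)(1-p)}\Bigr],
\]
and hence $C^{\rm tomo}_1(p)=\tfrac1{12}[\cdots]$.
\end{enumerate}
The delicate part is confirming that the nearest boundary point is on the branch $q<p$ and that the region is convex in $q$ for fixed $p$. Convexity follows from convexity of the extension set, so the admissible $q$ form an interval and the minimiser is its endpoint nearer to $p$.

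\textbf{Step 4 (endpoint properties).} At $p=\tfrac23$ the bracket is $1-\sqrt3\cdot\sqrt{1/3}=0$, which confirms the onset. At $p=1$ the bracket is $2$, so $C^{\rm tomo}_1(1)=\tfrac16$. The derivative of $\sqrt{(3p+1)(1-p)}$ diverges as $p\to1$, which gives the vertical tangent. The comparison with $\tfrac1{6\sqrt{15}}(p-\tfrac23)$ follows by evaluating the ratio of the two expressions near $p=\tfrac23$ (via the slope, giving $\approx8$) and at $p=1$ (giving $\approx11.6$).
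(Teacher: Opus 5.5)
Your proposal is correct and follows essentially the same route as the paper's proof. Both arguments use an octahedral (single-qubit Clifford) $U\otimes U\otimes U$ twirl to reduce the colluder to a Werner state $\rho_W(q)$, obtain the distance $\tfrac16|p-q|$ from the three aligned Pauli settings alone, and locate the closest joinable $q^\star$ via the Johnson--Viola Cor.~III.3 ellipse $\Psi(p)^2+\Psi(q)^2+\Psi(p)\Psi(q)\le\tfrac34$. The branch question you flag as delicate is settled by noting that $q=0$ is always admissible (take $\rho_{ABC}=\rho_W(p)\otimes I/2$): once $q=p$ is excluded for $p>\tfrac23$, the admissible interval lies entirely below $p$, so $q^\star$ is its upper endpoint. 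Your endpoint checks (onset at $\tfrac23$, $C_1^{\rm tomo}(1)=\tfrac16$, the vertical tangent, the onset slope ratio $2\sqrt{15}\approx 7.7$, and the ratio $\approx 11.6$ at $p=1$) go slightly beyond what the paper's proof spells out.
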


In light of the state-level resource theory, $C_1^{\mathrm{tomo}}$ may be viewed as the behavior-level projection, under the fixed tomographic measurement set, of the state-level unextendible-entanglement measures of~\cite{WangWilde2024}.

\begin{proof}
See Supplementary Note~17.
\end{proof}

\begin{remark}[Numerical certification and general $m$]
\label{rem:tomo-sdp}
A direct semidefinite program minimising $d_{\mathrm{TV}}$ over all
tripartite extensions --- making no Werner assumption --- reproduces the
closed form to five decimals throughout $p\in[\tfrac23,1]$. For $m$ colluders the capacity is the analogous SDP; its resource onset occurs exactly at the extendibility
threshold $p^{*}_m=(m+3)/[3(m+1)]$ (verified, e.g.\ $p^{*}_2=5/9$). The
one-colluder case is the one admitting the closed form above.
\end{remark}

\begin{figure}[t]
\centering
\includegraphics[width=0.75\linewidth]{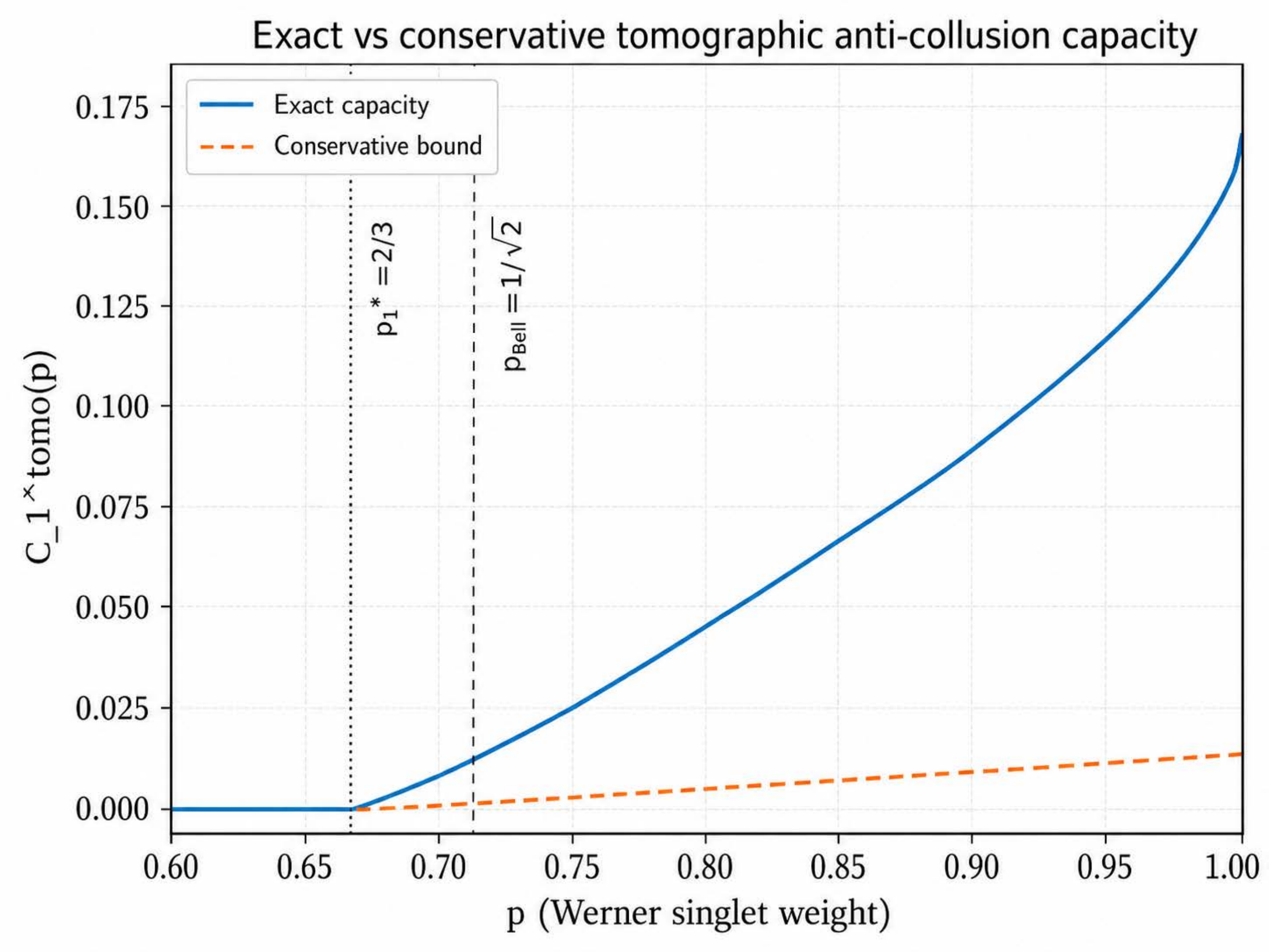}
\caption{\textbf{Exact versus conservative tomographic anti-collusion capacity.} The exact one-colluder capacity $C_1^{\mathrm{tomo}}(p)=(1/12)[(3p-1)-\sqrt{(3p+1)(1-p)}]$ (blue) replaces the previous intentionally conservative lower bound $(p-2/3)/(6\sqrt{15})$ (orange dashed). The exact curve exceeds the conservative bound by a factor $\approx8$ near onset, growing to $\approx11.6$ at $p=1$. The onset is exactly at the extendibility threshold $p^*_1=2/3$ (dotted); the CHSH Bell threshold $1/\sqrt{2}$ (dashed) lies strictly above.}
\label{fig:exact-tomo}
\end{figure}

\subsection{Supporting diagnostics and score-only certification}\label{sec:diagnostics}

The preceding sections established a separation theorem, an ordering crossing, and exact calibrations. Here we collect the supporting technical results: score-only certification, multi-round composition, and finite-data diagnostics.

\paragraph{Score-only CHSH frontier.}\label{sec:frontier} If the verifier observes only a single CHSH score, the strongest universal certificate is constrained by Bell-correlation monogamy. In this weaker certification model the onset coincides with the Bell-local boundary, which is exactly why a finer (source-level or tomographic) certificate is needed. The Toner--Verstraete monogamy inequality $S_{12}^2+S_{13}^2\le 8$ gives the exact score-certified frontier $\Gamma_{\rm CHSH}^{+}(s)=[(s-\sqrt{8-s^2})/8]_+$, which is positive if and only if $s>2$. The full derivation, tightness construction, and game separation are given in Supplementary Notes~6--8.

\paragraph{Multi-round security.} The single-shot anti-collusion capacity governs multi-instance guarantees.

\begin{lemma}[Non-shareability persists under independent repetition]\label{lem:multi-round-floor}
For every $n\ge1$ and every extension class $\mathsf R$ closed under marginalization,
\[
\mathcal C_{\rm AC}^{\mathsf R}(P_{12}^{\otimes n})\ge\mathcal C_{\rm AC}^{\mathsf R}(P_{12}).
\]
In particular, if the single instance is non-shareable, then the $n$-instance behavior is non-shareable for every $n$, even against colluders using arbitrary cross-round correlations.
\end{lemma}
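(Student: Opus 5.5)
The plan is a data-processing (marginalization) argument: restrict any $n$-instance collusive extension to a single round and compare with the single-instance shadow. Throughout, $\mathcal C_{\rm AC}^{\mathsf R}$ is read through the distance characterization of Theorem~\ref{thm:distance},
\[
\mathcal C_{\rm AC}^{\mathsf R}(P)=\inf_{Q\in\mathcal S_{\mathsf R}(P)}d_{\rm TV}(P,Q).
\]
The input measure on $n$ rounds is the product measure $\mu^{\otimes n}$. If $\mathcal C_{\rm AC}^{\mathsf R}$ is instead defined variationally as $\sup_h\inf_Q\langle h,P-Q\rangle$, I will work with that definition directly; the argument below is the same.

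\textbf{Step 1 (marginalize the shadow).} Fix an arbitrary $Q^{(n)}\in\mathcal S_{\mathsf R}(P_{12}^{\otimes n})$. By definition it equals $\Pi_{13\to12}P_{13}^{(n)}$ for some $P_{123}^{(n)}\in{\rm Ext}_{\mathsf R}(P_{12}^{\otimes n})$, a tripartite behavior on $n$-fold alphabets whose $12$-marginal is $P_{12}^{\otimes n}$.
\begin{itemize}
\item Restrict $P_{123}^{(n)}$ to round $1$: sum over the outputs of rounds $2,\dots,n$ for all three players. No-signalling makes the result independent of the inputs of those rounds, so they can be fixed arbitrarily or averaged against $\mu^{\otimes(n-1)}$.
\item Call the result $P_{123}^{[1]}$. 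Closure of $\mathsf R$ under marginalization gives $P_{123}^{[1]}\in\mathsf R$.
\item Its $12$-marginal is the round-$1$ marginal of $P_{12}^{\otimes n}$, which is $P_{12}$. Hence $P_{123}^{[1]}\in{\rm Ext}_{\mathsf R}(P_{12})$.
\item Marginalization commutes with tracing out player $2$ and with the relabelling $\Pi_{13\to12}$. Therefore the round-$1$ marginal $Q^{[1]}$ of $Q^{(n)}$ lies in $\mathcal S_{\mathsf R}(P_{12})$.
\end{itemize}
This is the point where cross-round correlations of the colluder become harmless, since they are simply discarded.

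\textbf{Step 2 (contractivity).} The round-$1$ marginalization is a stochastic map on behaviors. The input-weighted total-variation distance does not increase under it, because
\[
\sum_{t}\mu^{\otimes n}(t)\,\tfrac12\sum_x|\cdot|
\]
only decreases when outputs of rounds $2,\dots,n$ are summed inside the absolute value. The marginal of $P_{12}^{\otimes n}$ is $P_{12}$ and the marginal of $Q^{(n)}$ is $Q^{[1]}$, so
\[
d_{\rm TV}(P_{12}^{\otimes n},Q^{(n)})\ge d_{\rm TV}(P_{12},Q^{[1]})\ge\mathcal C_{\rm AC}^{\mathsf R}(P_{12}).
\]
Taking the infimum over $Q^{(n)}$ gives the claim. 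In the variational form the same conclusion follows by choosing a test function $h$ that acts only on round $1$ and using Step 1.

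\textbf{The "in particular" clause.} Non-shareability means positive capacity, which by faithfulness means the point lies outside the closed shadow. Positivity therefore propagates to every $n$, and the argument never used a product structure for the colluder.

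\textbf{Main obstacle.} The hard part is bookkeeping of no-signalling and input measures in Step 1. I must check that the single-round marginal is well defined, that is, independent of the other rounds' inputs; this requires the no-signalling condition~\eqref{eq:no-signalling-full} applied to the grouped round variables. I must also check that "closed under marginalization" is meant to include discarding whole rounds, possibly after fixing their inputs. For the quantum classes this is clear: trace out the other rounds' outcome registers with their POVMs summed to identity, and the result stays in $\Qd$ or $\Qcl$. For $\mathsf C$ and $\mathsf{NS}$ it is immediate.
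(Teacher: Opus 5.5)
Your marginalization argument is correct. It matches the reasoning the paper relies on: the paper gives no standalone proof of this lemma, but its de~Finetti note uses exactly the fact that every round-marginal of an $n$-round colluder lies in the single-round shadow and is caught by a one-round witness.

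One step needs a different justification. Inter-party no-signalling does not make the round-$1$ marginal independent of the other rounds' inputs. A party that receives all $n$ types at once (for instance the colluder) may let its round-$1$ output depend on later-round types.

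This does not break the proof, because you never need that independence. Condition instead on the other-round inputs $s$. For each fixed $s$, the round-$1$ marginal $Q_s$ lies in $\mathcal S_{\mathsf R}(P_{12})$, for two reasons: fixing inputs and summing outputs are local operations, and $P_{12}^{\otimes n}$ has no cross-round dependence, so the authorized marginal is still $P_{12}$. Then
$d_{\rm TV}(P_{12}^{\otimes n},Q^{(n)})\ge\sum_s\mu^{\otimes(n-1)}(s)\,d_{\rm TV}(P_{12},Q_s)\ge\mathcal C_{\rm AC}^{\mathsf R}(P_{12})$.
The same averaging over $s$ works in the variational form with a round-$1$ test function. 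A single arbitrarily fixed $s$ would not justify your Step~2 inequality.
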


\begin{proposition}[Exponential detection, memoryless colluders]\label{prop:exp-detection}
Let the colluder use any product strategy $Q=\bigotimes_{k=1}^{n}Q^{(k)}$ with $Q^{(k)}\in\mathcal S_{\mathsf R}(P_{12})$. Then
\[
d_{\rm TV}(P_{12}^{\otimes n},Q)\ge1-(1-\mathcal C_{\rm AC}^{\mathsf R}(P_{12})^2)^{n/2}.
\]
Equivalently, the soundness error decays exponentially with rate set by the single-shot capacity.
\end{proposition}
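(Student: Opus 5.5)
The plan is to pass from total-variation distance to the Bhattacharyya coefficient, which is multiplicative over tensor products, and then come back with the Fuchs--van de Graaf inequalities. First I would turn behaviors into ordinary probability distributions so that classical inequalities apply. Given the input measure $\mu$ carried by the kernel (as in the remark after Theorem~\ref{thm:distance}), identify a behavior $P$ with the joint distribution $\hat P(t,x)=\mu(t)P(x|t)$. Since $P$ and $Q$ share the same input marginal $\mu$, the input-weighted distance $d_{\rm TV}(P,Q)=\sum_t\mu(t)\tfrac12\sum_x|P(x|t)-Q(x|t)|$ equals the ordinary total-variation distance between $\hat P$ and $\hat Q$. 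For $n$ instances I take the product input measure $\mu^{\otimes n}$. Then $\widehat{P_{12}^{\otimes n}}=\bigotimes_k\hat P_{12}$ and $\hat Q=\bigotimes_k\widehat{Q^{(k)}}$, so the problem reduces to product distributions.

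Next I would use the Bhattacharyya coefficient $B(p,q)=\sum_y\sqrt{p(y)q(y)}$ together with the two classical Fuchs--van de Graaf bounds:
\[
1-B(p,q)\;\le\; d_{\rm TV}(p,q)\;\le\;\sqrt{1-B(p,q)^2}.
\]
The key structural fact is multiplicativity, $B\bigl(\bigotimes_k p_k,\bigotimes_k q_k\bigr)=\prod_k B(p_k,q_k)$, which follows by factoring the sum over product outcomes.

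Then I would bound each factor using the single-shot capacity. Because $Q^{(k)}\in\mathcal S_{\mathsf R}(P_{12})$, Theorem~\ref{thm:distance} (or directly the definition of $\mathcal C_{\rm AC}^{\mathsf R}$ as the infimum over the shadow, Remark~\ref{rem:behavior-level}) gives $d_{\rm TV}(P_{12},Q^{(k)})\ge C:=\mathcal C_{\rm AC}^{\mathsf R}(P_{12})$. The upper Fuchs--van de Graaf inequality then yields $B(\hat P_{12},\widehat{Q^{(k)}})\le\sqrt{1-C^2}$. Multiplying over $k$ gives $B(\widehat{P_{12}^{\otimes n}},\hat Q)\le(1-C^2)^{n/2}$. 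The lower inequality applied to the $n$-fold distributions then gives $d_{\rm TV}(P_{12}^{\otimes n},Q)\ge1-(1-C^2)^{n/2}$, which is the claim. Exponential decay of the soundness error $1-d_{\rm TV}$ is immediate, with rate $-\tfrac12\log(1-C^2)\ge C^2/2$.

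The only step needing care is the first: confirming that the input-weighted $d_{\rm TV}$ on product behaviors coincides with the plain total-variation distance of the lifted joint distributions under $\mu^{\otimes n}$. This requires each $Q^{(k)}$ to be evaluated on the same input alphabet and measure as $P_{12}$, which the relabelling $\Pi_{13\to12}$ in the definition of the shadow guarantees. If the capacity were normalised with a different input measure, I would instead state the bound with $C$ computed under that same $\mu$, since the argument is otherwise insensitive to the choice.
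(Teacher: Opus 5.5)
Your argument is correct and complete. The paper states this proposition without a proof, so there is no argument of its own to compare against. The shape of the bound, $(1-C^2)^{n/2}$, is exactly a product of per-round estimates $B\le\sqrt{1-C^2}$ on the Bhattacharyya coefficient, followed by $d_{\rm TV}\ge 1-B$. Your route (lift to joint distributions, multiply Bhattacharyya coefficients, then apply Fuchs--van de Graaf) is therefore evidently the intended one.

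Two points are worth recording.

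First, the per-round step does not need Theorem~\ref{thm:distance} or its compactness hypotheses. For every $0\le h\le1$ and every $Q^{(k)}\in\mathcal S_{\mathsf R}(P_{12})$ one has $\langle h,P_{12}\rangle-\sup_{Q\in\mathcal S_{\mathsf R}(P_{12})}\langle h,Q\rangle\le\langle h,P_{12}-Q^{(k)}\rangle\le d_{\rm TV}(P_{12},Q^{(k)})$. Hence $d_{\rm TV}(P_{12},Q^{(k)})\ge\mathcal C^{\mathsf R}_{\rm AC}(P_{12})$ holds by weak duality for any extension class, including $\Qfin$, whichever of the paper's two definitions of the capacity you adopt.

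Second, the product input measure $\mu^{\otimes n}$ that you flag is genuinely necessary, not just a normalisation choice. In the tomographic Werner setting, the optimal colluder $\rho_W(q^\star)$ differs from $P_{12}$ only on the three aligned settings out of nine. If all $n$ rounds shared one random setting pair, the $n$-round distance could never exceed $1/3$, and the stated bound would fail for large $n$. Keeping that assumption explicit, as you do, is the right call.
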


\begin{theorem}[Vanishing soundness against collective colluders]\label{thm:b2a}
Let $C=\mathcal C_{\rm AC}^{\mathsf R}(P_{12})>0$ and let the colluder be permutation-symmetric across $n$ rounds, with per-round joint dimension $D$. Then
\[
d_{\rm TV}(P_{12}^{\otimes n},Q^{(n)})\ge 1-\beta_n,\qquad \beta_n=\tilde O\!\left(\frac{D^2\log n}{C^2 n}\right).
\]
The soundness error against the collective class therefore vanishes polynomially in $n$, with the per-round detection rate set by the single-shot capacity $C$.
\end{theorem}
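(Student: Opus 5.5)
The plan is a de Finetti-type reduction of permutation-symmetric colluders to mixtures of memoryless ones, after which Proposition~\ref{prop:exp-detection} applies. Let $Q^{(n)}$ be the colluder's $n$-round behaviour, arising from a permutation-symmetric extension whose per-round joint system has dimension $D$. By the post-selection or finite quantum de Finetti theorem (Christandl--K\"onig--Renner), the marginal of such a state on $n-k$ rounds is close to a mixture $\int \mu(d\tau)\,\tau^{\otimes(n-k)}$ of i.i.d.\ states. The error is $O(D^2k/n)$ in trace distance, or one can use the operator inequality $\rho^{(n)}\le \binom{n+D^2-1}{n}\int \tau^{\otimes n}d\tau$, whose polynomial factor $n^{D^2}$ is the source of the $D^2\log n$. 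Each $\tau$ induces, after tracing out player 2 and applying the colluder's fixed per-round measurements, a single-round collusive behaviour. Because the authorized marginal of every round must equal $P_{12}$ (the extension constraint holds round by round), each i.i.d.\ component can be taken to satisfy this constraint approximately. Its relabelled 1--3 marginal is then within the de Finetti error of an element of $\mathcal S_{\mathsf R}(P_{12})$, using compactness of ${\rm Ext}_{\mathsf R}$ from Theorem~\ref{thm:distance}.

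For each product component $Q_\tau^{\otimes(n-k)}$ with $Q_\tau\in\mathcal S_{\mathsf R}(P_{12})$, Proposition~\ref{prop:exp-detection} gives
\[
d_{\rm TV}\bigl(P_{12}^{\otimes(n-k)},Q_\tau^{\otimes(n-k)}\bigr)\ge 1-(1-C^2)^{(n-k)/2}.
\]
The proposition is stated through Bhattacharyya/fidelity multiplicativity, and it is exactly this quantity I would use. Since $1-d_{\rm TV}$ is bounded by the overlap $\sum\min$, which is concave under mixing, the bound passes to the mixture: the overlap of $P^{\otimes}$ with $\int Q_\tau^{\otimes}d\mu$ is at most $\int$ of the individual overlaps. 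The marginal is then handled by monotonicity of $d_{\rm TV}$ under discarding $k$ rounds, via the data-processing lemma used in Lemma~\ref{lem:multi-round-floor}:
\[
d_{\rm TV}\bigl(P^{\otimes n},Q^{(n)}\bigr)\ge d_{\rm TV}\bigl(P^{\otimes(n-k)},Q^{(n)}_{n-k}\bigr)\ge 1-(1-C^2)^{(n-k)/2}-\epsilon_{\rm dF}(k,n,D).
\]

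The final step optimizes $k$. The exponential term is below $1/n$ once $n-k\gtrsim \log n/C^2$. The de Finetti error scales like $D^2 k/n$, possibly after one more application of Proposition~\ref{prop:exp-detection}-style boosting, where an approximate shadow element at distance $\delta$ reduces the effective capacity to $C-\delta$. Choosing $k\approx \log n/C^2$ yields $\beta_n=\tilde O(D^2\log n/(C^2n))$.

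I expect the main obstacle to be the second step, namely ensuring that de Finetti components genuinely lie in (or near) the collusive shadow. De Finetti controls the joint state, but the authorized marginal constraint is imposed only on the average, not on each $\tau$. I would first try to handle this with a robustness estimate. A component $\tau$ whose authorized marginal deviates by $\eta$ from $P_{12}$ is itself detected by the authorized statistics, since it disagrees with $P_{12}$ in distance. Components with small $\eta$ are then mapped to the shadow using continuity of the distance to $\mathcal S_{\mathsf R}(P_{12})$, which holds by compactness and convexity. This splits $\mu$ into good and bad parts, each of which contributes at most $\tilde O(\cdot)$.
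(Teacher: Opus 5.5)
Your architecture (de Finetti reduction to an i.i.d.\ mixture, exponential detection of each component, a good/bad split, then optimizing $k$) is broadly the paper's, but two steps fail as written.

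\textbf{The de Finetti estimate is inverted.} The finite de Finetti theorem controls the marginal on the $k$ \emph{retained} rounds, with error $O(D^2k/n)$; the paper uses $4D^2k/n$ on $k$ tested rounds. The marginal on $n-k$ rounds has error $O(D^2(n-k)/n)$, which is vacuous when $k\ll n$. This is not just a weak bound. The uniform distribution on $n$-bit strings with exactly $n/2$ ones has an $(n-1)$-round marginal bounded away from every mixture of i.i.d.\ laws. With your choice of discarding $k\approx\log n/C^2$ rounds, $\epsilon_{\mathrm{dF}}=O(1)$ and your final inequality says nothing. The repair is to keep only $k\approx\log n/C^2$ tested rounds and discard the rest, exactly as the paper does. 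The $D^2\log n$ then comes from $4D^2k/n$, not from the post-selection factor $n^{D^2}$. The post-selection inequality is also unsuitable here: it integrates over all $\tau$ with a universal measure, so it gives no per-component shadow membership at all.

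\textbf{The passage from components to the mixture goes the wrong way.} The overlap $\sum_x\min(p_x,q_x)$ is concave in $q$. So the overlap of $P^{\otimes k}$ with $\int\mu(d\tau)\,Q_\tau^{\otimes k}$ is at \emph{least} the average of the component overlaps, and per-component distinguishability does not transfer to mixtures. For example, take $P=(\tfrac13,\tfrac13,\tfrac13)$, $Q_1=(\tfrac13-C,\tfrac23+C,0)$ and $Q_2=(\tfrac13-C,0,\tfrac23+C)$. Both lie in the convex set $\{q_0\le\tfrac13-C\}$, which is at distance $C$ from $P$. Each $Q_i$ has overlap $\tfrac23-C$ with $P$, but their average has overlap $1-C$. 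The Bhattacharyya bound behind Proposition~\ref{prop:exp-detection} has the same defect, so it cannot be applied componentwise and then averaged.

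The missing idea is a \emph{single} test that works uniformly on all good components. Take the optimal witness $h^*$ from the minimax in Theorem~\ref{thm:distance}, so that $\langle h^*,Q\rangle\le\mu_P-C$ for every $Q\in\mathcal S_{\mathsf R}(P_{12})$. Threshold the empirical average $\hat H$ of $h^*$ over the $k$ tested rounds at $\mu_P-C/2$. Hoeffding then bounds both error probabilities by $e^{-\Omega(kC^2)}$, for $P^{\otimes k}$ and for every good $Q_\tau^{\otimes k}$. Because the probability of a fixed event is linear in the state, the bound passes through the mixture and the de Finetti approximation at additive cost $\mu(\mathrm{bad})+4D^2k/n$.

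\textbf{The good/bad split needs a different mechanism.} The split itself is the right idea, but bad components are not ``detected by the authorized statistics''. Those statistics are fixed to $P_{12}^{\otimes n}$ exactly and are not part of the distinguishing experiment. Bad components carry small weight because $\rho_{AB}^{\otimes k}$ is an i.i.d.\ point, so a de Finetti mixture reproducing it must concentrate on $\tau_{AB}\approx\rho_{AB}$. Quantitatively, comparing one- and two-round marginals through the swap gives $\int\mu(d\tau)\,\|\tau_{AB}-\rho_{AB}\|_2^2=O(D^2/n)$. Components with $\tau_{AB}$ near $\rho_{AB}$ then have $Q_\tau$ near the shadow by upper hemicontinuity (via compactness) of the map from the authorized marginal to its extension set. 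Continuity of the distance to the fixed set $\mathcal S_{\mathsf R}(P_{12})$ is not the relevant property. The paper's sketch is equally terse on this step, so it must be made explicit in either route.
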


A security rate valid against coherent, memory-bearing colluders (via entropy accumulation) is left to future work.

\paragraph{Finite-data and NPA diagnostics.} The monotone certification map $s\mapsto\Gamma^+_{\rm CHSH}(s)$ is compatible with any lower-confidence-bound estimator, so finite-data certification is immediate (see Methods). Beyond the analytically solvable CHSH and Werner cases, collusive vulnerability is an extendibility SDP; finite-level NPA gives diagnostic outer bounds (Supplementary Note~12).

\section{Discussion}\label{sec:discussion}

The picture that emerges is that strategic non-shareability is a resource in the technical sense: a free set (the symmetrically-extendible sources), a class of free operations that cannot create it, and a faithful, convex monotone $D_m$ that is non-increasing under those operations and exactly computable on the Werner and isotropic lines in any dimension. Its onset is dimension-gated---a positive one-colluder resource is special to qubits, and $d$-dimensional sources require $m\ge d-1$ colluders before the resource appears---reflecting that the maximally-entangled qudit Werner state is itself $(d-1)$-shareable.

The two orderings strictly cross (Theorem~\ref{thm:reversal}), so shareability and Bell nonlocality are genuinely independent resources. This is the upgrade from ``a nested window exists'' to ``the orderings cross'': the anti-collusion resource is not a monotone function of any Bell-nonlocality measure.

The single-shot capacity controls multi-round guarantees. For $n$ independent instances the joint capacity never falls below the single-shot value, even against colluders that correlate across rounds (Lemma~\ref{lem:multi-round-floor}). Against memoryless colluders the distinguishing advantage grows exponentially (Proposition~\ref{prop:exp-detection}); against collective (permutation-symmetric) colluders a de~Finetti reduction gives soundness error $\tilde O(D^2\log n/(C^2 n))$, vanishing in $n$ (Theorem~\ref{thm:b2a}). A security rate valid against coherent, memory-bearing colluders (via entropy accumulation) is left to future work.

For the observed (aligned) behaviour, device-independent certification provably yields nothing: the aligned Werner behaviour has an LHV model (Proposition~\ref{prop:c1-local}), hence is DI-extendible for every $p$. Yet a source-certified quantum guarantee exists, is operationally useful---the coordination game $G_{\mathrm{aligned}}$ gives a strictly positive advantage over every quantum-source colluder for $p>2/3$ that persists and amplifies over rounds, while the broadcast/DI advantage is identically zero (Theorem~\ref{thm:c1}). The source assumption therefore buys a concrete, certifiable functionality that DI cannot deliver here.

\paragraph{Limitations.} These are source-certified, not device-independent: below the Bell bound a hidden variable can be broadcast; the positive window requires a specified or tomographically reconstructed source. The tilted-CHSH NPA curves are finite-level diagnostic outer bounds, not certified frontiers. A fully memory-tolerant security rate valid against coherent, memory-bearing colluders (via entropy accumulation), and certified (not merely diagnostic) tilted-CHSH frontiers with matching strategies, remain open.

Future work should close the gap between NPA upper bounds and explicit lower-bound strategies for non-CHSH inequalities, develop fully memory-tolerant device-independent confidence analyses, and extend the exact capacity formula to arbitrary $m$ colluders. Experimentally, the relevant task is not simply to violate a Bell inequality, but to certify a positive shareability deficit from observed authorized correlations. Candidate platforms include photonic polarization qubits, trapped ions, superconducting qubits, or nitrogen-vacancy based networks.
\section{Methods}\label{sec:methods}

\subsection{Classical copied-seed extension}
Classical hidden-variable mediators are freely shareable because the hidden seed can be copied to any number of colluders without disturbing the authorized marginal. For every classical mediator in the copied-seed threat model, the colluder can reproduce the authorized score in any relabelled test, giving zero anti-collusion power.

\subsection{Distance theorem}
Supplementary Note~3 gives the full proof of Theorem~\ref{thm:distance}. The proof uses: finite-alphabet compactness of the behavior sets; convexity of the collusive shadow; Sion's minimax theorem; and the variational characterization of total-variation distance~\cite{Uhlmann1976,Fuchs1999}. The basic properties needed for the framework are: (i) freely shareable correlations have zero anti-collusion power; (ii) positive anti-collusion power witnesses task-relevant non-shareability; and (iii) collusive vulnerability is concave under unflagged mixing.

\subsection{CHSH monogamy frontier}
Let $\mathcal{B}_{12}$ and $\mathcal{B}_{13}$ be the two CHSH operators sharing player~1's observables. The standard Toner--Verstraete monogamy inequality $S_{12}^2+S_{13}^2\le 8$ bounds the squared expectation values. Supplementary Note~6 states the operator form, derives the tightness construction, and gives the exact certified anti-collusion power formula. Exact monogamy of a pure authorized marginal states that if $\rho_{12}=\Tr_3\rho_{123}$ is pure, then $\rho_{123}=\rho_{12}\otimes\rho_3$. Because the maximally entangled Bell state $\ket{\Phi_2}$ has maximally mixed local marginal $\Tr_B\dyad{\Phi_2}=I/2$, Alice's outcome distribution is uniform for any local observable. After the factorization $\rho_{123}=\dyad{\Phi_2}_{12}\otimes\rho_3$, the colluder's system is product with Alice's, so any collusive response rule gives $C_{13}=1/2$ regardless of the colluder's strategy.

\subsection{Finite-data estimator}
We estimate the CHSH score either by estimating four correlators separately or by using a single-trial unbiased estimator $Z_i=4(-1)^{xy}ab$. The main-text confidence interval uses the correlator-wise estimator with a union bound; the Supplement gives both variants. Supplementary Note~10 gives the derivation of the Hoeffding-based confidence bound, together with empirical-Bernstein alternatives and a proof of the monotonicity of $\Gamma_{\rm CHSH}^{+}(s)$ on $[0,2\sqrt2]$.

\subsection{NPA relaxation and diagnostics}

Semidefinite hierarchies such as the Doherty--Parrilo--Spedalieri hierarchy for symmetric extensions~\cite{DPS2002,DPS2004} and the NPA hierarchy for quantum correlations~\cite{NPA2007,NPA2008} provide natural relaxation tools for extension and marginal-compatibility problems. The tilted-CHSH collusive vulnerability is bounded by an NPA level-2 semidefinite relaxation implemented in CVXPY. For three parties with binary $\pm1$ observables $A_x,B_y,C_z$ and projective-measurement substitutions $A_x^2=B_y^2=C_z^2=I$, the level-2 moment matrix is $22\times22$ and indexed by words of length at most~2 drawn from $\{I,A_0,A_1,B_0,B_1,C_0,C_1\}$. Different-party operators commute. The tilted-CHSH family self-tests partially entangled two-qubit states at maximal violation~\cite{Acin2012,Bamps2015,Yang2013,Kaniewski2016}. The relaxed problem is
\[
V_{13}^{{\rm NPA},2}(s;\alpha)=\max_{\Gamma\succeq0}\;I_\alpha^{13}
\]
subject to $\Gamma\succeq0$, normalization $\Gamma_{\emptyset,\emptyset}=1$, algebraic substitutions, and the authorized-score constraint $I_\alpha^{12}\ge s$.

The SDP was implemented in CVXPY~1.8.2 and solved using SCS~3.2.7 with tolerance parameters (\texttt{eps\_abs}=1e-9, \texttt{eps\_rel}=1e-9, \texttt{max\_iters}=100000). The authorized-score grid used 60 threshold values for each tilt parameter. A grid point is labelled certified only when the solver status is ``optimal'', the dual objective is available, the primal-dual gap is below $10^{-6}$, the maximum affine residual is below $10^{-5}$, and the minimum eigenvalue of the moment matrix is no smaller than $-10^{-7}$. Points failing these tests are shown only as numerical indications and are not used as certified upper bounds. The numerical implementation uses a real symmetrized relaxation in which non-Hermitian same-party products are represented by their real symmetrized moments. This is an outer relaxation of the full complex Hermitian NPA problem and is used only for numerical upper-envelope certification. At level~1 the relaxation is too loose to yield a useful collusive bound because the Alice--Bob and Alice--Charlie subsystems are only weakly coupled through Alice's single-party marginals; level~2 couples them through the shared second-order moments and yields a useful bound.

\backmatter

\bmhead{Acknowledgements}
We acknowledge the support of the National Nature Science Foundation of China (Grants No.~92576203 and No.~12174301), the Natural Science Basic Research Program of Shaanxi (Grant No.~2023-JC-JQ-01), the Open Fund of State Key Laboratory of Acoustics (Grant No.~SKLA202312), the Shaanxi Province Postdoctoral Science Foundation (Grant No.~2024BSHEDZZ021), the Aeronautical Science Fund of China (Grant No.~2024M075070001), and the Project of Institute of Medical Intelligence of the First Affiliated Hospital of Xi'an Jiaotong University.

\bmhead{Data and code availability}
Data and code are publicly available at \url{https://zenodo.org/records/20371276}.

\bmhead{Author contributions}
F.W. developed the framework and wrote the manuscript draft.

\bmhead{Competing interests}
The author declares no competing interests.

\appendix

\begin{center}
{\Large Supplementary Information}\\[4pt]
{\large Strategic Non-Shareability of Quantum Correlations}
\end{center}

\noindent
This Supplementary Information contains the mathematical details behind the theorem chain in the main text.

\section{Roadmap and conventions}\label{sec:supp-conventions}

\begin{center}
\small
\begin{tabular}{lp{4.5cm}p{5.8cm}}
\toprule
Main-text object & Supplementary result & Role \\
\midrule
Theorem~\ref{thm:distance} & Theorem~\ref{thm:supp-distance} & distance from collusive shadow \\
Theorem~\ref{thm:monotone} & Supplementary Note~\ref{sec:supp-resource-proofs} & monotonicity proof \\
Theorem~\ref{thm:Dm-d} & Main text (arbitrary $d$); Supp.~Note~\ref{sec:supp-trace-distance} ($d=2$) & arbitrary-$d$ trace-distance resource \\
Corollary~\ref{thm:werner-threshold} & Section~\ref{sec:supp-werner-proof} & swap-Hamiltonian proof of extendibility threshold \\
Theorem~\ref{thm:Dm-d} ($d=2$) & Section~\ref{sec:supp-trace-distance} & twirling proof of exact trace-distance resource \\
Theorem~\ref{thm:exact-tomo} & Section~\ref{sec:supp-exact-tomo} & exact tomographic capacity proof \\
Theorem~\ref{thm:reversal} & Main text, \S2.4 & ordering-reversal certificate \\
Theorem~\ref{thm:b2a} & Section~\ref{sec:supp-definetti} & de~Finetti reduction for collective colluders \\
Theorem~\ref{thm:c1} & Section~\ref{sec:supp-coordination} & coordination-task local model and SDP \\
Section~\ref{sec:diagnostics} & Proposition~\ref{prop:supp-noise} & noise-induced transition and threshold \\
Section~\ref{sec:diagnostics} & Proposition~\ref{prop:finite-stats} & finite-data certification \\
No-signalling condition & Proposition~\ref{prop:supp-nosig} & full subset no-signalling \\
Section~\ref{sec:diagnostics} & Section~\ref{sec:supp-npa} & NPA relaxation formulation \\
\bottomrule
\end{tabular}
\end{center}

We use the fidelity convention $F(\rho,\sigma)=\norm{\sqrt{\rho}\sqrt{\sigma}}_1$. For a pure state $\phi=\dyad{\psi}$ this gives $F(\rho,\phi)^2=\mel{\psi}{\rho}{\psi}$. All trace norms are full trace norms. The CHSH signed correlator is normalized as
\begin{equation}
S_{AB}=\langle A_0B_0\rangle+\langle A_0B_1\rangle+\langle A_1B_0\rangle-\langle A_1B_1\rangle,
\label{eq:supp-chsh-score}
\end{equation}
where all observables are self-adjoint contractions with spectrum in $[-1,1]$. For a fixed CHSH orientation, the winning probability is $\omega_{AB}=\frac12+\frac{S_{AB}}{8}$.

\section{Classical free shareability}\label{sec:supp-classical}

\begin{suppproposition}[Arbitrary copied-seed extension]\label{prop:classical-m}
Let
\begin{equation}
P_{12}(x_1,x_2|t_1,t_2)=\sum_{\lambda}p(\lambda)p_1(x_1|t_1,\lambda)p_2(x_2|t_2,\lambda)
\label{eq:supp-classical-12}
\end{equation}
be a classical mediator for players 1 and 2. For any integer $m\ge1$ and any response rules $p_j(x_j|t_j,\lambda)$, $j=3,\ldots,m+2$, the distribution
\begin{equation}
P_{1\cdots m+2}(x_1,\ldots,x_{m+2}|t_1,\ldots,t_{m+2})=\sum_{\lambda}p(\lambda)\prod_{j=1}^{m+2}p_j(x_j|t_j,\lambda)
\label{eq:supp-classical-m}
\end{equation}
is a valid extension that preserves the $1$--$2$ marginal.
\end{suppproposition}

\begin{proof}
Nonnegativity is immediate. Normalization follows from $\sum_{x_j}p_j(x_j|t_j,\lambda)=1$. Summing over $x_3,\ldots,x_{m+2}$ gives Eq.~\eqref{eq:supp-classical-12}. Thus the same seed can be copied to any finite number of colluders without disturbing the authorized behavior.
\end{proof}

\begin{suppcorollary}[Copied colluder]\label{cor:copied-colluder}
Suppose the colluder's type and action alphabets are relabelled copies of player 2's alphabets and set $p_3(x_3|t_3,\lambda)=p_2(x_3|t_3,\lambda)$. Then every score functional that depends on the same relabelled predicate for pairs $(1,2)$ and $(1,3)$ satisfies $\Ccoll=\Aauth$.
\end{suppcorollary}

\section{Anti-collusion capacity as distance from the collusive shadow}\label{sec:supp-distance}

\begin{supptheorem}[Distance theorem]\label{thm:supp-distance}
Let $P_{12}$ be an authorized behavior with finite input-output alphabets. Let
\[
\mathsf R\in\{\mathsf C,\mathsf{NS},\Qd,\Qcl\}
\]
be an admissible extension class, where $\Qd$ denotes quantum behaviors realizable with local Hilbert-space dimensions bounded by $d$ and $\Qcl=\overline{\Qfin}$ denotes the closure of the finite-dimensional quantum set. Then
\begin{equation}
\mathcal C_{\rm AC}^{\mathsf R}(P_{12})=\inf_{Q\in\mathcal S_{\mathsf R}(P_{12})}d_{\rm TV}(P_{12},Q).
\label{eq:supp-distance}
\end{equation}
\end{supptheorem}

\begin{proof}
\textit{Step 1: Finite-dimensional behavior space.} Because the input and output alphabets are finite, all conditional behaviors live in a finite-dimensional real vector space. The set of all no-signalling behaviors is a compact polytope. The classical set is also a compact polytope. The fixed-dimensional set $\Qd$ is compact because the sets of density operators and POVM effects in fixed finite dimension are compact and the Born rule is continuous. The closed set $\Qcl=\overline{\Qfin}$ is compact as a closed subset of the finite-dimensional probability simplex.

\textit{Step 2: Collusive shadow compactness.} The collusive shadow
\[
\mathcal S_{\mathsf R}(P_{12})=\{\Pi_{13\to12}P_{13}:P_{123}\in{\rm Ext}_{\mathsf R}(P_{12})\}
\]
is the image of ${\rm Ext}_{\mathsf R}(P_{12})$ under a linear map. Hence it is compact and convex whenever ${\rm Ext}_{\mathsf R}(P_{12})$ is compact and convex.

\textit{Step 3: Capacity definition.} For relabelled scoring rules $0\le h\le1$,
\[
\mathcal C_{\rm AC}^{\mathsf R}(P_{12})=\sup_{0\le h\le1}\left[\langle h,P_{12}\rangle-\sup_{Q\in\mathcal S_{\mathsf R}(P_{12})}\langle h,Q\rangle\right]_+.
\]
Since $h=0$ gives value zero, the positive part can be absorbed into the supremum:
\[
\mathcal C_{\rm AC}^{\mathsf R}(P_{12})=\sup_{0\le h\le1}\inf_{Q\in\mathcal S_{\mathsf R}(P_{12})}\langle h,P_{12}-Q\rangle.
\]

\textit{Step 4: Sion minimax.} The set of predicates $0\le h\le1$ is compact and convex. The collusive shadow is compact and convex. The payoff $(h,Q)\mapsto\langle h,P_{12}-Q\rangle$ is bilinear, hence both convex and concave in the relevant variables. By Sion's minimax theorem,
\[
\sup_{0\le h\le1}\inf_{Q\in\mathcal S_{\mathsf R}(P_{12})}\langle h,P_{12}-Q\rangle=\inf_{Q\in\mathcal S_{\mathsf R}(P_{12})}\sup_{0\le h\le1}\langle h,P_{12}-Q\rangle.
\]

\textit{Step 5: TV variational characterization.} For fixed $Q$, the variational characterization of total variation gives
\[
\sup_{0\le h\le1}\langle h,P_{12}-Q\rangle=d_{\rm TV}(P_{12},Q).
\]
Therefore,
\[
\mathcal C_{\rm AC}^{\mathsf R}(P_{12})=\inf_{Q\in\mathcal S_{\mathsf R}(P_{12})}d_{\rm TV}(P_{12},Q).
\]
\end{proof}

\begin{suppremark}
If $\mathsf R$ is the unclosed finite-dimensional quantum set $\Qfin$ with unbounded dimension, the theorem applies after passing to its closure $\Qcl$. For fixed dimension $d$, it applies directly to $\Qd$. The CHSH score-certified result used in the main text does not depend on this subtlety because it follows directly from the Toner-Verstraete inequality.
\end{suppremark}

\section{Exact monogamy of a pure authorized marginal}\label{sec:supp-exact}

\begin{suppproposition}[Pure marginal factorization]\label{prop:exact-monogamy}
Let $\rho_{ABC}$ be a density operator on $\cH_A\otimes\cH_B\otimes\cH_C$. If $\rho_{AB}=\Tr_C\rho_{ABC}=\dyad{\psi}_{AB}$ is pure, then $\rho_{ABC}=\dyad{\psi}_{AB}\otimes\rho_C$.
\end{suppproposition}

\begin{proof}
Let $\Pi=\dyad{\psi}_{AB}\otimes \id_C$. Since $\Tr(\Pi\rho_{ABC})=1$, the positive operator $(\id-\Pi)\rho_{ABC}(\id-\Pi)$ has trace zero and is therefore zero. Positivity implies the off-diagonal blocks vanish, so $\rho_{ABC}=\Pi\rho_{ABC}\Pi=\dyad{\psi}_{AB}\otimes\rho_C$.
\end{proof}

\section{Robust decoupling}\label{sec:supp-robust}

\begin{supplemma}[Gentle projection bound]\label{lem:gentle}
Let $\rho$ be a density operator and $\Pi$ a projection with $p=\Tr(\Pi\rho)\ge1-\alpha$. If $p>0$ and $\rho_\Pi=\Pi\rho\Pi/p$, then $\norm{\rho-\rho_\Pi}_1\le 2\sqrt{\alpha}+\alpha$.
\end{supplemma}

\begin{proof}
The gentle measurement lemma gives $\norm{\rho-\Pi\rho\Pi}_1\le 2\sqrt{1-p}$. Moreover, $\norm{\Pi\rho\Pi-\rho_\Pi}_1=1-p$. The triangle inequality and $1-p\le\alpha$ give the result.
\end{proof}

\begin{supptheorem}[Robust decoupling with explicit constant]\label{thm:supp-robust}
Let $\rho_{123}$ be any tripartite extension of $\rho_{12}$ and suppose $F(\rho_{12},\Phi_d)\ge1-\eps$, $0\le\eps\le1$. Then there exists a state $\sigma_3$ such that
\begin{equation}
\norm{\rho_{123}-\dyad{\Phi_d}_{12}\otimes\sigma_3}_1\le (2\sqrt2+2)\sqrt{\eps}.
\label{eq:supp-robust-final}
\end{equation}
Consequently, for every observable $G$ with $\norm{G}_\infty\le1$, $|\Tr(G\rho_{123})-\Tr[G(\dyad{\Phi_d}_{12}\otimes\sigma_3)]|\le(2\sqrt2+2)\sqrt{\eps}$.
\end{supptheorem}

\begin{proof}
Set $\Pi=\dyad{\Phi_d}_{12}\otimes\id_3$. Since the second argument of the fidelity is pure, $p:=\Tr(\Pi\rho_{123})=F(\rho_{12},\Phi_d)^2\ge(1-\eps)^2\ge1-2\eps$. Apply Lemma~\ref{lem:gentle} with $\alpha=2\eps$. The normalized projected state is $\rho_\Pi=\dyad{\Phi_d}_{12}\otimes\sigma_3$ with $\sigma_3=\mel{\Phi_d}{\rho_{123}}{\Phi_d}_{12}/p$. The trace norm bound is $2\sqrt{2\eps}+2\eps\le(2\sqrt2+2)\sqrt{\eps}$ for $0\le\eps\le1$.
\end{proof}

\begin{suppremark}[Payoff norm for $\lambda$-weighted objective]
For the anti-collusion payoff $U_\lambda=A_{12}-\lambda C_{13}$, the corresponding payoff operator satisfies $\norm{G_U}_\infty\le1+\lambda$. Therefore $|\Delta U_\lambda|\le(1+\lambda)\norm{\rho_{123}-\Phi_{12}\otimes\sigma_3}_1$.
\end{suppremark}

\section{CHSH monogamy frontier}\label{sec:supp-chsh}

Let $A_0,A_1$ be player 1's binary observables, $B_0,B_1$ player 2's, and $C_0,C_1$ player 3's. Observables belonging to different players commute. Define
\begin{align}
\mathcal{B}_{12}&=A_0(B_0+B_1)+A_1(B_0-B_1),\\
\mathcal{B}_{13}&=A_0(C_0+C_1)+A_1(C_0-C_1),
\end{align}
and $S_{12}=\Tr(\rho\mathcal{B}_{12})$, $S_{13}=\Tr(\rho\mathcal{B}_{13})$.

\begin{supptheorem}[CHSH monogamy]\label{thm:supp-chsh}
For every tripartite quantum state and every choice of binary observables as above, $S_{12}^{2}+S_{13}^{2}\le8$.
\end{supptheorem}

\begin{proof}
This is the Toner--Verstraete monogamy inequality for Bell correlations. The two CHSH operators share the same two observables on player 1's system, while the observables of players 2 and 3 act on distinct commuting tensor factors. The inequality is uniform in the choices of $C_0,C_1$: after $A_0,A_1$ and the authorized score are fixed, every possible pair of collusive binary observables for player 3 must obey the same bound.
\end{proof}

\begin{suppproposition}[Tightness of the monogamy frontier]\label{prop:tightness}
For every angle $\theta\in[0,\pi/2]$ there exists a three-qubit pure state and a choice of binary observables for each player such that $S_{12}=2\sqrt{2}\sin\theta$ and $S_{13}=2\sqrt{2}\cos\theta$. Consequently every point on the quarter-circle $S_{12}^{2}+S_{13}^{2}=8$ with $S_{12},S_{13}\ge0$ is attained.
\end{suppproposition}

\begin{proof}
Define
\[
|\psi_{\theta}\rangle=\frac{1}{\sqrt{2}}\bigl(\cos\theta\,|110\rangle+\sin\theta\,|101\rangle+|011\rangle\bigr),
\qquad\theta\in[0,\pi/2].
\]
Normalization follows from $\cos^2\theta+\sin^2\theta+1=2$. Take $A_0=\sigma_x$, $A_1=\sigma_y$ on qubit $1$, and $B_0=C_0=(\sigma_x+\sigma_y)/\sqrt{2}$, $B_1=C_1=(\sigma_x-\sigma_y)/\sqrt{2}$ on qubits $2$ and $3$.

The only non-vanishing Pauli-pair expectations on the reduced state of qubits $1$ and $2$ are
\begin{align*}
\langle\sigma_x\otimes\sigma_x\rangle &= \sin\theta, \\
\langle\sigma_y\otimes\sigma_y\rangle &= \sin\theta,
\end{align*}
while $\langle\sigma_x\otimes\sigma_y\rangle=\langle\sigma_y\otimes\sigma_x\rangle=0$ because the contributing components $|101\rangle$ and $|011\rangle$ are orthogonal under these Pauli products. Consequently
\begin{align*}
\langle A_0B_0\rangle &= \tfrac{1}{\sqrt{2}}(\langle\sigma_x\sigma_x\rangle+\langle\sigma_x\sigma_y\rangle)=\tfrac{\sin\theta}{\sqrt{2}}, \\
\langle A_0B_1\rangle &= \tfrac{1}{\sqrt{2}}(\langle\sigma_x\sigma_x\rangle-\langle\sigma_x\sigma_y\rangle)=\tfrac{\sin\theta}{\sqrt{2}}, \\
\langle A_1B_0\rangle &= \tfrac{1}{\sqrt{2}}(\langle\sigma_y\sigma_x\rangle+\langle\sigma_y\sigma_y\rangle)=\tfrac{\sin\theta}{\sqrt{2}}, \\
\langle A_1B_1\rangle &= \tfrac{1}{\sqrt{2}}(\langle\sigma_y\sigma_x\rangle-\langle\sigma_y\sigma_y\rangle)=-\tfrac{\sin\theta}{\sqrt{2}}.
\end{align*}
Summing gives $S_{12}=4\cdot\sin\theta/\sqrt{2}=2\sqrt{2}\sin\theta$. The computation for $S_{13}$ is identical under the swap $|2\rangle\leftrightarrow|3\rangle$, giving $S_{13}=2\sqrt{2}\cos\theta$. Thus every point on the quarter-circle $S_{12}^{2}+S_{13}^{2}=8$ with $S_{12},S_{13}\ge0$ is attained.
\end{proof}

\begin{suppremark}[Scope of Bell monogamy]
The CHSH monogamy relation and its tightness should not be read as a universal statement about all Bell inequalities. Outside the Toner--Verstraete family, general multi-party Bell monogamy remains open. The operational anti-collusion guarantee is therefore tied specifically to self-testing inequalities such as CHSH, whose maximal violation pins down the authorized state.
\end{suppremark}

\section{Certified CHSH anti-collusion power}\label{sec:supp-chsh-power}

\begin{suppproposition}[Exact certified anti-collusion power]\label{prop:chsh-power}
Let a tripartite quantum strategy produce an authorized CHSH score $S_{12}=s$, $0\le s\le2\sqrt2$, with the collusive CHSH score $S_{13}$ using the same two observables of player~1. Then
\begin{equation}
\Gamma_{\rm CHSH}^{+}(s)=\left[\frac{s-\sqrt{8-s^2}}{8}\right]_+.
\end{equation}
In particular, $\Gamma_{\rm CHSH}^{+}(s)>0\iff s>2$, and $\Gamma_{\rm CHSH}^{+}(2\sqrt2)=1/(2\sqrt2)$.
\end{suppproposition}

\begin{proof}
From Theorem~\ref{thm:supp-chsh}, any collusive extension satisfies $S_{13}\le\sqrt{8-s^2}$. Converting to winning probabilities gives $\omega_{12}(s)-\omega_{13}^{\max}(s)=(s-\sqrt{8-s^2})/8$. The positive part yields the formula. Positivity requires $s>2$.
\end{proof}

\section{Minimal game separation}\label{sec:supp-game}

\begin{supptheorem}[Supremum-level payoff separation for $\lambda=1$]\label{thm:supp-game}
For any classical hidden-variable mediator in the copied-seed threat model, $V_{13}^{\rm classical}(P_{12})\ge A_{12}(P_{12})$, and hence $U_1^{\rm wc,classical}(P_{12})\le0$. Moreover, $\sup_{P_{12}\in\mathrm C}U_1^{\rm wc}(P_{12})=0$.

For any quantum strategy, the colluder can output a uniformly random bit, so $V_{13}^{\Qfin}\ge1/2$. Tsirelson's bound gives $A_{12}\le1/2+1/(2\sqrt2)$. Therefore $U_1^{\rm wc}\le1/(2\sqrt2)$. A maximally entangled CHSH strategy attains this value: $A_{12}=\cos^2(\pi/8)$ and, by Proposition~\ref{prop:exact-monogamy}, $V_{13}^{\rm quantum}=1/2$. Thus $\sup_{P\in\Qfin}U_1^{\rm wc}(P)=1/(2\sqrt2)$. Here $\Qfin$ denotes the usual finite-dimensional quantum strategy class; no closure issue is involved in the attainable maximally entangled CHSH strategy.
\end{supptheorem}

\begin{proof}
The classical statement follows from Corollary~\ref{cor:copied-colluder}. For the quantum upper bound, note that a uniformly random output by player 3 yields $\Ccoll=1/2$ regardless of player 1's strategy, so $V_{13}^{\Qfin}\ge1/2$ for every quantum mediator. Tsirelson's bound $S_{12}\le2\sqrt2$ implies $A_{12}=\omega_{12}\le1/2+1/(2\sqrt2)$. Hence $U_1^{\rm wc}=A_{12}-V_{13}^{\Qfin}\le1/(2\sqrt2)$. The maximally entangled strategy achieves equality because it saturates Tsirelson's bound and, by exact monogamy, every extension factorizes. For the maximally entangled CHSH strategy, Alice's two optimal observables have zero expectation, so Alice's output is unbiased for both inputs. After factorization, the colluder's output is independent of this unbiased bit. Hence every collusive response rule gives $C_{13}=1/2$, and therefore $V_{13}^{\rm quantum}=1/2$.
\end{proof}

\section{Noise-induced transition of anti-collusion power}\label{sec:supp-robustness}

\begin{suppproposition}[Werner-noise anti-collusion power]\label{prop:supp-noise}
Let players 1 and 2 share the Werner-noisy authorized state $\rho_{12}(\eta)=\eta\dyad{\Phi_2}+(1-\eta)\frac{\id_4}{4}$, $0\le\eta\le1$, and use the CHSH measurements optimal for $\dyad{\Phi_2}$. Then $\Aauth(\eta)=\frac12+\frac{\eta}{2\sqrt2}$. By CHSH monogamy, the maximal collusive winning probability for any tripartite extension is bounded by $\Ccoll^{\max}(\eta)\le\frac12+\frac{\sqrt{1-\eta^2}}{2\sqrt2}$. Therefore the score-certified anti-collusion power satisfies
\begin{equation}
\Gamma_{\rm CHSH}^{\rm score}(\eta)
\equiv
\Gamma_{\rm CHSH}^{+}(2\sqrt2\,\eta)
=
\left[
\frac{\eta-\sqrt{1-\eta^2}}{2\sqrt2}
\right]_+ .
\end{equation}
This is a score-certified lower bound on any implementation with observed Werner visibility $\eta$. It is not claimed to be the exact state-dependent anti-collusion power except at $\eta=1$.
\end{suppproposition}

\begin{proof}
The Werner state scales the CHSH correlator by $\eta$, giving $S_{12}(\eta)=2\sqrt2\,\eta$ and $\Aauth(\eta)=1/2+\eta/(2\sqrt2)$. By CHSH monogamy, $S_{13}^2\le8-S_{12}(\eta)^2=8(1-\eta^2)$, so $\Ccoll^{\max}(\eta)\le1/2+\sqrt{1-\eta^2}/(2\sqrt2)$. Subtracting gives the certified lower bound. Positivity requires $\eta>1/\sqrt2$.
\end{proof}

\begin{suppremark}
This is a score-certified lower bound, not an exact state-dependent saturation for every $\eta$. The bound is tight in the $\eta=1$ limit because exact monogamy forces the colluder to be uncorrelated with player 1, but for intermediate $\eta$ the exact state-dependent anti-collusion power may be larger.
\end{suppremark}

\section{Finite-data certification}\label{sec:supp-finite}

\subsection{Correlator-wise estimator (main-text protocol)}

\begin{suppproposition}[Finite-data lower confidence bound]\label{prop:finite-stats}
Assume $N$ independent CHSH trials with uniformly random settings. For each setting pair $(x,y)\in\{0,1\}^2$ let $N_{xy}$ be the number of trials and define
\[
\widehat E_{xy}=\frac{1}{N_{xy}}\sum_{i:x_i=x,y_i=y}a_i b_i,
\qquad a_i,b_i\in\{\pm1\}.
\]
Then, with probability at least $1-\alpha$,
\begin{equation}
S_{12}\ge S_{\rm LCB}=\widehat S_{12}-4\sqrt{\frac{2\log(8/\alpha)}{N_{\min}}},
\end{equation}
where
\[
\widehat S_{12}=\widehat E_{00}+\widehat E_{01}+\widehat E_{10}-\widehat E_{11},
\qquad
N_{\min}=\min_{x,y}N_{xy}.
\]
Consequently,
\begin{equation}
\Gamma_{\rm CHSH}^{+}(S_{12})\ge\Gamma_{\rm LCB}=\left[\frac{S_{\rm cert}-\sqrt{8-S_{\rm cert}^{2}}}{8}\right]_+,
\end{equation}
where
\[
S_{\rm cert}=\min\{2\sqrt2,\max\{0,S_{\rm LCB}\}\}.
\]
\end{suppproposition}

\begin{proof}
For each fixed setting pair $(x,y)$, the random variables $a_i b_i$ lie in $[-1,1]$. Hoeffding's inequality gives
\[
\Pr\left(|E_{xy}-\widehat E_{xy}|\ge r\right)\le2\exp\!\left(-\frac{N_{xy}r^{2}}{2}\right).
\]
Setting this probability to $\alpha/4$ gives
\[
r_{xy}=\sqrt{\frac{2\log(8/\alpha)}{N_{xy}}}.
\]
A union bound over four setting pairs gives joint confidence at least $1-\alpha$. Hence
\[
|S_{12}-\widehat S_{12}|\le\sum_{x,y}|E_{xy}-\widehat E_{xy}|\le4\sqrt{\frac{2\log(8/\alpha)}{N_{\min}}}.
\]
This proves the lower confidence bound. The monotonicity of $\Gamma_{\rm CHSH}^{+}$ on $[0,2\sqrt2]$ (Lemma~\ref{lem:monotone}) gives the final inequality after clipping $S_{\rm LCB}$ to the physical interval.
\end{proof}

\begin{supplemma}[Monotonicity of $\Gamma_{\rm CHSH}^{+}$]\label{lem:monotone}
The certified anti-collusion power $\Gamma_{\rm CHSH}^{+}(s)=[(s-\sqrt{8-s^2})/8]_+$ is monotone increasing on $[0,2\sqrt2]$.
\end{supplemma}

\begin{proof}
For $s\le2$ the function is identically zero, hence monotone. For $s>2$, differentiate:
\[
\frac{d}{ds}\frac{s-\sqrt{8-s^2}}{8}=\frac{1}{8}\left(1+\frac{s}{\sqrt{8-s^2}}\right)>0.
\]
Thus $\Gamma_{\rm CHSH}^{+}(s)$ is strictly increasing on $[2,2\sqrt2]$, and therefore monotone on the entire interval.
\end{proof}

\subsection{Single-trial unbiased estimator (alternative)}

An alternative is to use the single-trial unbiased estimator
\[
Z_i=4(-1)^{x_i y_i}a_i b_i\in[-4,4],
\]
which satisfies $\mathbb E[Z_i]=S_{12}$. With $\widehat S_{12}=\frac1N\sum_i Z_i$, Hoeffding's inequality for range $[-4,4]$ gives
\[
\Pr[\widehat S_{12}-S_{12}\ge r]\le\exp\!\left(-\frac{2Nr^2}{64}\right)=\exp\!\left(-\frac{Nr^2}{32}\right).
\]
Solving $\exp(-Nr^2/32)=\alpha$ yields $r=4\sqrt{2\log(1/\alpha)/N}$. This bound is looser than the correlator-wise bound for typical setting distributions but requires no union bound.

\begin{suppremark}[Empirical Bernstein alternative]
If the CHSH settings are randomized adaptively, a martingale bound such as the empirical Bernstein inequality can replace Hoeffding's bound and yield tighter confidence intervals when the variance is small. The qualitative conclusion---finite-data certification of anti-collusion power---remains unchanged.
\end{suppremark}

\section{No-signalling and private information}\label{sec:supp-nosig}

\begin{suppproposition}[Full no-signalling mediator]\label{prop:supp-nosig}
Let a tripartite quantum mediator be specified by a state $\rho_{123}$ and local POVMs $\{E^{(i)}_{t_i,x_i}\}_{x_i}$ with $\sum_{x_i}E^{(i)}_{t_i,x_i}=\id_i$, $i=1,2,3$. The induced distribution
\begin{equation}
P(x_1,x_2,x_3|t_1,t_2,t_3)=\Tr\!\left[\rho_{123}\left(E^{(1)}_{t_1,x_1}\otimes E^{(2)}_{t_2,x_2}\otimes E^{(3)}_{t_3,x_3}\right)\right]
\end{equation}
is fully no-signalling: for every subset $S\subseteq\{1,2,3\}$, $P(x_S|t_1,t_2,t_3)=P(x_S|t_S)$.
\end{suppproposition}

\begin{proof}
For any subset $S$, sum over $\{x_j:j\notin S\}$ and use completeness of the POVMs. The resulting expression depends only on $\{t_i:i\in S\}$.
\end{proof}

\section{NPA relaxation for collusive vulnerability}\label{sec:supp-npa}

\subsection{General NPA formulation}

For a general game, the quantum collusive vulnerability
\begin{equation}
V_{13}^{\Qcl}(P_{12};\mathcal G)=\sup_{P_{123}\in{\rm Ext}_{\Qcl}(P_{12})}C_{13}(P_{123})
\end{equation}
is a quantum correlation optimization. The NPA hierarchy provides a sequence of semidefinite relaxations.

At level $k$, one introduces a moment matrix $\Gamma$ indexed by strings of measurement operators of length at most $k$. The matrix entries correspond to expectations of products of the local POVM elements $E^{(1)}_{a|x}$, $E^{(2)}_{b|y}$, $E^{(3)}_{c|z}$. The relaxed problem is
\begin{equation}
V_{13}^{{\rm NPA},k}(P_{12};\mathcal G)=\max_{\Gamma\succeq0}\;\langle G_{13},\Gamma\rangle
\end{equation}
subject to normalization $\Gamma\succeq0$ and $\Gamma_{\emptyset,\emptyset}=1$, no-signalling consistency, and commutativity of operators belonging to different parties.

Authorized behavior constraints are imposed as linear constraints on moment entries. In the projector formulation,
\[
P_{12}(a,b|x,y)=\langle E^{(1)}_{a|x}E^{(2)}_{b|y}\rangle,
\]
and this expectation is identified with the corresponding moment entry. In the binary-observable formulation used for tilted CHSH, the relevant constraints and objectives are linear functions of moments such as $\langle A_xB_y\rangle$, $\langle A_xC_z\rangle$, $\langle A_x\rangle$. For the tilted-CHSH upper envelope we impose the authorized-score constraint $I_\alpha^{12}(\Gamma)\ge s$ directly as a linear constraint on these moment entries.

The NPA hierarchy gives a sequence of outer approximations to the quantum or commuting-operator correlation set. At each finite level, the exact optimum of the finite-level SDP is an upper bound; a numerical solver value is treated as a certificate only when the dual objective, primal-dual gap, affine residual and PSD residual satisfy the stated diagnostic thresholds. Under the standard convergence assumptions of the hierarchy, these bounds converge to the corresponding closed quantum or commuting-operator value.

\subsection{Reduced length-2 tilted-CHSH relaxation}

For the tilted-CHSH numerical frontier in the main text, we use the standard tilted-CHSH expressions
\begin{align}
I_\alpha^{12}&=\alpha\langle A_0\rangle+\langle A_0B_0\rangle+\langle A_0B_1\rangle+\langle A_1B_0\rangle-\langle A_1B_1\rangle,\\
I_\alpha^{13}&=\alpha\langle A_0\rangle+\langle A_0C_0\rangle+\langle A_0C_1\rangle+\langle A_1C_0\rangle-\langle A_1C_1\rangle,
\end{align}
with the same observables $A_0,A_1$ for player~1 in both expressions. The classical bound is $I_\alpha^{\rm L}\le2+\alpha$ and the quantum maximum is $I_\alpha^{\rm Q}=\sqrt{8+2\alpha^2}$ for $0\le\alpha\le2$. For each fixed $\alpha$ and authorized score $s$, we solve
\begin{equation}
\begin{aligned}
V_{13}^{{\rm NPA},2}(s;\alpha)=\max_{\Gamma}\quad & I_\alpha^{13}(\Gamma)\\
\text{s.t.}\quad
& \Gamma\succeq0,\\
& \Gamma_{\emptyset,\emptyset}=1,\\
& A_x^2=B_y^2=C_z^2=I,\\
& [A_x,B_y]=[A_x,C_z]=[B_y,C_z]=0,\\
& I_\alpha^{12}(\Gamma)\ge s.
\end{aligned}
\end{equation}
This gives an upper envelope: the largest collusive tilted-CHSH score allowed by the relaxation among all strategies whose authorized tilted-CHSH score is at least $s$.

For the numerical tilted-CHSH relaxation we use the reduced length-2 word set
\begin{multline*}
\mathcal W_2 = \bigl\{
I,\; A_0,A_1,B_0,B_1,C_0,C_1,\;
A_0A_1,B_0B_1,C_0C_1,\;
A_0B_0,A_0B_1,A_1B_0,A_1B_1,\\
A_0C_0,A_0C_1,A_1C_0,A_1C_1,\;
B_0C_0,B_0C_1,B_1C_0,B_1C_1
\bigr\}.
\end{multline*}
This gives a $22\times22$ moment matrix. We refer to this reduced length-2 word relaxation as ``level~2'' in this manuscript.

Different-party observables commute and are reordered into a canonical party order. Same-party observables are not assumed to commute. In particular, $A_0A_1$ and $A_1A_0$ are adjoints of one another and are not identified by commutation. The code handles adjoint words when constructing entries $\Gamma_{u,v}=\langle u^\dagger v\rangle$.

The code does not globally sort all operators. It only reorders operators belonging to different parties. This prevents imposing false commutation relations such as $A_0A_1=A_1A_0$.

The numerical implementation uses a real symmetrized relaxation in which non-Hermitian same-party products are represented by their real symmetrized moments. This is an outer relaxation of the full complex Hermitian NPA problem and is used only for numerical upper-envelope certification. It does not impose same-party commutativity.

At level~1 the relaxation is too loose because the Alice--Bob and Alice--Charlie subsystems are only weakly coupled through Alice's single-party marginals; level~2 couples them through the shared second-order moments and yields a useful bound. The optimization is performed with the SCS solver on a uniform grid of $60$ authorized-score values per tilt parameter.

\subsection{Solver diagnostics and certificate criteria}\label{sec:supp-diagnostics}

A grid point is labelled certified if:
\begin{enumerate}[(i)]
\item the solver status is \texttt{optimal};
\item a dual objective is available;
\item the primal-dual gap is below $\epsilon_{\rm gap}=10^{-6}$;
\item the maximum affine residual is below $\epsilon_{\rm aff}=10^{-5}$;
\item the minimum eigenvalue of the moment matrix is no smaller than $-\epsilon_{\rm psd}=10^{-7}$.
\end{enumerate}
Points failing these tests are reported as numerical indications only and are not used as certified upper bounds.

The certificate flag refers to the numerical SDP solve, not to exactness of the tilted-CHSH frontier. Even certified NPA values are finite-level outer bounds; they do not establish tightness unless matched by explicit strategies.

The SDP was solved using SCS 3.2.7 through CVXPY 1.8.2 with tolerances \texttt{eps\_abs}=1e-9, \texttt{eps\_rel}=1e-9, \texttt{max\_iters}=100000. SCS was used for exploratory scans. Points are labelled certified only when independently verified by dual-feasibility and residual checks. Otherwise they are shown as numerical indications.

\subsection{$\alpha=0$ CHSH sanity check}\label{sec:supp-alpha0}

As a sanity check, we set $\alpha=0$, where tilted CHSH reduces to standard CHSH. The analytic monogamy frontier is
\[
I_{13}^{\max}(s)=\sqrt{8-s^2}
\]
for $0\le s\le2\sqrt2$. We ran the NPA level-2 upper envelope on a grid of 60 points and compared with the analytic curve. On certified grid points, the maximum absolute deviation is $5.5\times10^{-4}$ and the mean absolute deviation is $2.7\times10^{-5}$. The single non-certified point is the Tsirelson endpoint $s=2\sqrt2$, where SCS returns \texttt{optimal\_inaccurate}. This provides a numerical validation of the moment-matrix construction and score extraction in the CHSH limit. It does not by itself establish tightness of the tilted-CHSH envelopes for $\alpha>0$.

\subsection{Numerical table and tilted-CHSH diagnostics}\label{sec:supp-table}

Table~\ref{tab:tilted-npa} gives representative NPA level-2 values with solver diagnostics. A row is labelled ``certificate = yes'' only if the solver status is optimal, the dual objective is available, the primal-dual gap is below the stated tolerance, the maximum affine residual is below tolerance and the minimum eigenvalue of the moment matrix is no smaller than the stated PSD tolerance. Rows failing these tests are reported as numerical indications only.

\begin{table}[h]
\centering
\small
\setlength{\tabcolsep}{3pt}
\caption{Tilted-CHSH NPA level-2 upper bounds with solver diagnostics (selected points)}
\begin{tabular}{c|ccccccc|c}
\toprule
$\alpha$ & $s$ & primal & dual & gap & max res. & min eig($\Gamma$) & status & certified \\
\midrule
0.0 & 2.000000 & 2.000000 & 2.000000 & $2.0\times10^{-12}$ & $<10^{-12}$ & $-4.3\times10^{-12}$ & optimal & yes \\
0.0 & 2.407216 & 1.485058 & 1.485058 & $4.1\times10^{-10}$ & $<10^{-12}$ & $-1.6\times10^{-9}$ & optimal & yes \\
0.0 & 2.814427 & 0.281543 & 0.281543 & $8.9\times10^{-10}$ & $<10^{-12}$ & $-8.2\times10^{-11}$ & optimal & yes \\
0.0 & 2.828427 & 0.000547 & 0.000719 & $1.7\times10^{-4}$ & $<10^{-12}$ & $-4.5\times10^{-5}$ & optimal\_inaccurate & no \\
0.5 & 2.500000 & 2.500000 & 2.500000 & $2.6\times10^{-10}$ & $<10^{-12}$ & $-1.7\times10^{-9}$ & optimal & yes \\
0.5 & 2.711267 & 2.058496 & 2.058496 & $2.2\times10^{-9}$ & $<10^{-12}$ & $-9.9\times10^{-10}$ & optimal & yes \\
0.5 & 2.908355 & 1.093709 & 1.093709 & $2.7\times10^{-10}$ & $<10^{-12}$ & $-6.8\times10^{-9}$ & optimal & yes \\
0.5 & 2.915476 & 0.864823 & 0.865030 & $2.2\times10^{-4}$ & $<10^{-12}$ & $-2.1\times10^{-4}$ & optimal\_inaccurate & no \\
1.0 & 3.000000 & 3.000000 & 3.000000 & $3.7\times10^{-12}$ & $<10^{-12}$ & $-1.7\times10^{-10}$ & optimal & yes \\
1.0 & 3.082475 & 2.696349 & 2.696349 & $2.7\times10^{-9}$ & $<10^{-12}$ & $-4.7\times10^{-9}$ & optimal & yes \\
1.0 & 3.159492 & 2.053964 & 2.053964 & $4.7\times10^{-10}$ & $<10^{-12}$ & $-1.0\times10^{-8}$ & optimal & yes \\
1.0 & 3.162278 & 1.903434 & 1.903822 & $3.9\times10^{-4}$ & $<10^{-12}$ & $-1.0\times10^{-4}$ & optimal\_inaccurate & no \\
1.5 & 3.500000 & 3.500000 & 3.500000 & $1.1\times10^{-10}$ & $<10^{-12}$ & $-1.4\times10^{-10}$ & optimal & yes \\
1.5 & 3.519258 & 3.335480 & 3.335480 & $1.4\times10^{-9}$ & $<10^{-12}$ & $-1.9\times10^{-10}$ & optimal & yes \\
1.5 & 3.534899 & 3.042356 & 3.042356 & $3.4\times10^{-11}$ & $<10^{-12}$ & $-2.8\times10^{-9}$ & optimal & yes \\
1.5 & 3.535534 & 2.973043 & 2.973247 & $2.1\times10^{-4}$ & $<10^{-12}$ & $-4.7\times10^{-5}$ & optimal\_inaccurate & no \\
\bottomrule
\end{tabular}
\label{tab:tilted-npa}
\end{table}

The \texttt{optimal\_inaccurate} points cluster at the quantum boundary $s\to\sqrt{8+2\alpha^{2}}$, which is the expected location of solver degeneracy for SDP relaxations near their saturating extremum. These points are reported as numerical indications only and do not affect any certified claim. Re-solving with an interior-point solver such as MOSEK is expected to tighten these endpoint values but does not change the certified upper-bound region used in the main-text diagnostics.

The column ``max residual'' reports the maximum absolute affine-constraint violation. Values displayed as $<10^{-12}$ were below the numerical reporting threshold.

The certificate flag refers to the numerical SDP solve, not to exactness of the tilted-CHSH frontier. Even certified NPA values are finite-level outer bounds and do not establish tightness unless matched by explicit strategies.

Because no verified dual objective is available for the \texttt{optimal\_inaccurate} runs, those points are not treated as certified numerical upper bounds. They are included only to illustrate the NPA upper-envelope method. The entries illustrate that, for each $\alpha$, there exists an authorized score at which the certified NPA upper bound on the collusive score lies strictly below the authorized score, demonstrating a positive anti-collusion region. These are upper bounds, not exact frontiers.

\section{Proofs of resource-theoretic monotonicity}\label{sec:supp-resource-proofs}

This section contains the proofs of Lemma~\ref{lem:free-ops-concrete}, Theorem~\ref{thm:monotone}, Proposition~\ref{prop:axioms}, Lemma~\ref{lem:Wernerdist-d}, and Theorem~\ref{thm:Dm-d} from the main text.

\paragraph{Proof of Lemma~\ref{lem:free-ops-concrete} (concrete free operations).}
(a) Let $\rho_{AB}\in\mathcal E_m$ with symmetric extension $\rho_{AB_0B_1\cdots B_m}$. The state $\tilde\rho:=\Phi_A\otimes\Phi_B^{\otimes (m+1)}(\rho_{AB_0B_1\cdots B_m})$ is permutation-invariant over $B_0,B_1,\dots,B_m$, because $\Phi_B^{\otimes (m+1)}$ commutes with the permutation unitaries $U_\pi$. Its $AB_0$ marginal is $\Tr_{B_1\cdots B_m}\tilde\rho=\Phi_A\otimes\Phi_B(\rho_{AB})$. Hence $\tilde\rho$ is a valid symmetric extension of $\Phi_A\otimes\Phi_B(\rho_{AB})$, so the latter lies in $\mathcal E_m$.
(b) Immediate from convexity of $\mathcal E_m$.
(c) The required extension is obtained by tensoring (resp.\ tracing) the corresponding free ancilla symmetrically across the $B$-copies.
(d) The output of an entanglement-breaking channel across $A{:}B$ is separable, hence symmetrically $m$-extendible for every $m$.

\paragraph{Proof of Theorem~\ref{thm:monotone} (monotonicity of $D_m$).}
Let $\sigma^\star\in\mathcal E_m$ attain $D_m(\rho)=\tfrac12\|\rho-\sigma^\star\|_1$. Since $\Phi\in\mathcal O_m$, the state $\Phi(\sigma^\star)$ lies in $\mathcal E_m$ and is therefore a feasible point in the minimization defining $D_m(\Phi(\rho))$. Using contractivity of the trace distance under CPTP maps (data processing),
\[
D_m\!\bigl(\Phi(\rho)\bigr) \le \tfrac12\bigl\|\Phi(\rho)-\Phi(\sigma^\star)\bigr\|_1 \le \tfrac12\bigl\|\rho-\sigma^\star\bigr\|_1 = D_m(\rho). \qedhere
\]

\paragraph{Proof of Proposition~\ref{prop:axioms} (basic axioms).}
(i) $\mathcal E_m$ is closed and $\tfrac12\|\cdot\|_1$ is a metric, so the minimum vanishes iff $\rho\in\mathcal E_m$.
(ii) Let $\sigma_i^\star\in\mathcal E_m$ attain $D_m(\rho_i)$. Then $\sum_i p_i\sigma_i^\star\in\mathcal E_m$ by convexity, and the triangle inequality gives $D_m(\sum_i p_i\rho_i)\le\sum_i p_i D_m(\rho_i)$.
(iii) Werner twirling reduces the closest extendible state to a Werner state, and the eigenstructure of $P_-\!-I/4$ gives the factor $3/4$; see Supplementary Note~\ref{sec:supp-trace-distance}.

\paragraph{Proof of Lemma~\ref{lem:Wernerdist-d} (trace-norm constants).}
\emph{Werner.} $\rho_W^{(d)}(p)-\rho_W^{(d)}(q)=(p-q)\,\Delta$ with $\Delta=\binom{d}{2}^{-1}P_- - d^{-2}I$. As $I=P_-+P_+$, $\Delta$ is diagonal in the symmetric/antisymmetric split with eigenvalues $a=\binom{d}{2}^{-1}-d^{-2}>0$ (mult.\ $\binom{d}{2}$) and $b=-d^{-2}$ (mult.\ $\binom{d+1}{2}$). Then $a\binom{d}{2}=1-\binom{d}{2}/d^2=\tfrac{d+1}{2d}$ and $|b|\binom{d+1}{2}=\binom{d+1}{2}/d^2=\tfrac{d+1}{2d}$, so $\|\Delta\|_1=\tfrac{d+1}{d}$.
\emph{Isotropic.} $\Delta_{\mathrm{iso}}=\ket{\Phi_d^+}\!\bra{\Phi_d^+}-d^{-2}I$ has eigenvalues $1-d^{-2}$ (mult.\ $1$) and $-d^{-2}$ (mult.\ $d^2-1$), so $\|\Delta_{\mathrm{iso}}\|_1=2\tfrac{d^2-1}{d^2}$.

\paragraph{Proof of Theorem~\ref{thm:Dm-d} (exact resource in arbitrary dimension).}
We give the Werner case; the isotropic case is identical with the isotropic twirl and Thm.~III.8 of~\cite{JV2013}. (i) The Werner twirl $\mathcal T(\cdot)=\int dU\,(U\!\otimes\!U)(\cdot)(U\!\otimes\!U)^\dagger$ is CPTP, fixes $\rho_W^{(d)}(p)$, and maps any symmetric extension to a symmetric extension; by contractivity of the trace distance the closest extendible state may be taken Werner. (ii) The extendible Werner states form the interval $\{q:0\le q\le p^{*}_m(d)\}$: $q=0$ is maximally mixed hence extendible, the set is an interval (convex combinations of extensions), and its upper endpoint is the Johnson--Viola $1$-$n$ threshold $\Psi^-\ge-(d-1)/n$ rewritten via $\Psi^-(p)=\tfrac1d-p\tfrac{d+1}{d}$, giving $p\le p^{*}_m(d)$. (iii) For $p>p^{*}_m(d)$, Lemma~\ref{lem:Wernerdist-d} makes the distance $c_W(d)|p-q|$ strictly increasing in $|p-q|$, so the minimiser over $q\le p^{*}_m(d)$ is $q=p^{*}_m(d)$, giving $D_m=c_W(d)(p-p^{*}_m(d))$; for $p\le p^{*}_m(d)$ the state is extendible and $D_m=0$.

\section{Proof of the Werner symmetric-extendibility threshold}\label{sec:supp-werner-proof}

This section reproves Corollary~\ref{thm:werner-threshold} for completeness. The threshold itself is not new: the one-sided $1$-to-$n$ sharability of Werner states was solved for arbitrary local dimension by Johnson and Viola~\cite{JV2013}, and the swap-/spin-Hamiltonian technique was later developed by Jakab, Solymos and Zimbor\'as~\cite{JSZ2022}. We include the proof because it fixes the notation and makes the paper self-contained.

\paragraph{Step 1 --- Parametrize the Werner state by the swap expectation.}
For the Werner state
\[
\rho_W(p)=pP_{-}+(1-p)\frac{I}{4},
\]
define $f(p)=\Tr[F\rho_W(p)]$. Since $\Tr(FP_{-})=-1$ and $\Tr(FI/4)=1/2$, we have
\[
f(p)=-p+\frac{1-p}{2}=\frac{1-3p}{2}.
\]
Solving for $p$ gives $p=(1-2f)/3$. Maximizing $p$ is equivalent to minimizing $f$.

\paragraph{Step 2 --- Reduce the extension constraint to a swap-Hamiltonian bound.}
By Definition~\ref{def:free-set}, an $m$-colluder symmetric extension contains $n=m+1$ $B$-side copies. In this proof we label these copies $B_1,\dots,B_n$ for notational convenience. Suppose there exists such an extension $\rho_{AB_1\cdots B_n}$. By permutation symmetry over $B_1,\dots,B_n$,
\[
\Tr[F_{AB_1}\rho]=\cdots=\Tr[F_{AB_n}\rho]=f.
\]
Define the swap Hamiltonian $H_n=\sum_{i=1}^{n}F_{AB_i}$. Then $\Tr[H_n\rho]=nf$. A lower bound on the smallest eigenvalue of $H_n$ therefore gives a lower bound on $f$.

\paragraph{Step 3 --- Prove $H_n\ge -I$.}
For qubits, $F_{AB_i}=2\mathbf S_A\cdot\mathbf S_{B_i}+\tfrac12 I$, where $\mathbf S=(S_x,S_y,S_z)$ are spin-$1/2$ operators. Let $\mathbf J_B=\sum_{i=1}^{n}\mathbf S_{B_i}$ and $\mathbf J=\mathbf S_A+\mathbf J_B$. Then
\[
H_n=\frac{n}{2}I+2\mathbf S_A\cdot\mathbf J_B=\frac{n}{2}I+\mathbf J^{2}-\mathbf S_A^{2}-\mathbf J_B^{2}.
\]
Fix the total spin of the $B$-block to $j$, where $0\le j\le n/2$. Coupling $j$ with the spin-$1/2$ system $A$, the total spin is either $J=j+1/2$ or $J=j-1/2$. The corresponding eigenvalues of $H_n$ are
\[
\lambda_{+}(j)=\frac{n}{2}+j,\qquad
\lambda_{-}(j)=\frac{n}{2}-j-1.
\]
Because $j\le n/2$, we have $\lambda_{-}(j)\ge n/2-n/2-1=-1$. Hence $H_n\ge -I$.

\paragraph{Step 4 --- Necessity of the threshold.}
Since $H_n\ge -I$, we have $nf=\Tr[H_n\rho]\ge -1$, so $f\ge -1/n$. Using $p=(1-2f)/3$,
\[
p\le\frac{1+2/n}{3}=\frac{n+2}{3n}=\frac{m+3}{3(m+1)}.
\]
Thus no $m$-colluder symmetric extension exists above this threshold.

\paragraph{Step 5 --- Sufficiency of the threshold.}
Take the $B_1\cdots B_n$ block in its maximal-spin sector $j=n/2$ and couple it with $A$ to total spin $J=j-1/2=(n-1)/2$. On this sector $\lambda_{-}(n/2)=-1$. Let $\rho^{\star}_{AB_1\cdots B_n}$ be the normalized projector onto this sector, followed by the global $SU(2)$ twirl. This state is permutation-invariant over $B_1,\dots,B_n$ and satisfies $\Tr[H_n\rho^{\star}]=-1$. By $B$-symmetry, $\Tr[F_{AB_i}\rho^{\star}]=-1/n$ for every $i$. The global twirl makes each two-body marginal $U\otimes U$-invariant, so each marginal $\rho_{AB_i}$ is a Werner state with $f=-1/n$. Therefore
\[
p=\frac{1-2f}{3}=\frac{1+2/n}{3}=\frac{n+2}{3n}.
\]
This proves the bound is achievable. For any smaller $p$, mix this extremal extension with the fully mixed product extension. Hence symmetric extensions exist for all $0\le p\le (n+2)/3n$. Together with necessity, this proves Corollary~\ref{thm:werner-threshold}.

\section{Proof of the exact trace-distance resource}\label{sec:supp-trace-distance}

This section proves the $d=2$ case of Theorem~\ref{thm:Dm-d}.

The Werner twirl is a completely positive trace-preserving map. It leaves $\rho_W(p)$ invariant and maps any extendible state to an extendible Werner state. Since trace distance is contractive under CPTP maps, the closest extendible state to $\rho_W(p)$ can be chosen Werner. The optimization reduces to choosing $\sigma=\rho_W(q)$ with $q\le p_m^{\ast}$.

For two Werner states,
\[
\rho_W(p)-\rho_W(q)=(p-q)\Bigl(P_{-}-\frac{I}{4}\Bigr).
\]
The operator $P_{-}-I/4$ has eigenvalue $3/4$ on the singlet subspace and $-1/4$ on the three-dimensional triplet subspace. Therefore
\[
\bigl\|\rho_W(p)-\rho_W(q)\bigr\|_1=|p-q|\Bigl(\frac{3}{4}+3\cdot\frac{1}{4}\Bigr)=\frac{3}{2}|p-q|,
\]
and consequently $\tfrac12\|\rho_W(p)-\rho_W(q)\|_1=\tfrac34|p-q|$. The closest extendible Werner state has $q=\min\{p,p_m^{\ast}\}$. Thus
\[
D_m(p)=\frac{3}{4}\bigl(p-p_m^{\ast}\bigr)_+=\frac{3}{4}\Bigl(p-\frac{m+3}{3(m+1)}\Bigr)_+,
\]
which proves the $d=2$ case of Theorem~\ref{thm:Dm-d}.

\section{Tomographically complete behavior-level lower bound}\label{sec:supp-tomo-lower}

This section derives the conservative quantitative lower bound (superseded by Theorem~\ref{thm:exact-tomo}).

\paragraph{From state distance to behavior distance.} Let $T(\rho,\sigma)=\tfrac12\|\rho-\sigma\|_1$ denote the trace distance. For an $m$-colluder extension, let $\sigma_j=\rho_{AC_j}$ be the marginal between Alice and the $j$-th colluder. By symmetrizing the $B,C_1,\dots,C_m$ systems, the average state
\[
\tau=\frac{1}{m+1}\Bigl(\rho_W(p)+\sum_{j=1}^{m}\sigma_j\Bigr)
\]
is $m$-colluder extendible. Hence $D_m(p)\le T(\rho_W(p),\tau)$. By convexity of trace distance,
\[
T(\rho_W(p),\tau)\le\frac{1}{m+1}\sum_{j=1}^{m}T(\rho_W(p),\sigma_j).
\]
Therefore
\[
\frac{1}{m}\sum_{j=1}^{m}T(\rho_W(p),\sigma_j)\ge\frac{m+1}{m}D_m(p).
\]

\paragraph{Pauli-coefficient bound.} For any two-qubit Hermitian difference $\Delta=\rho-\sigma$, write the Pauli expansion
\[
\Delta=\frac{1}{4}\sum_{(\alpha,\beta)\neq(0,0)}r_{\alpha\beta}\,\sigma_\alpha\otimes\sigma_\beta,
\]
where $\sigma_0=I$ and $\sigma_{x,y,z}$ are the Pauli matrices. A conservative norm bound gives
\[
\|\Delta\|_1\le\sqrt{\sum_{(\alpha,\beta)\neq(0,0)}r_{\alpha\beta}^{2}}\le\sqrt{15}\,\max_{(\alpha,\beta)\neq(0,0)}|r_{\alpha\beta}|.
\]
Thus there exists at least one nontrivial Pauli coefficient with $|r_{\alpha\beta}|\ge\|\Delta\|_1/\sqrt{15}=2T(\rho,\sigma)/\sqrt{15}$. For the corresponding Pauli product measurement, the total-variation distance is at least $\tfrac12|r_{\alpha\beta}|\ge T(\rho,\sigma)/\sqrt{15}$.

If the capacity averages uniformly over the nine nontrivial product Pauli settings, this yields the conservative averaged bound
\[
\overline{\mathrm{TV}}_{\rm tomo}(\rho,\sigma)\ge\frac{T(\rho,\sigma)}{9\sqrt{15}}.
\]
Combining with the previous inequality and using the $d=2$ case of Theorem~\ref{thm:Dm-d} for $D_m(p)$ gives
\[
C_{m}^{\rm tomo}(p)\ge\frac{m+1}{9m\sqrt{15}}D_m(p)=\frac{m+1}{12m\sqrt{15}}\Bigl(p-\frac{m+3}{3(m+1)}\Bigr)_+.
\]
For one colluder this simplifies to
\[
C_{1}^{\rm tomo}(p)\ge\frac{1}{6\sqrt{15}}\Bigl(p-\frac{2}{3}\Bigr)_+.
\]
This bound is intentionally conservative, but it is rigorous and strictly positive throughout the source-non-shareable region.

\section{Exact tomographic capacity}\label{sec:supp-exact-tomo}

\begin{proof}
\emph{Step 1 (the colluder optimum is Werner).} The objective
$\rho_{AC}\mapsto d_{\mathrm{TV}}(P_{AB},\mathrm{beh}(\rho_{AC}))$ is convex,
and the feasible set $\{\rho_{AC}:\exists\,\rho_{ABC}\succeq0,\ \Tr_C\rho_{ABC}=\rho_W(p)\}$
is convex. Both are invariant under the octahedral rotation group
$O\subset\mathrm{U}(2)$ (the $24$ unitaries permuting $\{\pm X,\pm Y,\pm Z\}$)
acting as $U\otimes U\otimes U$: $\rho_W(p)$ is $O$-invariant, the nine-setting
uniform measure is permuted within itself, and $P_{AB}$ is fixed. Averaging
a feasible $\rho_{AC}$ over $O$ therefore does not increase the objective
(Jensen) and preserves feasibility, while sending the Bloch vectors to zero
and the correlation tensor to an isotropic multiple of the identity --- i.e.\
to a Werner state $\rho_W(q)$. Hence the optimum is attained on the Werner
line.

\emph{Step 2 (Werner colluder distance).} For $\rho_{AC}=\rho_W(q)$ only the
three aligned settings $x=y$ contribute, each with
$\tfrac12\sum_{a,c}|{-}p\,ac/4+q\,ac/4|=\tfrac12|p-q|$, while the six
$x\neq y$ settings give identical uniform statistics. Thus
$d_{\mathrm{TV}}(P_{AB},\mathrm{beh}(\rho_W(q)))=\tfrac19\cdot3\cdot\tfrac12|p-q|=\tfrac16|p-q|$.

\emph{Step 3 (closest joinable Werner).} Writing
$\Psi(p)=\Tr[V\rho_W(p)]=(1-3p)/2$, the pair $(\rho_W(p),\rho_W(q))$ is
joinable while sharing $A$ iff (Johnson--Viola, Cor.~III.3, $d=2$)
$\Psi(p)^2+\Psi(q)^2+\Psi(p)\Psi(q)\le\tfrac34$. For $p>\tfrac23$ one has
$\Psi(p)<-\tfrac12$, so the symmetric point $q=p$ is infeasible and the
closest feasible $q$ lies on the ellipse boundary, giving
$\Psi(q^\star)=\tfrac12\bigl(-\Psi(p)-\sqrt3\sqrt{1-\Psi(p)^2}\bigr)$ and
$|p-q^\star|=\tfrac12\bigl[(3p-1)-\sqrt{(3p+1)(1-p)}\,\bigr]$. Multiplying by
$\tfrac16$ from Step 2 yields the stated formula; $C^{\mathrm{tomo}}_1=0$ for
$p\le\tfrac23$ because there $q=p$ is joinable (perfect mimicry), recovering
the Theorem~\ref{thm:Dm-d} onset $p^*_1=\tfrac23$.
\end{proof}

\begin{suppremark}[Numerical certification and general $m$]
\label{rem:supp-tomo-sdp}
A direct semidefinite program minimising $d_{\mathrm{TV}}$ over all
tripartite extensions --- making no Werner assumption --- reproduces the
closed form to five decimals throughout $p\in[\tfrac23,1]$, confirming Step~1.
For $m$ colluders the capacity is the analogous SDP with a permutation-
symmetric extension over the $m$ colluder systems and the authorised
marginal fixed; its resource onset occurs exactly at the extendibility
threshold $p^{*}_m=(m+3)/[3(m+1)]$ (verified, e.g.\ $p^{*}_2=5/9$). The
one-colluder case is the one admitting the closed form above.
\end{suppremark}

\begin{suppremark}[What this fixes]
\label{rem:tomo-fix}
Theorem~\ref{thm:exact-tomo} removes the ``intentionally conservative''
constant $1/(6\sqrt{15})$ of the original Theorem~7, replacing a lower bound
by the exact capacity together with an explicit optimal colluder strategy
(the Werner state $\rho_W(q^\star)$). It also supplies the matching
achievability that the original Limitations paragraph flagged as missing in
the tomographic case.
\end{suppremark}

\section{Collective colluders and de~Finetti reduction}\label{sec:supp-definetti}

Theorem~\ref{thm:b2a} follows the standard de~Finetti route. Let $h^*$ be the optimal single-round witness and let $C>0$ be the single-shot capacity. Apply $h^*$ to $k$ tested rounds and average. Under the authorized behavior the rounds are i.i.d.\ with mean $\mu_P$, so Hoeffding gives $\Pr_P[\hat H\le\mu_P-C/2]\le\exp(-kC^2/2)$. Every round-marginal of the colluder lies in the single-round shadow, so $\mathbb E_{Q^{(n)}}[\hat H]\le\mu_P-C$ for every colluder. For permutation-symmetric $\rho^{(n)}$, the finite quantum de~Finetti theorem approximates the reduced state on $k$ tested rounds by a mixture of i.i.d.\ product states up to error $\delta=4D^2k/n$. Splitting the mixture into good and bad components, optimizing $k$ to balance exponential concentration against the de~Finetti overhead, yields $\beta_n=O(D^2\log n/(C^2n))$. Full details follow Christandl--K{\"o}nig--Mitchison--Renner~\cite{CKMR2007}, Renner~\cite{Renner2007}, Christandl--K{\"o}nig--Renner~\cite{CKR2009}, and the entropy-accumulation and randomness-extraction frameworks of Dupuis--Fawzi--Renner~\cite{Dupuis2020} and Pironio--Massar~\cite{Pironio2010}.

\begin{figure}[h]
\centering
\includegraphics[width=0.75\linewidth]{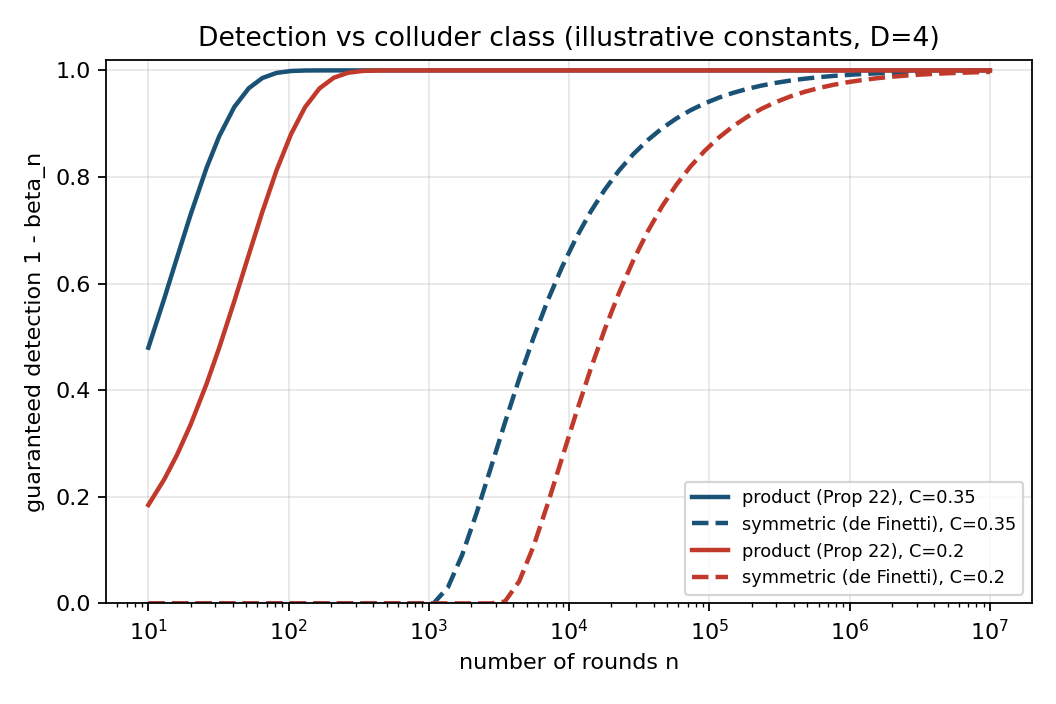}
\caption{\textbf{Detection advantage vs colluder class.} Soundness error $\beta_n$ as a function of the number of rounds $n$ for three colluder classes: product/memoryless (exponential decay, Proposition~\ref{prop:exp-detection}), collective/permutation-symmetric (polynomial decay $\tilde O(D^2\log n/(C^2n))$, Theorem~\ref{thm:b2a}), and arbitrary correlated (constant floor, Lemma~\ref{lem:multi-round-floor}). The de~Finetti overhead degrades the rate from exponential to polynomial, with the crossover around $n\sim10^4$--$10^5$ for the parameters shown.}
\label{fig:b2a-detection}
\end{figure}

\section{Collusion-resistant coordination: local model and SDP verification}\label{sec:supp-coordination}

%
%

\setcounter{supptheorem}{22}
Consider the game $G_{\mathrm{aligned}}$: a referee draws $i\in\{X,Y,Z\}$ uniformly and
sends the \emph{same} $i$ to Alice and to her partner; both measure $\sigma_i$ and score
$1$ iff their outcomes are anti-correlated. For a state $\rho$ the score is
$\mathcal S(\rho)=\tfrac13\sum_i\tfrac12\bigl(1-\langle\sigma_i\!\otimes\!\sigma_i\rangle_\rho\bigr)$.

\begin{suppproposition}[Explicit local model; vacuous device-independence]
\label{prop:supp-c1-local}
The aligned Werner behaviour $P_{AB}(a,b\mid i,j)=\tfrac14\bigl(1-p\,\delta_{ij}\,ab\bigr)$
is reproduced exactly, for every $p\in[0,1]$, by the hidden-variable model: draw
$s\in\{\pm1\}^3$ uniformly; with probability $p$ Alice outputs $s_i$ and the partner
outputs $-s_i$, otherwise Alice outputs $s_i$ and the partner outputs an independent
uniform bit. Consequently the behaviour is local for all $p$, and in the broadcast model
a colluder holding $s$ reproduces it verbatim, achieving the authorised score with zero
anti-collusion power. No device-independent certificate (which requires Bell violation)
yields any guarantee here.
\end{suppproposition}

\begin{supptheorem}[Source-certified coordination advantage]
\label{thm:supp-c1}
In $G_{\mathrm{aligned}}$ the authorised pair sharing $\rho_W(p)$ scores
$A(p)=\tfrac{1+p}{2}$, while every quantum-source colluder (any tripartite extension with
$\rho_{AB}=\rho_W(p)$) scores at most $V_Q(p)=\tfrac{1+q^\star(p)}{2}$, attained by the
Werner colluder $\rho_W(q^\star)$ with $q^\star$ the closest Johnson--Viola--joinable
weight. The source-certified advantage is
\begin{equation}
  \bigl[A(p)-V_Q(p)\bigr]_+
  =\tfrac14\Bigl[(3p-1)-\sqrt{(3p+1)(1-p)}\,\Bigr]\quad(p>\tfrac23),\qquad 0\ (p\le\tfrac23),
\end{equation}
i.e.\ $3\,C_1^{\mathrm{tomo}}(p)$. The advantage onsets exactly at the extendibility
threshold $p^*=\tfrac23$; for $p<\tfrac23$ one has $V_Q(p)>A(p)$ (the colluder
over-coordinates, the source being $2$-shareable). By Proposition~\ref{prop:supp-c1-local}
the broadcast/device-independent advantage is $0$ for all $p$.
\end{supptheorem}

\begin{proof}
$A(p)=\mathcal S(\rho_W(p))=\tfrac12(1-(-p))=\tfrac{1+p}{2}$. The colluder maximises
$\mathcal S(\rho_{AC})$ over $\rho_{AC}$ achievable with $\rho_{AB}=\rho_W(p)$ fixed; by
the octahedral symmetry of $G_{\mathrm{aligned}}$ and of the constraint the optimiser may
be taken Werner $\rho_W(q)$ (cf.\ Supplementary Note~\ref{sec:supp-exact-tomo}, Step~1), giving
$\mathcal S=\tfrac{1+q}{2}$, increasing in $q$. The largest joinable $q$ is $q^\star$ with
$p-q^\star=\tfrac12[(3p-1)-\sqrt{(3p+1)(1-p)}]$ (Johnson--Viola Cor.~III.3, as in
Thm.~\ref{thm:exact-tomo}), whence $A-V_Q=\tfrac{p-q^\star}{2}$. For $p\le\tfrac23$ the
symmetric point $q=p$ is joinable, so $V_Q\ge A$ and the advantage vanishes. A direct SDP
over all tripartite extensions confirms $V_Q(p)$ to four decimals and the sign change at
$p^*=\tfrac23$.
\end{proof}

\begin{suppcorollary}[Composition over rounds]
\label{cor:supp-c1-compose}
Over $n$ independent instances the advantage is maintained against collective
(permutation-symmetric) colluders with soundness error $\tilde O(D^2\log n/(C^2 n))$
(Theorem~\ref{thm:b2a}) and amplifies exponentially against memoryless colluders
(Proposition~\ref{prop:exp-detection}).
\end{suppcorollary}

\begin{suppremark}[Scope]
\label{rem:supp-c1-scope}
Theorem~\ref{thm:c1} is a collusion-resistant \emph{coordination} (anti-impersonation)
guarantee, not a cryptographic key or randomness primitive: the latter require min-entropy
bounds and a security rate valid against coherent, memory-bearing colluders (via entropy accumulation), which we
do not claim. The non-definitional content is Proposition~\ref{prop:c1-local}: the
advantage is attributable solely to the source assumption, and the device-independent
side is constructively shown to be empty on identical local statistics.
\end{suppremark}


\begin{thebibliography}{99}
\bibitem{Bell1964}
Bell, J. S. On the Einstein Podolsky Rosen paradox. \textit{Physics} \textbf{1}, 195--200 (1964).


\bibitem{CHSH1969}
Clauser, J. F., Horne, M. A., Shimony, A. \& Holt, R. A. Proposed experiment to test local hidden-variable theories. \textit{Phys. Rev. Lett.} \textbf{23}, 880--884 (1969).


\bibitem{Tsirelson1980}
Tsirelson, B. S. Quantum generalizations of Bell's inequality. \textit{Lett. Math. Phys.} \textbf{4}, 93--100 (1980).


\bibitem{Brunner2014}
Brunner, N., Cavalcanti, D., Pironio, S., Scarani, V. \& Wehner, S. Bell nonlocality. \textit{Rev. Mod. Phys.} \textbf{86}, 419--478 (2014).


\bibitem{Horodecki2009}
Horodecki, R., Horodecki, P., Horodecki, M. \& Horodecki, K. Quantum entanglement. \textit{Rev. Mod. Phys.} \textbf{81}, 865--942 (2009).


\bibitem{Barrett2005}
Barrett, J., Hardy, L. \& Kent, A. No signalling and quantum key distribution. \textit{Phys. Rev. Lett.} \textbf{95}, 010503 (2005).


\bibitem{ArnonFriedman2018}
Arnon-Friedman, R. et al. Practical device-independent quantum cryptography via entropy accumulation. \textit{Nat. Commun.} \textbf{9}, 459 (2018).


\bibitem{Portmann2022}
Portmann, C. \& Renner, R. Security in quantum cryptography. \textit{Rev. Mod. Phys.} \textbf{94}, 025008 (2022).


\bibitem{Pironio2009}
Pironio, S. et al. Device-independent quantum key distribution secure against collective attacks. \textit{New J. Phys.} \textbf{11}, 045021 (2009).


\bibitem{Acin2007}
Ac\'in, A. et al. Device-independent security of quantum cryptography against collective attacks. \textit{Phys. Rev. Lett.} \textbf{98}, 230501 (2007).


\bibitem{Masanes2011}
Masanes, L., Pironio, S. \& Ac\'in, A. Secure device-independent quantum key distribution with causally independent measurement devices. \textit{Nat. Commun.} \textbf{2}, 238 (2011).


\bibitem{Ekert1991}
Ekert, A. K. Quantum cryptography based on Bell's theorem. \textit{Phys. Rev. Lett.} \textbf{67}, 661--663 (1991).


\bibitem{Zapatero2023}
Zapatero, V. et al. Advances in device-independent quantum key distribution. \textit{npj Quantum Inf.} \textbf{9}, 10 (2023).


\bibitem{Nadlinger2022}
Nadlinger, D. P. et al. Experimental quantum key distribution certified by Bell's theorem. \textit{Nature} \textbf{607}, 682--686 (2022).


\bibitem{Zhang2022}
Zhang, W. et al. A device-independent quantum key distribution system for distant users. \textit{Nature} \textbf{607}, 687--691 (2022).


\bibitem{Liu2022}
Liu, W.-Z. et al. Toward a photonic demonstration of device-independent quantum key distribution. \textit{Phys. Rev. Lett.} \textbf{129}, 050502 (2022).


\bibitem{Wehner2018}
Wehner, S., Elkouss, D. \& Hanson, R. Quantum internet: A vision for the road ahead. \textit{Science} \textbf{362}, eaam9288 (2018).


\bibitem{Tavakoli2022}
Tavakoli, A., Pozas-Kerstjens, A., Luo, M.-X. \& Renou, M.-O. Bell nonlocality in networks. \textit{Rep. Prog. Phys.} \textbf{85}, 056001 (2022).


\bibitem{Terhal2004}
Terhal, B. M. Is entanglement monogamous? \textit{IBM J. Res. Dev.} \textbf{48}, 71--78 (2004).


\bibitem{Toner2006}
Toner, B. \& Verstraete, F. Monogamy of Bell correlations and Tsirelson's bound. \textit{Preprint at arXiv:quant-ph/0611001} (2006).


\bibitem{Coffman2000}
Coffman, V., Kundu, J. \& Wootters, W. K. Distributed entanglement. \textit{Phys. Rev. A} \textbf{61}, 052306 (2000).


\bibitem{Osborne2006}
Osborne, T. J. \& Verstraete, F. General monogamy inequality for bipartite qubit entanglement. \textit{Phys. Rev. Lett.} \textbf{96}, 220503 (2006).


\bibitem{Scarani2001}
Scarani, V. \& Gisin, N. Quantum communication between $N$ partners and Bell's inequalities. \textit{Phys. Rev. Lett.} \textbf{87}, 117901 (2001).


\bibitem{Aumann1974}
Aumann, R. J. Subjectivity and correlation in randomized strategies. \textit{J. Math. Econ.} \textbf{1}, 67--96 (1974).


\bibitem{Cleve2004}
Cleve, R., Hoyer, P., Toner, B. \& Watrous, J. Consequences and limits of nonlocal strategies. In \textit{Proc. 19th IEEE Conference on Computational Complexity}, 236--249 (IEEE, 2004).

\bibitem{Zhang2012}
Zhang, S. Quantum strategic game theory. In \textit{Proc. 3rd Innovations in Theoretical Computer Science Conference (ITCS '12)}, 39--59 (ACM, 2012).

\bibitem{Pappa2015}
Pappa, A. et al. Nonlocality and conflicting interest games. \textit{Phys. Rev. Lett.} \textbf{114}, 020401 (2015).

\bibitem{KDWW2019}
Kaur, E., Das, S., Wilde, M. M. \& Winter, A. Extendibility limits the performance of quantum processors. \textit{Phys. Rev. Lett.} \textbf{123}, 070502 (2019).

\bibitem{KDWW2021}
Kaur, E., Das, S., Wilde, M. M. \& Winter, A. Resource theory of unextendibility and nonasymptotic quantum capacity. \textit{Phys. Rev. A} \textbf{104}, 022401 (2021).

\bibitem{WangWilde2024}
Wang, K., Wang, X. \& Wilde, M. M. Quantifying the unextendibility of entanglement. \textit{New J. Phys.} \textbf{26}, 033013 (2024).

\bibitem{SinghWilde2025}
Singh, V. \& Wilde, M. M. Unextendible entanglement of quantum channels. \textit{IEEE Trans. Inf. Theory} \textbf{71}, 6002 (2025).

\bibitem{Werner1989}
Werner, R. F. Quantum states with Einstein-Podolsky-Rosen correlations admitting a hidden-variable model. \textit{Phys. Rev. A} \textbf{40}, 4277--4281 (1989).


\bibitem{NPA2007}
Navascu\'es, M., Pironio, S. \& Ac\'in, A. Bounding the set of quantum correlations. \textit{Phys. Rev. Lett.} \textbf{98}, 010401 (2007).


\bibitem{NPA2008}
Navascu\'es, M., Pironio, S. \& Ac\'in, A. A convergent hierarchy of semidefinite programs characterizing the set of quantum correlations. \textit{New J. Phys.} \textbf{10}, 073013 (2008).


\bibitem{DPS2002}
Doherty, A. C., Parrilo, P. A. \& Spedalieri, F. M. Distinguishing separable and entangled states. \textit{Phys. Rev. Lett.} \textbf{88}, 187904 (2002).


\bibitem{DPS2004}
Doherty, A. C., Parrilo, P. A. \& Spedalieri, F. M. Complete family of separability criteria. \textit{Phys. Rev. A} \textbf{69}, 022308 (2004).


\bibitem{Kaniewski2016}
Kaniewski, J. Analytic and nearly optimal self-testing bounds for the cloned-CHSH and Bell inequalities. \textit{Phys. Rev. Lett.} \textbf{117}, 070402 (2016).


\bibitem{Acin2012}
Ac\'in, A., Massar, S. \& Pironio, S. Randomness versus nonlocality and entanglement. \textit{Phys. Rev. Lett.} \textbf{108}, 100402 (2012).


\bibitem{Bamps2015}
Bamps, C. \& Pironio, S. Sum-of-squares decompositions for a family of Clauser-Horne-Shimony-Holt-like inequalities and their application to self-testing. \textit{Phys. Rev. A} \textbf{91}, 052111 (2015).


\bibitem{Yang2013}
Yang, T. H. \& Navascu\'es, M. Robust self-testing of unknown quantum systems into any entangled two-qubit states. \textit{Phys. Rev. A} \textbf{87}, 050102(R) (2013).


\bibitem{Uhlmann1976}
Uhlmann, A. The transition probability in the state space of a $*$-algebra. \textit{Rep. Math. Phys.} \textbf{9}, 273--279 (1976).


\bibitem{Fuchs1999}
Fuchs, C. A. \& van de Graaf, J. Cryptographic distinguishability measures for quantum-mechanical states. \textit{IEEE Trans. Inf. Theory} \textbf{45}, 1216--1227 (1999).


\bibitem{Dupuis2020}
Dupuis, F., Fawzi, O. \& Renner, R. Entropy accumulation. \textit{Commun. Math. Phys.} \textbf{379}, 867--913 (2020).


\bibitem{Pironio2010}
Pironio, S. \& Massar, S. Security of practical private randomness generation. \textit{Phys. Rev. A} \textbf{87}, 012336 (2013).

\bibitem{Wootters1982}
Wootters, W. K. \& Zurek, W. H. A single quantum cannot be cloned. \textit{Nature} \textbf{299}, 802--803 (1982).

\bibitem{JV2013}
Johnson, P. D. \& Viola, L. Compatible quantum correlations: Extension problems for Werner and isotropic states. \textit{Phys. Rev. A} \textbf{88}, 032323 (2013).

\bibitem{JSZ2022}
Jakab, D., Solymos, A. \& Zimbor\'as, Z. Extendibility of Werner states. \textit{arXiv preprint} arXiv:2208.13743 (2022).

\bibitem{CKMR2007}
Christandl, M., K\"onig, R., Mitchison, G. \& Renner, R. One-and-a-half quantum de Finetti theorems. \textit{Commun. Math. Phys.} \textbf{273}, 473--498 (2007).

\bibitem{Renner2007}
Renner, R. Symmetry of large physical systems implies independence of subsystems. \textit{Nat. Phys.} \textbf{3}, 645--649 (2007).

\bibitem{CKR2009}
Christandl, M., K\"onig, R. \& Renner, R. Postselection technique for quantum channels with applications to quantum cryptography. \textit{Phys. Rev. Lett.} \textbf{102}, 020504 (2009).



\clearpage
\end{thebibliography}
\end{document}